\documentclass[amsmath,amssymb,aps,prd,superscriptaddress,10pt]{revtex4-2}
\usepackage{mathtools,amsthm}
\usepackage[T1]{fontenc}
\usepackage{microtype}
\usepackage{xcolor}
\usepackage{graphicx}
\usepackage{booktabs}
\usepackage{placeins}
\usepackage{needspace}
\usepackage{tikz}
\usetikzlibrary{arrows.meta,calc,positioning}
\usepackage{enumitem}
\usepackage{mathrsfs}
\makeatletter
\def\label#1{\@bsphack
  \begingroup
  \UseHookWithArguments{label}{1}{#1}%
  \protected@write\@auxout{}%
    {\string\newlabel{#1}{{\@currentlabel}{\thepage}%
      {\@currentlabelname}{\@currentHref}{\@kernel@reserved@label@data}}}%
  \endgroup
  \@esphack}
\makeatother
\usepackage[colorlinks=true,allcolors=blue!55!black]{hyperref}
\hypersetup{
  pdftitle={Schwarzschild spectral ladders on the negative imaginary axis: Endpoint nonselection, branch-cut phase, and Jost classification},
  pdfauthor={Davide Batic, Denys Dutykh, and Mark Essa Sukaiti},
  pdfsubject={Axial Schwarzschild perturbations, spectral approximation, and quasinormal-mode classification},
  pdfkeywords={Schwarzschild black hole, quasinormal modes, branch cut, Jost function, spectral pollution, Chebyshev collocation}
}
\graphicspath{{figures/}}
\newtheorem{theorem}{Theorem}[section]
\newtheorem{lemma}[theorem]{Lemma}
\newtheorem{proposition}[theorem]{Proposition}
\newtheorem{corollary}[theorem]{Corollary}
\theoremstyle{remark}

\begin{document}

\title{Schwarzschild spectral ladders on the negative imaginary axis:
Endpoint nonselection, branch-cut phase, and Jost classification}
\author{Davide Batic}
\email{davide.batic@ku.ac.ae}
\affiliation{Mathematics Department, Khalifa University of Science and Technology, PO Box 127788, Abu Dhabi, United Arab Emirates}

\author{Denys Dutykh}
\email{denys.dutykh@ku.ac.ae}
\affiliation{Mathematics Department, Khalifa University of Science and Technology, PO Box 127788, Abu Dhabi, United Arab Emirates}

\author{Mark Essa Sukaiti}
    \email{100064482@ku.ac.ae}
    \affiliation{Mathematics Department, Khalifa University of Science and Technology, PO Box 127788, Abu Dhabi, United Arab Emirates}

\date{August 17, 2026}

\begin{abstract}
Compactified spectral discretisations of black-hole perturbations can produce stable negative-imaginary-axis (NIA) eigenvalue ladders, but finite-matrix convergence does not establish quasinormal-pole character. We make this distinction precise for axial Schwarzschild perturbations. After the usual ingoing--outgoing factorisation, the unwanted horizon sector at $\Omega=-i\alpha$ behaves as $t^{4\alpha}$, whereas the unwanted infinity sector is $C^\infty$ and flat. We prove that endpoint $C^k$ regularity is therefore nonselective whenever $4\alpha>k$. In particular, the $C^2$ continuum problem used by the pencils cannot possess a discrete NIA spectrum throughout the reported range $\alpha\simeq15$--32. We also prove exact Chebyshev-grid reach formulas. For $x\sim C(1-y)^{-p}$, the largest finite radius on a roots grid grows as $n^{2p}$, yielding the observed difference between the linear map $\mathrm{C1}\colon x=2/(1-y)$ and the quadratic map $\mathrm{C2}\colon x=4/(1-y)^2$. Arbitrary-precision pencils nevertheless reveal a reproducible 68-point $\mathrm{C1}$ ladder, while the $\mathrm{C2}$ pencil reorganises the NIA spectrum. At every $\mathrm{C1}$ frequency, evaluations of both physical lateral Jost determinants are stable and remain separated from zero relative to the numerical variations in the adopted normalisation. A $0.05$-spaced uniform scan resolves no on-cut zero, and resolution-refined argument-principle computations return zero numerical winding in the specified right and left physical-continuation strips. The pointwise evidence strongly rejects the 68 candidates. The mesh and strip evidence is noncertified and does not constitute an interval theorem. Same-damping controls recover the known $n_{\rm QNM}=60$, $100$, and $130$ Schwarzschild quasinormal modes (QNM)s across the ladder interval, with $|\widehat{\mathcal D}|$ between $10^{-48}$ and $10^{-56}$ at the refined frequencies. The ladder's organisation has a known asymptotic scale, i.e. its quarter spacing is the surface-gravity scale, and all 68 roots closely follow the parameter-free Casals--Ottewill branch-cut-strength phase while pairing with the damping projections of QNMs $n_{\rm QNM}=62,\ldots,129$. At the first ladder member, a finite-frequency calculation resolves a numerical branch-strength zero at $\alpha_q=15.07832396512359$, between the asymptotic phase prediction and the $\mathrm{C1}$ root. This anchor supports the cut-phase mechanism without establishing sequence-wide phase locking. Thus, highly accurate finite-pencil eigenvalues can fail the invariant Jost/Evans pole criterion. The principal open problem is whether representation-dependent nodes, equipped with Keldysh weights, converge collectively to the Schwarzschild cut response and Price tail.
\end{abstract}
\maketitle
\vspace{-9pt}

\section{Introduction}
\label{Intro}

A black-hole QNM is a resonance, not an ordinary normal
mode. For the $e^{-i\omega t}$ convention, it is selected by a solution
that is ingoing at the future event horizon and outgoing at infinity. In the frequency domain its invariant signature is a pole of the analytically continued Green function, represented away from exceptional normalisation or numerator-cancellation points by a zero of a horizon--infinity Jost determinant~\cite{Leaver1985PRSLA, Leaver1986PRD, Kokkotas1999LR, Berti2009CQG, Konoplya2011RMP}. This definition is global and analytic. It is not replaced by a small residual, smooth endpoint values, or the convergence
of an eigenvalue of a truncated matrix pencil. The distinction is particularly delicate on the NIA. The retarded Green function has a branch point at the origin and a cut that may be placed on the NIA. The infinity-outgoing Jost solution consequently has two lateral boundary values, reached from the right and left half-planes without crossing the cut~\cite{Leaver1986PRD, CasalsOttewill2012PRD, CasalsOttewill2013PRD}. Moreover, after the formal horizon-ingoing and infinity-outgoing factors have been removed, the opposite local sectors can become bounded, differentiable, or even flat at compactified endpoints. A numerical boundary problem based on
finite endpoint regularity can then cease to encode the radiation condition that defined the resonance. This mechanism is not merely hypothetical. Hyperboloidal and Chebyshev discretisations of asymptotically flat black-hole problems represent the
continuous cut by finite collections of nonconvergent matrix eigenvalues~\cite{Jaramillo2021PRX, AnsorgMacedo2016PRD, BessonJaramillo2025GRG}. Fortuna and Vega showed that an axial algebraically-special spectral eigenpair can be computed to extreme precision while its radial solution is an inseparable mixture of the two scattering sectors rather than a QNM~\cite{FortunaVega2023EPJC}. Conversely, Besson and Jaramillo demonstrated that such non-QNM eigenvalues need not be useless because, when equipped with Keldysh weights, their aggregate can approximate the cut integral and recover
Price tails~\cite{BessonJaramillo2025GRG}. The correct question is therefore not only whether an individual matrix eigenvalue is a pole, but also whether a weighted family approximates a physical continuum observable. Recent Green-function work sharpens this distinction from the continuum side. Su et al. decompose the Schwarzschild Green function into a branch-cut direct response, a QNM-pole contribution, and a late-time tail, and validate the reconstruction against time-domain Regge--Wheeler evolution~\cite{Su2026PRD}. Rosato, De Amicis, and Pani analyse the low-frequency cut together with surface-gravity-governed redshift terms and their causal dependence on the source location~\cite{RosatoDeAmicisPani2026PRD}. Arnaudo, Carballo, and Withers give a complementary positive construction in which the Schwarzschild branch-cut contribution is represented as a convergent sum of de Sitter QNMs in the $\Lambda\to 0^+$ limit~\cite{ArnaudoCarballoWithers2026PRD}. These continuum results do not address the finite-$C^k$ nonselection theorem, the C1--C2 polynomial pullback obstruction, or the 68-point lateral-Jost classification established below. They do, however, reinforce that a positive discrete representation of the cut must be formulated for a specified Green-function component or time-domain observable, with its residues or weights and mode of convergence made explicit, rather than inferred from unweighted node locations alone. A distinct recent route uses complex scaling to convert the outgoing-wave condition into a non-Hermitian eigenproblem for Schwarzschild and Reissner--Nordstr\"om perturbations~\cite{OgawaHiroseMorikawa2026arXiv}. Its potential role as an independent cross-check is discussed in Sec.~\ref{sec:conclusions_outlook}.

The present work uses axial $s=\ell=2$ Schwarzschild perturbations as an analytically controlled laboratory for that distinction. It was motivated by highly regular overdamped sequences found in compactified spectral calculations, both in Schwarzschild and in deformed black-hole geometries~\cite{Batic2018PRD, Macedo2019Comment, Batic2019Reply, Batic2024PRD, Batic2025PRSA, Batic2026EPJC, Batic2026PRD, Batic2026GaussBonnet, Batic2026KazakovSolodukhin}. Those calculations pose a general classification problem, i.e. is a stable NIA ladder a family of physical poles, a cross-cut resonance family, a finite representation of continuous spectrum, or spectral pollution tied to the trial space? Schwarzschild is the natural place to answer the question because its Jost functions, branch-cut strength, high-overtone QNMs, and algebraically-special frequency are independently
known. We address five logically distinct questions.
\begin{enumerate}[label=(\roman*),leftmargin=2.2em]
\item 
What continuum endpoint condition is actually imposed after the
ingoing--outgoing factorisation and compactification?
\item 
Which features of the finite spectrum survive changes of grid,
resolution, precision, and radial map?
\item 
Do the resulting candidates vanish under either physical lateral Jost
trace?
\item 
What do sampled scans and argument-principle calculations establish
about nearby frequencies, and what remains noncertified?
\item 
Is the observed quarter-spaced phase related to the known Schwarzschild
branch-cut strength and high-overtone resonance comb?
\end{enumerate}
Our first result is analytic and more decisive than a numerical
representation comparison. Let $t$ be a local horizon coordinate. On
$\Omega=-i\alpha$, the factorised unwanted horizon sector behaves as
$t^{4\alpha}$, up to the standard resonant logarithmic qualification, while the unwanted infinity sector becomes exponentially flat. We prove a finite-regularity no-go theorem, i.e. for any fixed integer $k$, endpoint $C^k$ regularity is nonselective once $4\alpha>k$. Thus, the $C^2$ endpoint class used in the present pencils admits both independent local sectors at both endpoints for every reported high-damping candidate. The associated continuum regularity problem cannot have a discrete spectrum there. Discreteness is introduced by the finite polynomial trial spaces and their collocation realisations, not by the stated continuum boundary condition. The next results isolate the geometry and approximation power of those trial spaces. The linear compactification C1 is $x_1=2/(1-y_1)$, whereas the quadratic compactification C2 is $x_2=4/(1-y_2)^2$. Under their nonlinear
coordinate change, the exact common core of the two degree-$d$ polynomial spaces has only $\lfloor d/2\rfloor+1$ dimensions. Notice that representing every C1 polynomial with $n$ coefficients requires at least $2n-1$ C2 coefficients. Equal-degree truncation therefore does not commute with coordinate pullback. We also derive exact Chebyshev coefficients and uniform derivative-error rates for the endpoint Frobenius powers, which explain how an unwanted sector can
have an extremely small coefficient tail without satisfying a physically selective boundary condition. The compactification geometry supplies a complementary exact result. For the $n$-point Chebyshev roots grid, we find $1-y_{\max}=2\sin^2[\pi/(4n)]$. Hence, the largest represented finite radius is exactly $\csc^2[\pi/(4n)]$ for the linear compactification and $\csc^4[\pi/(4n)]$ for the quadratic one. More generally, a map $x\sim C(1-y)^{-p}$ has effective reach $O(n^{2p})$. This theorem explains the geometric origin of the $n^2$ versus $n^4$ radial scales and provides a rigorous starting point but it is not by itself a proof for the different empirical spectral-edge laws. We then examine the finite pencils on their own terms. The first compactification (C1) yields a 68-point, approximately quarter-spaced
high-damping ladder that is stable under the roots-to-Lobatto grid change. The second compactification (C2) recovers ordinary off-axis QNMs but reorganises the NIA population and exhibits different edge scalings. Arbitrary-precision assembly, raw-versus-equilibrated pencils, exact normwise backward errors, an exact singular-value characterisation of the finite-pencil pseudospectrum, first-order condition numbers, coefficient tails, endpoint checks, mode matching, and eigenvector overlaps show that the C1 roots are accurate
eigenvalues of the stated finite problems. This conclusion is deliberately finite-dimensional.

The physical classification is supplied independently by Jost solutions of the original Regge--Wheeler equation. A Jaff\'e series represents the horizon-ingoing solution and a Leaver--Tricomi-$U$ series represents the infinity-outgoing solution. At each of the 68 C1 frequencies, both physical lateral-determinant evaluations are stable and remain separated from zero relative to the reported numerical variations in the adopted normalisation. A uniform 354-point scan with step $0.05$ resolves no candidate on-cut zero. Resolution-refined
argument-principle calculations return zero winding in specified strips on both physical continuations. We state these latter results at their actual evidentiary level: the scan is discrete, the subsequent ball arithmetic is reduced to high-precision midpoints, and zero winding counts zeros minus poles until pole-freeness of the enclosed analytic domain is established. The candidate-point exclusion is strong, and the strip-wide result is high-precision numerical evidence rather than an interval-certified theorem.

Three same-regime controls substantially strengthen the Jost calculation. Using independently tabulated $s=\ell=2$ Schwarzschild QNMs as seeds, the production recurrence recovers the $n_{\rm QNM}=60$, $100$, and $130$ poles at
$-\operatorname{Im}\Omega\simeq14.60$, $24.60$, and $32.11$, respectively. The refined values are stable across matching radii and truncation orders and give $|\widehat{\mathcal D}|$ between $10^{-48}$ and $10^{-56}$. Thus, the method resolves genuine poles at three damping values spanning the interval in which it rejects the C1 candidates. The most important interpretive result is that the ladder's organisation is not new or mysterious. Casals and Ottewill derived that the large-frequency Schwarzschild branch-cut strength has quarter-spaced zeros in the present $\alpha=M\nu$ units and that spin-2 QNM damping projections lie near those zeros~\cite{CasalsOttewill2012PRD, CasalsOttewill2013PRD}. All 68 C1
frequencies follow their parameter-free asymptotic phase and slow drift. Moreover, the phase-form residual has RMS $1.84\times10^{-4}$. The same candidates pair consecutively with the damping projections of QNMs $n_{\rm QNM}=62,\ldots,129$, whose real parts lie just beyond the outer sampled strip. Because two asymptotically equivalent truncations already differ beyond the very small phase-form residual, this comparison is compelling motivation, not an accuracy claim for the finite-frequency cut strength. The direct calculation in Sec.~\ref{sec:cut_phase} resolves one numerical finite-frequency zero and moves the phase prediction toward the first C1 point. The remaining sequence has not been established.

The paper therefore establishes a hierarchy rather than a binary verdict:
\begin{equation}
\label{eq:intro_evidence_hierarchy}
\text{accurate finite-pencil eigenpair}
\;\not\!\Longrightarrow\;
\text{continuum eigenvalue}
\;\not\!\Longrightarrow\;
\text{Jost pole}.
\end{equation}
At the same time, failure of the last implication does not preclude useful collective convergence. The strongest open direction is to compute left/right Keldysh residues and test whether the map-dependent C1 and C2 node sets define discrete measures converging to the same branch-cut Green response or Price tail. That programme would turn a negative pole classification into a positive theory of continuum approximation. Figure~\ref{fig:logic_diagram} summarises the logical structure. The upper route continues the continuum resolvent and identifies its poles and cut. The lower route factorises, compactifies, and projects before taking a matrix spectrum. The square need not commute on the NIA. The independent Jost trace is the map that classifies the finite candidates physically.
\begin{figure}[htbp]
\centering
\begin{tikzpicture}[
  >=Latex,
  node distance=24mm and 34mm,
  box/.style={draw=blue!55!black,rounded corners=2pt,very thick,
    fill=blue!3,align=center,minimum height=14mm,text width=0.30\textwidth,
    inner sep=5pt},
  finite/.style={box,draw=orange!75!black,fill=orange!5},
  route/.style={->,very thick,blue!60!black},
  test/.style={->,very thick,dashed,red!70!black}
]
\node[box] (ode) {Regge--Wheeler equation\\analytic ingoing/outgoing\\Jost sectors};
\node[box,right=of ode] (resolvent) {Analytically continued\\Green function\\isolated poles $+$ NIA branch cut};
\node[finite,below=of ode] (regularity) {Factorisation and compactification\\finite endpoint regularity};
\node[finite,right=of regularity] (pencil) {Polynomial projection\\finite quadratic-pencil eigenvalues};
\draw[route] (ode) -- node[above,align=center]{continue before\\discretising} (resolvent);
\draw[route] (ode) -- node[left,align=right]{remove formal sectors\\and compactify} (regularity);
\draw[route] (regularity) -- node[below,align=center]{collocate\\and truncate} (pencil);
\draw[test] (pencil) -- node[right,align=left]{evaluate the physical\\lateral Jost determinant} (resolvent);
\node[draw=red!70!black,circle,very thick,inner sep=1.8pt,
  fill=white,text=red!70!black] at ($(ode)!0.5!(pencil)$) {$\boldsymbol{\times}$};
\end{tikzpicture}
\caption{The classification problem as a noncommuting diagram. Continuing
the physical resolvent preserves the distinction between isolated poles and
the branch cut. Factorisation followed by finite endpoint regularity and
polynomial projection can instead produce accurate matrix eigenvalues for
which the unwanted scattering sectors remain admissible. The dashed Jost
trace is therefore an independent classification map, not a posteriori
decoration.}
\label{fig:logic_diagram}
\end{figure}
The remainder of the article follows the claim hierarchy.
Section~\ref{sec:schwarzschild_problem} fixes the Schwarzschild and Jost conventions and proves endpoint nonselection for both compactifications. Section~\ref{sec:numerical_method} constructs the finite quadratic pencils, proves the pullback and Chebyshev-grid results, and defines their numerical diagnostics. Section~\ref{sec:jost_numerics} specifies the two lateral Jost
continuations and separates the exact contour theorem from the numerical winding protocol. Section~\ref{sec:finite_pencil_results} establishes the finite spectra and their representation dependence.
Section~\ref{sec:cut_phase} compares the C1 ladder with the branch-cut phase and high-overtone QNM comb, while Section~\ref{sec:jost_results} assesses its physical-pole character. Section~\ref{sec:keldysh_cut_limit} formulates the
representation-independent weighted-continuum criterion. Finally,
Section~\ref{sec:conclusions_outlook} separates proved results,
high-precision evidence, and open problems.

\section{Schwarzschild radial problem, physical traces, and endpoint nonselection}
\label{sec:schwarzschild_problem}

We use geometrised units $G=c=1$ and signature $(-,+,+,+)$. In standard
Schwarzschild coordinates, the exterior metric is~\cite{Weinberg1972}
\begin{equation}\label{LE}
ds^2=-F(r)dt^2+\frac{dr^2}{F(r)}+r^2\left(d\vartheta^2+\sin^2\vartheta\,d\varphi^2\right),
\qquad
F(r)=1-\frac{2M}{r},
\end{equation}
with mass $M>0$ and event horizon $r_h=2M$. The radial equations for
massless scalar, electromagnetic, and axial gravitational perturbations have the unified Regge--Wheeler form~\cite{Hildreth1963, Matzner1968JMP, Regge1957PR}
\begin{align}
\label{ODE01}
F(r)\frac{d}{dr}\left(F(r)\frac{d\psi_{\omega\ell\epsilon}}{dr}\right)
+\left[\omega^2-U_\epsilon(r)\right]\psi_{\omega\ell\epsilon}(r)
&=0,\\
U_\epsilon(r)
&=F(r)\left[
\frac{\epsilon}{r}\frac{dF}{dr}+\frac{\ell(\ell+1)}{r^2}
\right],
\qquad
\epsilon=1-s^2.
\end{align}
Here, $\epsilon=1$ describes a massless scalar field, with $\ell\geqslant 0$, $\epsilon=0$ describes electromagnetic perturbations, with $\ell\geqslant 1$, and $\epsilon=-3$ describes axial, or odd-parity, gravitational perturbations, with $\ell\geqslant2$. We use the time convention $e^{-i\omega t}$ and
spherical-harmonic angular dependence. With $x=r/(2M)$, and $\Omega=M\omega$, Eq.~\eqref{ODE01} becomes
\begin{equation}\label{ourODE}
F(x)\frac{d}{dx}\left(F(x)\frac{d\psi_{\Omega\ell\epsilon}}{dx}\right)
+\left[4\Omega^2-V_\epsilon(x)\right]
\psi_{\Omega\ell\epsilon}(x)=0,
\qquad
F(x)=1-\frac{1}{x},
\end{equation}
with effective potential
\begin{equation}\label{Veff}
V_\epsilon(x)=F(x)\left[\frac{\epsilon}{x}\frac{dF}{dx}+\frac{\ell(\ell+1)}{x^2}\right].
\end{equation}
The aim of the present work is not to assume a priori that every eigenvalue returned by a compactified spectral problem is a QNM, but to
determine the analytic nature of the structures that appear on the negative imaginary axis. We write $\Omega=\Omega_R+i\Omega_I$, with $\Omega_I<0$, and parametrise the negative imaginary axis by $\Omega=-i\alpha$, with $\alpha>0$. A QNM is defined as an isolated frequency for which the radial solution is horizon ingoing and infinity outgoing, where these asymptotic sectors are understood as analytic continuations of the corresponding Jost solutions. This qualification is particularly important on the negative imaginary axis, where boundedness or smoothness of the compactified radial remainder need not eliminate the
opposite asymptotic solution. We therefore use two inequivalent compactifications of $x\in[1,+\infty)$ onto $[-1,1]$. The first is discretised using both Chebyshev roots and an endpoint-inclusive Chebyshev--Lobatto grid, while the second is discretised using Chebyshev roots. These three finite-dimensional realisations separate the effects of the collocation grid, endpoint-row treatment, radial compactification, resolution, and arithmetic precision. Resolution drift, scaled backward errors, eigenvalue conditioning, and coefficient decay are used to assess the corresponding finite-dimensional eigenpairs. Their interpretation as physical QNMs is then tested independently by constructing the horizon and infinity Jost solutions and locating isolated zeros of their analytically continued Wronskian, with the two lateral continuations around the negative-imaginary-axis branch cut treated separately. The numerical study reported below is restricted to the axial gravitational sector $s=\ell=2$.

\subsection{Asymptotic sectors and Jost solutions}

It is useful to introduce the dimensionless Schwarzschild tortoise
coordinate
\begin{equation}\label{tortoise}
\xi=\frac{r_*}{2M},\qquad
\frac{d\xi}{dx}=\frac{1}{F(x)},\qquad
\xi=x+\ln(x-1),
\end{equation}
where an irrelevant additive constant has been omitted. Since $d/d\xi=F(x)d/dx$, the radial equation \eqref{ourODE} assumes the Schr\"odinger-like form
\begin{equation}\label{SchrodingerForm}
\frac{d^2\psi_{\Omega\ell\epsilon}}{d\xi^2}+\left[4\Omega^2-V_\epsilon(x)\right]
\psi_{\Omega\ell\epsilon}=0.
\end{equation}
The effective potential tends to zero both at the event horizon,
$\xi\to-\infty$, and at spatial infinity, $\xi\to+\infty$. Accordingly, the two local wave sectors at the event horizon are characterised by
\begin{equation}\label{HorizonJostAsymptotics}
f_H^{\mathrm{in}}(x,\Omega)\sim e^{-2i\Omega\xi},\qquad
f_H^{\mathrm{out}}(x,\Omega)\sim e^{+2i\Omega\xi},\qquad
\xi\to-\infty,
\end{equation}
whereas the corresponding sectors at spatial infinity satisfy
\begin{equation}\label{InfinityJostAsymptotics}
f_\infty^{\mathrm{out}}(x,\Omega)\sim e^{+2i\Omega\xi},\qquad
f_\infty^{\mathrm{in}}(x,\Omega)\sim e^{-2i\Omega\xi},\qquad
\xi\to+\infty.
\end{equation}
With the assumed time dependence $e^{-i\omega t}$, $f_H^{\mathrm{in}}$
represents a wave entering the future event horizon, while
$f_\infty^{\mathrm{out}}$ represents a wave propagating towards future null
infinity. Indeed, with advanced and retarded null coordinates
$v=t+r_*$ and $u=t-r_*$, respectively,
$e^{-i\omega t}e^{-i\omega r_*}=e^{-i\omega v}$ is horizon ingoing, while
$e^{-i\omega t}e^{+i\omega r_*}=e^{-i\omega u}$ is infinity outgoing. In terms of the coordinate $x$, the horizon-normalised solutions have
the Frobenius behaviours
\begin{align}
f_H^{\mathrm{in}}(x,\Omega)&=(x-1)^{-2i\Omega}
\sum_{k=0}^{\infty}h_k^{\mathrm{in}}(\Omega)(x-1)^k,\label{HorizonInSeries}\\
f_H^{\mathrm{out}}(x,\Omega)&=(x-1)^{+2i\Omega}\sum_{k=0}^{\infty}h_k^{\mathrm{out}}(\Omega)(x-1)^k,\label{HorizonOutSeries}
\end{align}
where the leading coefficients may be normalised according to $h_0^{\mathrm{in}}=h_0^{\mathrm{out}}=1$. Indeed, substituting a Frobenius expansion into \eqref{ourODE} yields the
indicial equation $\nu(\nu-1)+P_0\nu+Q_0=0$, with $P_0=1$, and $Q_0=4\Omega^2$,
whose roots are $\nu_\pm=\pm2i\Omega$. At spatial infinity, the two formal asymptotic sectors take the form
\begin{align}
f_\infty^{\mathrm{out}}(x,\Omega)&=e^{+2i\Omega x}x^{+2i\Omega}
\sum_{k=0}^{\infty}\frac{c_k^{\mathrm{out}}(\Omega)}{x^k},
\label{InfinityOutSeries}\\
f_\infty^{\mathrm{in}}(x,\Omega)&=
e^{-2i\Omega x}x^{-2i\Omega}\sum_{k=0}^{\infty}\frac{c_k^{\mathrm{in}}(\Omega)}{x^k},\label{InfinityInSeries}
\end{align}
with $c_0^{\mathrm{out}}=c_0^{\mathrm{in}}=1$. These expressions follow either directly from
\eqref{InfinityJostAsymptotics} and \eqref{tortoise}, or from the
formal asymptotic construction~\cite{Olver1994MAA}
\begin{equation}
\psi^{(j)}_{\Omega\ell\epsilon}(x)=x^{\mu_j}e^{\lambda_jx}
\sum_{k=0}^{\infty}\frac{a_{k,j}}{x^k},
\end{equation}
for which $\lambda_\pm=\pm2i\Omega$, and $\mu_\pm=\pm2i\Omega$. For $\operatorname{Im}\Omega>0$, the solution $f_H^{\mathrm{in}}$ decays as $\xi\to-\infty$ and $f_\infty^{\mathrm{out}}$ decays as $\xi\to+\infty$. They can therefore be defined unambiguously in the upper half of the frequency plane and subsequently continued analytically towards the lower half-plane. The terms \emph{ingoing} and \emph{outgoing} used below refer to these analytically continued Jost sectors, rather than merely to local boundedness or to an oscillatory flux interpretation on the negative imaginary axis. For a generic complex frequency, the horizon-ingoing solution may be expanded in the infinity basis as
\begin{equation}\label{ConnectionAtInfinity}
f_H^{\mathrm{in}}=A_{\mathrm{out}}(\Omega)f_\infty^{\mathrm{out}}+A_{\mathrm{in}}(\Omega)f_\infty^{\mathrm{in}},
\end{equation}
whereas the infinity-outgoing solution may be expanded in the horizon
basis according to
\begin{equation}\label{ConnectionAtHorizon}
f_\infty^{\mathrm{out}}=B_{\mathrm{in}}(\Omega)f_H^{\mathrm{in}}+B_{\mathrm{out}}(\Omega)f_H^{\mathrm{out}}.
\end{equation}
A quasinormal frequency is therefore characterised by
\begin{equation}\label{ConnectionQNMCondition}
A_{\mathrm{in}}(\Omega)=0,\qquad\text{equivalently}\qquad
B_{\mathrm{out}}(\Omega)=0,
\end{equation}
so that one and the same nontrivial solution is ingoing at the event
horizon and outgoing at spatial infinity. For any two solutions $u$ and $v$ of \eqref{ourODE}, define the weighted Wronskian
\begin{equation}\label{WeightedWronskian}
\mathcal{W}[u,v](\Omega)=F(x)\left[
u(x,\Omega)\frac{\partial v}{\partial x}(x,\Omega)-v(x,\Omega)\frac{\partial u}{\partial x}(x,\Omega)\right].
\end{equation}
A direct differentiation using \eqref{ourODE} gives
\begin{equation}
\frac{d}{dx}\mathcal{W}[u,v](\Omega)=0,
\end{equation}
and hence, \eqref{WeightedWronskian} is independent of the matching
point $x$. With the normalisations adopted in \eqref{HorizonJostAsymptotics} and
\eqref{InfinityJostAsymptotics}, one has
\begin{equation}
\mathcal{W}\left[f_\infty^{\mathrm{in}},f_\infty^{\mathrm{out}}\right]=
\mathcal{W}\left[f_H^{\mathrm{in}},f_H^{\mathrm{out}}\right]=4i\Omega.
\end{equation}
It follows from \eqref{ConnectionAtInfinity} and \eqref{ConnectionAtHorizon} that the Jost determinant 
\begin{equation}\label{JostDeterminant}
\mathcal{D}_{\ell\epsilon}(\Omega)=\mathcal{W}\left[f_H^{\mathrm{in}},f_\infty^{\mathrm{out}}\right](\Omega)
\end{equation}
satisfies
\begin{equation}\label{JostConnectionRelation}
\mathcal{D}_{\ell\epsilon}(\Omega)=4i\Omega A_{\mathrm{in}}(\Omega)=4i\Omega B_{\mathrm{out}}(\Omega).
\end{equation}
Away from exceptional normalisation points, an isolated zero of
$\mathcal{D}_{\ell\epsilon}$ is therefore the frequency-domain
condition for a QNM pole. The distinction between an isolated zero and a generic point on the negative imaginary axis is essential. The infinity-normalised
Schwarzschild Jost solution, and consequently the frequency-domain
Green function, possess a branch point at $\Omega=0$ and a corresponding branch cut that may be chosen along the negative imaginary axis~\cite{Leaver1986PRD, CasalsOttewill2012PRD, CasalsOttewill2013PRD}. For $\Omega=-i\alpha$ with $\alpha>0$, we therefore introduce the two lateral continuations
\begin{equation}\label{LateralWronskians}
\mathcal{D}_{\ell\epsilon}^{\pm}(-i\alpha)=\lim_{\delta\downarrow0}\mathcal{D}_{\ell\epsilon}\left(\pm\delta-i\alpha\right).
\end{equation}
The superscripts $+$ and $-$ refer respectively to the physical lower-half-plane continuations reached from the right and left without crossing the NIA cut. In general, $\mathcal{D}_{\ell\epsilon}^{+}(-i\alpha)\neq\mathcal{D}_{\ell\epsilon}^{-}(-i\alpha)$. These boundary values should not be identified with unphysical sheets reached by analytic continuation through the cut; such sheets carry additional Schwarzschild structure near the algebraically special frequency~\cite{MaassenVanDenBrink2000PRD, Leung2003CQG}, and that cross-cut continuation is not computed here. Consequently, a finite-dimensional eigenvalue lying on the negative imaginary axis cannot be classified as a physical QNM merely from its numerical position. Mathematically, physical-pole character requires an isolated zero of a specified physical continuation of $\mathcal{D}_{\ell\epsilon}$, subject to the standard noncancellation and normalisation qualifications. Numerically, stability under changes of the matching point, integration contour, precision, and asymptotic truncation is evidence that a computed zero represents that analytic object. Near exceptional frequencies, such as the algebraically special gravitational frequency, possible cancellations between connection coefficients and Jost normalisations must also be examined directly.

\subsection{Green kernel, Jost trace, and determinant line}
\label{subsec:jost_geometry}

The pole criterion can be seen without matrix language. On a fixed
continuation domain and away from numerator cancellations, the radial Green
kernel has the form
\begin{equation}
\label{eq:radial-green-kernel}
G_\Omega(x,x')=
\frac{f_H^{\mathrm{in}}(x_<,\Omega)
      f_\infty^{\mathrm{out}}(x_>,\Omega)}
     {\mathcal D_{\ell\epsilon}(\Omega)},
\qquad
x_<:=\min(x,x'),\quad x_>:=\max(x,x').
\end{equation}
The derivative jump fixes the denominator because the weighted Wronskian is
constant. A zero of $\mathcal D_{\ell\epsilon}$ is therefore the candidate singularity of the continued resolvent, and the usual analytic-Fredholm and noncancellation hypotheses promote it to a Green-function pole. There is also a coordinate-free formulation. Let $U$ be an open subset of one continuation sheet, excluding the branch point and exceptional normalisation frequencies. For each $\Omega\in U$, let $\mathscr S_\Omega$ denote the two-dimensional complex vector space of solutions of the radial Regge--Wheeler equation on $1<x<\infty$. Assume that $\mathscr S:=\bigcup_{\Omega\in U}\{\Omega\}\times\mathscr S_\Omega$ forms a rank-two holomorphic vector bundle over $U$, with fibre $\mathscr S_\Omega$ at $\Omega$. Furthermore, we write
\begin{equation}
\det\mathscr S:=\bigwedge\nolimits^2\mathscr S
\end{equation}
for its determinant line bundle, whose fibre over $\Omega$ is the
one-dimensional vector space
\begin{equation}
(\det\mathscr S)_\Omega=
\bigwedge\nolimits^2\mathscr S_\Omega.
\end{equation}
The horizon-ingoing and infinity-outgoing solutions define holomorphic line subbundles $\mathscr L_H^{\rm in}$ and $\mathscr L_\infty^{\rm out}$ of $\mathscr S$, respectively. The weighted Wronskian provides a holomorphic trivialisation of $\det\mathscr S$ by sending
\begin{equation}
u\wedge v\in\bigwedge\nolimits^2\mathscr S_\Omega
\quad\longmapsto\quad
W[u,v](\Omega)\in\mathbb C.
\end{equation}

\begin{proposition}[Intrinsic Jost divisor]
\label{prop:intrinsic-jost-divisor}
The fibrewise wedge map
\begin{equation}
\label{eq:intrinsic-jost-wedge-map}
\mathscr L_H^{\mathrm{in}}\otimes
\mathscr L_\infty^{\mathrm{out}}
\longrightarrow\det\mathscr S,
\qquad
f_H\otimes f_\infty\longmapsto f_H\wedge f_\infty,
\end{equation}
vanishes at $\Omega$ if and only if the two physical Jost lines coincide. Under the weighted-Wronskian trivialisation introduced above, its scalar representative is $\mathcal D(\Omega)=W[f_H,f_\infty](\Omega)$. If the frames are changed to $\widetilde f_H=a f_H$ and
$\widetilde f_\infty=b f_\infty$, with $a$ and $b$ nowhere-zero and
holomorphic on $U$, then
\begin{equation}
\label{eq:holomorphic-normalization-covariance}
\widetilde{\mathcal D}=ab\mathcal D,
\qquad
\operatorname{div}_U\widetilde{\mathcal D}
=\operatorname{div}_U\mathcal D.
\end{equation}
Hence, zero locations and multiplicities are intrinsic, whereas the magnitude of $\mathcal D$ is normalisation dependent. If $a$ or $b$ is meromorphic, its divisor must be added and separated from the physical divisor.
\end{proposition}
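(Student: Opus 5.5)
The plan is to reduce every claim to elementary linear algebra in the two-dimensional fibre $\mathscr S_\Omega$, and then to use holomorphy of the frames to pass from pointwise statements to divisors.

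\emph{Step 1: vanishing criterion.} Fix $\Omega\in U$ and nonzero frame vectors $f_H(\Omega)\in\mathscr L^{\rm in}_{H,\Omega}$ and $f_\infty(\Omega)\in\mathscr L^{\rm out}_{\infty,\Omega}$. In a two-dimensional space, $u\wedge v=0$ holds if and only if $u$ and $v$ are linearly dependent. Both vectors are nonzero, so dependence means the two lines coincide. Since each line is spanned by its frame, the wedge map on the rank-one tensor product vanishes exactly when this happens. This is the standard Grassmann fact, and we state it as such.

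\emph{Step 2: scalar representative.} The weighted Wronskian is constant in $x$ (established above). It is bilinear and antisymmetric, and it vanishes precisely on dependent pairs, because a zero Wronskian of two solutions of a second-order ODE with $F\neq0$ on $1<x<\infty$ forces proportionality by uniqueness of the initial value problem. So $u\wedge v\mapsto W[u,v]$ is a well-defined fibrewise isomorphism $\bigwedge^2\mathscr S_\Omega\to\mathbb C$. It is holomorphic in $\Omega$ because the solutions and their $x$-derivatives depend holomorphically on $\Omega$ at a fixed matching point. Composing this isomorphism with the wedge map gives the scalar $W[f_H,f_\infty]=\mathcal D$, by the definition of $\mathcal D$ in \eqref{JostDeterminant}.

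\emph{Step 3: covariance and divisors.} Bilinearity gives
\[
W[af_H,bf_\infty]=ab\,W[f_H,f_\infty],
\]
so $\widetilde{\mathcal D}=ab\mathcal D$. If $a$ and $b$ are holomorphic and nowhere zero on $U$, then $ab$ has order zero at every point. Hence $\operatorname{ord}_{\Omega_0}\widetilde{\mathcal D}=\operatorname{ord}_{\Omega_0}\mathcal D$ for every $\Omega_0\in U$, which is the divisor equality. Zero locations and multiplicities are therefore frame-independent. The magnitude $|\mathcal D|$, by contrast, is rescaled by the arbitrary factor $|ab|$, which is the normalisation-dependence claim.

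For meromorphic $a$ and $b$ one has instead
\[
\operatorname{div}\widetilde{\mathcal D}=\operatorname{div}\mathcal D+\operatorname{div}a+\operatorname{div}b.
\]
The extra terms must then be subtracted to isolate the physical divisor, which is the intrinsic one computed with nonvanishing holomorphic frames.

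\emph{Main obstacle.} The only delicate point is the hypothesis that the line subbundles admit nowhere-vanishing holomorphic frames on $U$. Without such frames, "the divisor of $\mathcal D$" is not canonically defined. I would handle this in two ways. First, holomorphic line bundles on open subsets of $\mathbb C$ are holomorphically trivial, since $H^1(U,\mathcal O^*)=0$ on such planar domains. Second, and more concretely, the normalised Jost solutions with $h_0=c_0=1$ are nonvanishing holomorphic frames on $U$, once the branch point and the exceptional normalisation frequencies have been excluded. With such a frame fixed, the intrinsic divisor is well defined, and Step 3 shows it is independent of the choice.
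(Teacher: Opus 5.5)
Your proposal is correct and follows essentially the same route as the paper's proof. Both use the two-dimensional criterion that a wedge of nonzero vectors vanishes exactly on dependent pairs, take the weighted Wronskian as a nonzero alternating form that trivialises the determinant line so that the wedge evaluates to $\mathcal D$, and use bilinearity to get $\widetilde{\mathcal D}=ab\mathcal D$, with local orders adding in the meromorphic case. Your extra justifications (nondegeneracy via uniqueness for the initial value problem, global frames via $H^1(U,\mathcal O^*)=0$) are sound, but the second is more than needed. Orders of vanishing are local, and the local nonvanishing holomorphic frames supplied by the assumed line-subbundle structure already define the divisor of the wedge section independently of the frame choice.
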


\begin{proof}
Two nonzero vectors in a two-dimensional vector space have zero wedge exactly when they are linearly dependent. The weighted Wronskian is a nonzero alternating bilinear form on each solution space and therefore trivialises the dual determinant line. Under this trivialisation, evaluating the wedge $f_H\wedge f_\infty$ gives its scalar representative $\mathcal W[f_H,f_\infty]=\mathcal D$. Bilinearity gives
$\mathcal W[a f_H,b f_\infty]=ab\mathcal W[f_H,f_\infty]$ and  multiplication by a nowhere-zero holomorphic function leaves the zero divisor unchanged. If $a$ or $b$ is meromorphic instead, the local order of the product is the sum of the local orders, so its divisor is added to that of $\mathcal D$.
\end{proof}

Figure~\ref{fig:determinant_line} makes the normalisation covariance
operational by separating the intrinsic wedge map from its frame-dependent
scalar representative.

\begin{figure}[htbp]
\centering
\begin{tikzpicture}[
  node distance=18mm and 36mm,
  every node/.style={align=center,font=\normalsize,inner sep=3pt},
  arrowlabel/.style={font=\small,align=center,inner sep=4pt,
    text=blue!60!black,fill=white},
  arr/.style={->,>=Latex,thick,blue!60!black}
]
\node (lines) {$\mathscr L_H^{\mathrm{in}}\otimes
  \mathscr L_\infty^{\mathrm{out}}$};
\node[right=of lines] (det) {$\det\mathscr S$};
\node[below=of lines] (frames) {$f_H\otimes f_\infty$};
\node[below=of det] (scalar)
  {$\mathcal D=\mathcal W[f_H,f_\infty]$};
\draw[arr] (lines) -- node[arrowlabel,above]{wedge} (det);
\draw[arr] (lines) -- node[arrowlabel,left]{choose frames} (frames);
\draw[arr] (det) -- node[arrowlabel,right]{Wronskian\\trivialisation} (scalar);
\draw[arr] (frames) -- node[arrowlabel,below]{scalar representative} (scalar);
\end{tikzpicture}
\caption{The Jost determinant as an Evans-function-type section of a
determinant line. The upper arrow is intrinsic. Choosing local Jost frames and a Wronskian trivialisation produces the scalar determinant below. Nonvanishing holomorphic frame changes rescale the scalar but leave its zero divisor unchanged.}
\label{fig:determinant_line}
\end{figure}

This viewpoint explains why a large nonzero determinant is meaningful only relative to its numerical error and normalisation, while the location and multiplicity of a stable zero, together with the winding of a nonvanishing contour under nowhere-zero holomorphic frame changes, are invariant. It also separates the physical Jost trace from a compactified regularity trace. The former records unwanted scattering coefficients, whereas the latter asks only whether endpoint extensions exist.

\subsection{Boundary factorisation and endpoint nonselection}

For the three finite-pencil realisations developed below, we extract a single common factor containing the formal horizon-ingoing and infinity-outgoing asymptotic behaviours. We write
\begin{equation}\label{Ansatz}
\psi_{\Omega\ell\epsilon}(x)=\mathcal{A}(x,\Omega)\Phi_{\Omega\ell\epsilon}(x),
\qquad
\mathcal{A}(x,\Omega)=x^{4i\Omega}(x-1)^{-2i\Omega}e^{2i\Omega(x-1)}.
\end{equation}
As $x\to1^+$, $\mathcal{A}(x,\Omega)\sim(x-1)^{-2i\Omega}$, whereas, as $x\to+\infty$, $\mathcal{A}(x,\Omega)\sim e^{-2i\Omega}x^{2i\Omega}e^{2i\Omega x}$. The frequency-dependent constant $e^{-2i\Omega}$ has no effect on the
boundary sector. Thus, when the original radial solution belongs to
the horizon-ingoing and infinity-outgoing sectors, $\Phi_{\Omega\ell\epsilon}$ admits regular asymptotic expansions at both endpoints. Regularity of $\Phi_{\Omega\ell\epsilon}$ is, however, not equivalent to the Jost conditions. To see this, consider first the opposite horizon sector. Dividing $f_H^{\mathrm{out}}$ by the factor $\mathcal{A}(x,\Omega)$ gives
\begin{equation}\label{WrongHorizonRemainder}
\Phi_H^{\mathrm{out}}(x,\Omega)
\sim
(x-1)^{4i\Omega},
\qquad
x\to1^+.
\end{equation}
Similarly, the infinity-incoming solution gives
\begin{equation}\label{WrongInfinityRemainder}
\Phi_\infty^{\mathrm{in}}(x,\Omega)\sim x^{-4i\Omega}e^{-4i\Omega x},\qquad
x\to+\infty,
\end{equation}
up to a nonzero frequency-dependent constant and an inverse-power
asymptotic series. On the negative imaginary axis, $\Omega=-i\alpha$ with $\alpha>0$, these expressions become
\begin{equation}\label{WrongRemaindersNIA}
\Phi_H^{\mathrm{out}}(x,-i\alpha)\sim(x-1)^{4\alpha},\qquad
\Phi_\infty^{\mathrm{in}}(x,-i\alpha)\sim x^{-4\alpha}e^{-4\alpha x}.
\end{equation}
Both unwanted sectors therefore tend to zero at the corresponding endpoint. For generic $4\alpha\notin\mathbb{N}$, the horizon remainder is nonanalytic but possesses $\lfloor4\alpha\rfloor$ continuous derivatives. At resonant values
$4\alpha\in\mathbb{N}$, the second local solution may instead involve a Frobenius logarithm or, in exceptional cases, a second analytic solution. At infinity, setting $z=1/x$ gives
\begin{equation}
\Phi_\infty^{\mathrm{in}}(z,-i\alpha)\sim z^{4\alpha}e^{-4\alpha/z},\qquad
z\to 0^+,
\end{equation}
which is $C^\infty$ and flat at $z=0$ when extended by zero, but is
not analytic there. It follows that endpoint boundedness, smoothness, or apparent
Chebyshev convergence does not by itself force the unwanted connection coefficients in \eqref{ConnectionQNMCondition} to vanish. The compactified spectral problems below are therefore used to generate and diagnose candidate frequencies, while the Jost determinant \eqref{JostDeterminant} supplies the independent pole criterion. Substitution of \eqref{Ansatz} into \eqref{ourODE} gives
\begin{equation}\label{ReducedFactorisedEquation}
\mathcal{P}_2(x,\Omega)\Phi_{\Omega\ell\epsilon}''+\mathcal{P}_1(x,\Omega)
\Phi_{\Omega\ell\epsilon}'+\mathcal{P}_0(x,\Omega)\Phi_{\Omega\ell\epsilon}=0
\end{equation}
with
\begin{equation}
\mathcal{P}_2(x,\Omega)=x^2(x-1),\quad
\mathcal{P}_1(x,\Omega)=4i\Omega x^3-8i\Omega x+x,\quad
\mathcal{P}_0(x,\Omega)=[16\Omega^2-\ell(\ell+1)]x+8i\Omega+16\Omega^2-\epsilon.
\end{equation}
The indicial equation of \eqref{ReducedFactorisedEquation} at $x=1$ is
\begin{equation}\label{FactorisedIndicialEquation}
\sigma(\sigma-4i\Omega)=0.
\end{equation}
The exponent $\sigma=0$ corresponds to the factorised horizon-ingoing sector, whereas $\sigma=4i\Omega$ corresponds to the factorised horizon-outgoing sector. For $\Omega=-i\alpha$, both exponents have nonnegative real part because $\sigma_1=0$, and $\sigma_2=4\alpha>0$. Equation \eqref{FactorisedIndicialEquation} therefore makes explicit
why boundedness of the factorised field cannot distinguish the two
horizon sectors on the negative imaginary axis. Both compactifications introduced below will be applied to the same reduced equation \eqref{ReducedFactorisedEquation}. In this way, any difference between the two resulting spectra can be attributed to the radial compactification and discretisation rather than to a change in the asymptotic factorisation. We now turn this local observation into a statement about the continuum endpoint domain. For a boundary coordinate $t\in(0,\varepsilon)$, let $C_+^k$ denote the functions admitting a one-sided $C^k$ extension to $t=0$.

\begin{lemma}[Power--log regularity and exponential flatness]
\label{lem:power-log-flatness}
Let $k\in\mathbb N_0$.
\begin{enumerate}[label=(\roman*),leftmargin=2.2em]
\item 
If $\beta>k$ and $a,b\in C^\infty([0,\varepsilon])$, then
$t^\beta\{a(t)+b(t)\log t\}$, extended by zero at $t=0$, belongs to
$C_+^k$.
\item 
If $c,q>0$, $\gamma\in\mathbb R$, and $a\in C^\infty([0,\varepsilon])$, then $t^\gamma e^{-c/t^q}a(t)$, extended by zero at $t=0$, belongs to $C_+^\infty$ and every derivative at $t=0$ vanishes.
\end{enumerate}
\end{lemma}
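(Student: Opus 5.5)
For part (i), I will show by induction on $j\le k$ that the $j$-th derivative of $g(t)=t^\beta\{a(t)+b(t)\log t\}$ on $(0,\varepsilon]$ has the form
\begin{equation*}
g^{(j)}(t)=t^{\beta-j}\{a_j(t)+b_j(t)\log t\},\qquad a_j,b_j\in C^\infty([0,\varepsilon]).
\end{equation*}
Differentiating $t^{\beta-j}\{a_j+b_j\log t\}$ produces the terms
\begin{equation*}
(\beta-j)t^{\beta-j-1}(a_j+b_j\log t),\qquad t^{\beta-j}(a_j'+b_j'\log t),\qquad t^{\beta-j-1}b_j .
\end{equation*}
These regroup as $t^{\beta-j-1}\{a_{j+1}+b_{j+1}\log t\}$ with
\begin{equation*}
a_{j+1}=(\beta-j)a_j+ta_j'+b_j,\qquad b_{j+1}=(\beta-j)b_j+tb_j',
\end{equation*}
both smooth. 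Because $\beta-j>0$ for $j\le k$, we have $t^{\beta-j}|\log t|\to0$, so every $g^{(j)}(t)\to0$ as $t\to0^+$ when $j\le k$.

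It then remains to show that the one-sided derivatives at $0$ exist and vanish, with the extension by zero. I will do this inductively. If $g^{(j)}$ extends continuously by $0$ and $j<k$, then $g^{(j)}(t)/t=t^{\beta-j-1}\{\cdots\}\to0$ because $\beta-j-1>0$. Hence $g^{(j+1)}(0)=0$, and this value matches the limit of $g^{(j+1)}(t)$ found above. An alternative is the mean value theorem: since $g^{(j)}$ is continuous on $[0,t]$ and differentiable inside, with $g^{(j+1)}(s)\to0$, the one-sided derivative at $0$ exists and equals $0$. This gives $g\in C_+^k$.

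For part (ii), I will use the same structure, with flatness replacing the power count. By induction, each derivative of $h(t)=t^\gamma e^{-c/t^q}a(t)$ on $(0,\varepsilon]$ is a finite sum of terms
\begin{equation*}
t^{\mu}e^{-c/t^q}\tilde a(t),\qquad \tilde a\in C^\infty([0,\varepsilon]),\ \mu\in\mathbb R,
\end{equation*}
since $\frac{d}{dt}e^{-c/t^q}=cq\,t^{-q-1}e^{-c/t^q}$ and the power $t^\mu$ only shifts exponents. The key estimate is that $t^{\mu}e^{-c/t^q}\to0$ for every real $\mu$. Substituting $s=t^{-q}\to\infty$ gives $s^{-\mu/q}e^{-cs}\to0$, because exponential decay beats any power. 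Thus every derivative tends to $0$. The same difference-quotient argument, using $h^{(j)}(t)/t\to0$ (itself of the form covered by the estimate), shows that all one-sided derivatives at $0$ exist and vanish. Hence $h\in C_+^\infty$ and is flat.

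The main obstacle is only bookkeeping. The inductive representation has to stay closed under differentiation, which requires tracking the extra $t^{-1}b_j$ term that the logarithm generates in (i). The endpoint derivatives have to be obtained from limits rather than assumed. Neither step involves a substantive difficulty.
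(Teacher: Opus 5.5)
Your proof is correct and follows the same route as the paper. You write the derivatives inductively as $t^{\beta-j}\{a_j+b_j\log t\}$, and as finite sums of $t^{\mu}e^{-c/t^q}\tilde a(t)$ for part (ii), show these vanish as $t\downarrow0$, and use a difference-quotient (or mean-value) step to get vanishing one-sided derivatives of the zero extension at $t=0$. That last step fills in a point the paper only asserts, and allowing real $\mu$ is more careful than the paper's integer-power $t^{-N}$ bookkeeping.
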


\begin{proof}
After $j$ differentiations, each term in the first function is a finite
linear combination of terms bounded by $t^{\beta-j}(1+|\log t|)$ as $t\downarrow0$. This tends to zero for $0\leqslant j\leqslant k$ because $\beta>k$. Thus, the one-sided limits of the function and its first $k$ derivatives agree with those of the zero extension, which is therefore $C_+^k$. After any fixed number of differentiations, the second function is a finite sum of terms $t^{-N}e^{-c/t^q}a_N(t)$, with finite $N$ and smooth $a_N$. Since $e^{-c/t^q}=o(t^m)$ as $t\downarrow0$ for every $m>0$, every such term tends to zero. The zero extension is therefore smooth and flat.
\end{proof}

For each frequency $\Omega$, let us recall that $\mathscr S_\Omega$ denotes the two-dimensional complex solution space of the original radial Regge--Wheeler equation on $1<x<\infty$. Furthermore, let
$\mathscr F_\Omega$ denote the corresponding solution space of the
factorised equation. Multiplication by the factor $A(\cdot,\Omega)$
defines a linear isomorphism $T_\Omega:\mathscr F_\Omega\longrightarrow\mathscr S_\Omega$ with $T_\Omega\Phi=A(\cdot,\Omega)\Phi$.

\begin{theorem}[Finite endpoint regularity does not select the Jost sectors]
\label{thm:finite-regularity-nonselection}
Fix $\Omega=-i\alpha$ with $\alpha>0$, and let $\mathscr F_\Omega$ be the two-dimensional solution space of the factorised Regge--Wheeler equation on the open Schwarzschild exterior. Let $t$ and $u$ be defining functions for the horizon and infinity. Assume the following standard local-basis properties.
\begin{enumerate}[label=(\roman*),leftmargin=2.2em]
\item 
Near the horizon, $x-1=t h(t)$ with $h\in C^\infty([0,\varepsilon])$ and $h(0)>0$. If $4\alpha\notin\mathbb N$, the factorised equation has a basis
\begin{equation}
\Phi_{H,0}(t)=a_0(t),\qquad
\Phi_{H,1}(t)=t^{4\alpha}a_1(t),
\end{equation}
where $a_0,a_1$ are smooth and nonzero at $t=0$. If
$m=4\alpha\in\mathbb N$, it has a Frobenius basis of the form
\begin{equation}
\Phi_{H,0}(t)=a_0(t)+c\,t^m\log t\,a_1(t),\qquad
\Phi_{H,1}(t)=t^m a_1(t),
\end{equation}
where the resonant coefficient $c$ is allowed to vanish.
\item 
Near infinity, $x=C u^{-q}+r(u)$ for some $C,q>0$ and smooth
$r$. The factorised outgoing Jost sector belongs to $C_+^k$, while the
opposite sector has a representative
\begin{equation}
\Phi_{\infty,1}(u)=u^{4\alpha q}e^{-4\alpha C/u^q}b(u),
\end{equation}
with smooth $b$. It is equivalent to assume sectorial Jost asymptotics with symbol estimates under differentiation.
\end{enumerate}
If $4\alpha>k$, both local solution sectors belong to $C_+^k$ at each
endpoint. Consequently,
\begin{equation}
\label{eq:regularity-domain-is-maximal}
\left\{\Phi\in\mathscr F_\Omega:
\Phi\text{ is one-sided $C^k$ at both compactified endpoints}\right\}
=\mathscr F_\Omega.
\end{equation}
In particular, endpoint $C^k$ regularity cannot force either the
horizon-outgoing coefficient or the infinity-incoming coefficient to
vanish. More precisely, no homogeneous local endpoint condition depending only on jets of order at most $k$ and satisfied by the desired sector can exclude the opposite sector at that endpoint.
\end{theorem}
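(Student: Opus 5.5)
The plan is to reduce everything to Lemma~\ref{lem:power-log-flatness}, applied separately at each endpoint, and then to use linearity of the solution space.

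\emph{Horizon endpoint.} Work in the coordinate $t$. The basis element $\Phi_{H,0}$ is either the smooth function $a_0$, or $a_0+c\,t^m\log t\,a_1$ in the resonant case. The correction term is covered by Lemma~\ref{lem:power-log-flatness}(i) with $\beta=m=4\alpha>k$, taking the smooth coefficients to be $(0,c\,a_1)$. The other element, $\Phi_{H,1}=t^{4\alpha}a_1$ or $t^m a_1$, is covered by part~(i) with $b=0$. Hence both basis elements lie in $C_+^k$ at $t=0$. The hypothesis $h(0)>0$ only ensures that $t$ is a genuine defining function. Regularity is stated in the chosen coordinate, so no further change of variables is needed.

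\emph{Infinity endpoint.} Here the outgoing sector is in $C_+^k$ by hypothesis~(ii). The opposite sector $u^{4\alpha q}e^{-4\alpha C/u^q}b(u)$ is handled by Lemma~\ref{lem:power-log-flatness}(ii) with $\gamma=4\alpha q$, $c=4\alpha C>0$ and $q>0$. It is therefore $C_+^\infty$ and flat, and in particular $C_+^k$. The one subtle point is the phrase ``has a representative''. The true incoming solution equals this representative only up to adding a multiple of the outgoing solution, and possibly up to a remainder with symbol-type estimates. I would dispose of this by noting that $C_+^k$ is a linear space containing the outgoing sector, so adding outgoing multiples is harmless. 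For the sectorial-asymptotics formulation, I would check that each derivative of the remainder is $O(u^{-N}e^{-4\alpha C/u^q})$, which is exactly the estimate used in the proof of part~(ii). This justification of the equivalent assumption is the main obstacle. I would state it as a direct consequence of the symbol estimates rather than derive those estimates.

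\emph{Global conclusion.} Every $\Phi\in\mathscr F_\Omega$ is a linear combination of the two horizon basis elements, and also of the two infinity sectors. $C_+^k$ is closed under linear combinations, so $\Phi$ is one-sided $C^k$ at both endpoints, which gives \eqref{eq:regularity-domain-is-maximal}.

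\emph{Final claim.} Suppose a homogeneous condition $L(j^k\Phi)=0$ on $k$-jets is satisfied by the desired sector. Since the opposite sector is $C^k$, its $k$-jet is defined. At infinity this jet is zero by flatness, so $L$ annihilates it automatically. At the horizon the jet of $t^{4\alpha}a_1$ also vanishes through order $k$, because every derivative of order $j\le k$ is $O(t^{4\alpha-j})\to0$. So the opposite sector satisfies every such condition, and these conditions cannot force the horizon-outgoing or infinity-incoming coefficients to vanish.
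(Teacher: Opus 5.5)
Your proposal is correct and follows the paper's proof essentially step for step. The shared steps are part~(i) of Lemma~\ref{lem:power-log-flatness} for both horizon basis elements (including the resonant $t^m\log t$ term), part~(ii) for the flat incoming sector at infinity, closure of $C_+^k$ under linear combinations to get Eq.~\eqref{eq:regularity-domain-is-maximal}, and the vanishing $k$-jets of the unwanted sectors for the final claim. The differences are minor. The paper absorbs $h(t)^{4\alpha}$ into $a_1$, whereas you read the basis as already written in $t$. You also note that the infinity ``representative'' is fixed only up to adding outgoing solutions, which is harmless because $C_+^k$ is linear; the paper leaves this point implicit.
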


\begin{proof}
At the horizon, the smooth positive factor $h(t)^{4\alpha}$ is absorbed into $a_1(t)$. When $4\alpha\notin\mathbb N$, the first part of Lemma~\ref{lem:power-log-flatness} gives
$\Phi_{H,1}\in C_+^k$ whenever $4\alpha>k$, while $\Phi_{H,0}$ is smooth. At a resonance $m=4\alpha\in\mathbb N$, the same lemma gives
$t^m\log t\in C_+^k$ for $m>k$, and both Frobenius basis elements again belong to $C_+^k$. At infinity, the desired sector belongs to $C_+^k$ by hypothesis and the opposite sector is smooth and flat by the second part of Lemma~\ref{lem:power-log-flatness}. Every linear combination of the two local basis elements therefore has the required regularity. Every global interior solution has such a representation at both endpoints, proving Eq.~\eqref{eq:regularity-domain-is-maximal}. Membership in the regularity class consequently places no restriction on either unwanted Jost coefficient. A homogeneous local finite-jet condition that admits the desired sector also admits the zero jet. The unwanted infinity sector has zero $k$-jet by flatness, and the unwanted horizon sector has zero $k$-jet because each of its derivatives through order $k$ tends to zero when $4\alpha>k$. Hence,  neither sector can be excluded by such a condition.
\end{proof}

This theorem is a no-go result for every fixed finite regularity order, not an identification of the finite roots. It does not exclude a genuine Jost zero, prove convergence to the cut, or replace the sectorial asymptotic hypothesis at the irregular infinity endpoint. Requiring more finite derivatives merely moves the horizon threshold to $\alpha>k/4$. Finally, the flat infinity sector remains invisible to every finite jet.

\subsection{First compactification}

Let us now introduce the transformation $x=2/(1-y)$ mapping the point at infinity and the event horizon to $y = 1$ and $y = -1$, respectively. Furthermore, a dot denotes differentiation with respect to the new variable $y$. Then, equation \eqref{ReducedFactorisedEquation} becomes
\begin{equation}\label{ODEynone}
\mathcal{S}_2(y)\ddot{\Phi}_{\Omega\ell\epsilon}(y) + \mathcal{S}_1(y,\Omega)\dot{\Phi}_{\Omega\ell\epsilon}(y) + \mathcal{S}_0(y,\Omega)\Phi_{\Omega\ell\epsilon}(y) = 0,
\end{equation}
where
\begin{align}
\mathcal{S}_2(y)
&=1-y^2,\\
\mathcal{S}_1(y,\Omega)
&=-2(1+y)
 +\frac{(1-8i\Omega)(y^2-2y)+1+8i\Omega}{1-y},\\
\mathcal{S}_0(y,\Omega)
&=\frac{2[16\Omega^2-\ell(\ell+1)]}{1-y}
 +8i\Omega+16\Omega^2-\epsilon.
\end{align}
A straightforward inspection of these coefficients indicates that they do not have any zero in common at the endpoints $y=\pm 1$ and $y=1$ is a simple pole common to $\mathcal{S}_1(y,\Omega)$ and $\mathcal{S}_0(y,\Omega)$. For generic $\Omega$, the coefficients $\mathcal{S}_1$ and $\mathcal{S}_0$ have simple poles at $y=1$. To clear these denominators and obtain polynomial coefficient functions suitable for matrix assembly, we multiply Eq.~\eqref{ODEynone} by $(1-y)$. This multiplication leaves the differential equation unchanged on the open interval $-1<y<1$. Moreover, the endpoint equations are defined through the one-sided limits derived below. As a result, we end up with the following differential equation
\begin{equation}\label{ODEhynone}
\mathcal{M}_2(y)\ddot{\Phi}_{\Omega\ell\epsilon}(y) + \mathcal{M}_1(y,\Omega)\dot{\Phi}_{\Omega\ell\epsilon}(y) + \mathcal{M}_0(y,\Omega)\Phi_{\Omega\ell\epsilon}(y) = 0,
\end{equation}
where
\begin{equation}\label{S210honone}
\mathcal{M}_2(y)=(1-y)^2(1+y), \qquad
\mathcal{M}_1(y,\Omega)=i\Omega N_1(y)+N_0(y), \qquad
\mathcal{M}_0(y,\Omega)=\Omega^2 C_2(y)+i\Omega C_1(y)+C_0(y)
\end{equation}
with
\begin{eqnarray}
N_1(y)&=&8(1+2y-y^2),\quad
N_0(y)=3y^2-2y-1,\quad
C_2(y)=16(3-y).\label{N01}\\
C_1(y)&=&8(1-y),\quad
C_0(y)=-2\ell(\ell+1)-\epsilon(1-y).\label{C012}
\end{eqnarray}
The endpoint limits are
\begin{align}
\lim_{y\to 1^-}\mathcal{M}_2(y)
&=0=
\lim_{y\to -1^+}\mathcal{M}_2(y),\\
\lim_{y\to 1^-}\mathcal{M}_1(y,\Omega)
&=16i\Omega,
&\lim_{y\to -1^+}\mathcal{M}_1(y,\Omega)
&=-16i\Omega+4,\\
\lim_{y\to 1^-}\mathcal{M}_0(y,\Omega)
&=32\Omega^2-2\ell(\ell+1),
&\lim_{y\to -1^+}\mathcal{M}_0(y,\Omega)
&=64\Omega^2+16i\Omega-2\ell(\ell+1)-2\epsilon.
\end{align}
The vanishing of the principal coefficient $\mathcal{M}_2(y,\Omega)$ at both endpoints implies that the limiting equations are of first order. More precisely, suppose that $\Phi_{\Omega\ell\epsilon}(y)$ admits a one-sided $C^2([-1,1])$ expansion at the corresponding endpoint. Then,
the limiting equations at the compactified endpoints are
\begin{equation}\label{FirstCompactificationInfinityRelation}
8i\Omega\dot{\Phi}_{\Omega\ell\epsilon}(1)+\left[16\Omega^2-\ell(\ell+1)\right]\Phi_{\Omega\ell\epsilon}(1)=0
\end{equation}
and
\begin{equation}\label{FirstCompactificationHorizonRelation}
2(1-4i\Omega)\dot{\Phi}_{\Omega\ell\epsilon}(-1)+\left[32\Omega^2+8i\Omega-\ell(\ell+1)-\epsilon\right]\Phi_{\Omega\ell\epsilon}(-1)=0.
\end{equation}
Equations \eqref{FirstCompactificationInfinityRelation} and
\eqref{FirstCompactificationHorizonRelation} are not additional
boundary conditions imposed independently of the differential equation. They are the one-sided endpoint limits of Eq.~\eqref{ODEhynone} for the regular branch. In a Chebyshev--Lobatto implementation, they replace the two differential equation rows at $y=\pm 1$. When a Chebyshev roots grid is used, the endpoints are absent from the collocation set, and the same relations can be employed as a posteriori checks of the reconstructed eigenfunctions. These endpoint relations are necessary regularity conditions, but they do not by themselves identify the required Jost sectors. To see this, let $\Omega=-i\alpha$, with $\alpha>0$. Under the present compactification, the factorised representative of the unwanted horizon-outgoing solution behaves as
\begin{equation}
\Phi[f_H^{\mathrm{out}}](y,-i\alpha)
\sim
\left(\frac{1+y}{1-y}\right)^{4\alpha},
\qquad
y\to-1^+,
\end{equation}
whereas the factorised infinity-incoming solution behaves as
\begin{equation}
\Phi[f_\infty^{\mathrm{in}}](y,-i\alpha)\sim
\left(\frac{1-y}{2}\right)^{4\alpha}\exp\left(-\frac{8\alpha}{1-y}\right),\qquad
y\to 1^-.
\end{equation}
For sufficiently large $\alpha$, the unwanted horizon solution and its
first derivatives vanish at $y=-1$, while the unwanted infinity solution
is $C^\infty$ and flat at $y=1$. Consequently, both unwanted sectors may
satisfy the limiting relations trivially. The endpoint equations therefore
do not replace the analytically continued Jost--Wronskian test. In the final step leading to the application of the spectral method, we recast the differential equation \eqref{ODEhynone} into the following form
\begin{equation}\label{TSCH}
\left(L_0+i\Omega L_1+\Omega^2L_2\right)
[\Phi_{\Omega\ell\epsilon}]=0,
\end{equation}
where the differential operators are
\begin{align}
L_0[\Phi]
&=L_{00}(y)\Phi+L_{01}(y)\dot\Phi+L_{02}(y)\ddot\Phi,
\label{L0none}\\
L_1[\Phi]
&=L_{10}(y)\Phi+L_{11}(y)\dot\Phi+L_{12}(y)\ddot\Phi,
\label{L1none}\\
L_2[\Phi]
&=L_{20}(y)\Phi+L_{21}(y)\dot\Phi+L_{22}(y)\ddot\Phi.
\label{L2none}
\end{align}
Table~\ref{tableZweinone} identifies the coefficient functions and their
one-sided endpoint limits.
\begin{table}[t]
\caption{Definitions of the coefficients $L_{ij}$ and their one-sided limits
at the endpoints of $-1\leqslant y\leqslant 1$.}
\label{tableZweinone}
\centering
\small
\begingroup
\setlength{\tabcolsep}{3pt}
\renewcommand{\arraystretch}{1.08}
\begin{tabular}{@{}cccc@{}}
\toprule
$(i,j)$
& \shortstack{Horizon limit\\$\displaystyle\lim_{y\to-1^+}L_{ij}(y)$}
& $L_{ij}(y)$
& \shortstack{Infinity limit\\$\displaystyle\lim_{y\to1^-}L_{ij}(y)$} \\
\midrule
$(0,0)$ & $-2\ell(\ell+1)-2\epsilon$ & $C_0(y)$ & $-2\ell(\ell+1)$ \\
$(0,1)$ & $4$ & $N_0(y)$ & $0$ \\
$(0,2)$ & $0$ & $\mathcal{M}_2(y)$ & $0$ \\
\addlinespace[2pt]
$(1,0)$ & $16$ & $C_1(y)$ & $0$ \\
$(1,1)$ & $-16$ & $N_1(y)$ & $16$ \\
$(1,2)$ & $0$ & $0$ & $0$ \\
\addlinespace[2pt]
$(2,0)$ & $64$ & $C_2(y)$ & $32$ \\
$(2,1)$ & $0$ & $0$ & $0$ \\
$(2,2)$ & $0$ & $0$ & $0$ \\
\bottomrule
\end{tabular}
\endgroup
\end{table}

\subsection{Second compactification: quadratic map}

In this subsection the symbol $y$ denotes the coordinate defined by
Eq.~\eqref{secondcompactification}. It should not be confused with the
coordinate used in the first compactification. As a second representation of the Schwarzschild exterior, we introduce the quadratic compactification
\begin{equation}\label{secondcompactification}
x=\frac{4}{(1-y)^2},\qquad
y=1-\frac{2}{\sqrt{x}},\qquad
-1\leqslant y<1.
\end{equation}
This transformation maps the event horizon $x=1$ to $y=-1$ and spatial
infinity $x\to+\infty$ to $y\to 1^-$. In contrast with the first
compactification, for which $x\sim2/(1-y)$ at infinity, the present map
satisfies
\begin{equation}
x\sim\frac{4}{(1-y)^2},\qquad
y\to1^-.
\end{equation}
It therefore produces a different algebraic distribution of the physical radial collocation points. Near the event horizon, however,
\begin{equation}
x-1=(1+y)+\mathcal{O}\bigl((1+y)^2\bigr),\qquad
y\to-1^+,
\end{equation}
so that the local Frobenius exponents are preserved. Under \eqref{secondcompactification}, equation \eqref{ReducedFactorisedEquation} becomes
\begin{equation}\label{ODEynoneII}
\widehat{\mathcal{S}}_2(y)\ddot{\Phi}_{\Omega\ell\epsilon}(y) + \widehat{\mathcal{S}}_1(y,\Omega)\dot{\Phi}_{\Omega\ell\epsilon}(y) + \widehat{\mathcal{S}}_0(y,\Omega)\Phi_{\Omega\ell\epsilon}(y) = 0,
\end{equation}
where
\begin{align}
\widehat{\mathcal{S}}_2(y)
&=\frac{1}{4}(1+y)(3-y),\\
\widehat{\mathcal{S}}_1(y,\Omega)
&=\frac{3(1+y)(y-3)}{4(1-y)}
 +\frac{1}{2}\left[
 \frac{64i\Omega}{(1-y)^3}+(1-8i\Omega)(1-y)
 \right],\\
\widehat{\mathcal{S}}_0(y,\Omega)
&=\frac{4[16\Omega^2-\ell(\ell+1)]}{(1-y)^2}
 +16\Omega^2+8i\Omega-\epsilon.
\end{align}
A direct inspection of these coefficients indicates that they do not have any zero in common at the endpoints $y=\pm 1$ and $y=1$ is a pole of order three for $\widehat{\mathcal{S}}_1(y,\Omega)$ while it is a pole of order two for $\widehat{\mathcal{S}}_0(y,\Omega)$. For generic $\Omega\neq 0$, the coefficient
$\widehat{\mathcal{S}}_1(y,\Omega)$ has a pole of order three at $y=1$, while
$\widehat{\mathcal{S}}_0(y,\Omega)$ has a pole of order two there. No
frequency-independent factor vanishing at an endpoint is common to all
three coefficients. To clear these denominators and obtain polynomial
coefficient functions suitable for matrix assembly, we multiply
Eq.~\eqref{ODEynoneII} by $(1-y)^3$. This operation leaves the
differential equation unchanged on the open interval $-1<y<1$. Moreover, its endpoint equations are defined separately through the one-sided limits
derived below. As a result, we end up with the following differential equation
\begin{equation}\label{ODEhynoneII}
\widehat{\mathcal{M}}_2(y)\ddot{\Phi}_{\Omega\ell\epsilon}(y) + \widehat{\mathcal{M}}_1(y,\Omega)\dot{\Phi}_{\Omega\ell\epsilon}(y) + \widehat{\mathcal{M}}_0(y,\Omega)\Phi_{\Omega\ell\epsilon}(y) = 0,
\end{equation}
where
\begin{equation}\label{S210hononeII}
\widehat{\mathcal{M}}_2(y)=\frac{1}{4}(1-y)^3(1+y )(3-y), \qquad
\widehat{\mathcal{M}}_1(y,\Omega)=i\Omega\widehat{N}_1(y)+\widehat{N}_0(y), \qquad
\widehat{\mathcal{M}}_0(y,\Omega)=\Omega^2\widehat{C}_2(y)+i\Omega\widehat{C}_1(y)+\widehat{C}_0(y)
\end{equation}
with
\begin{align}
\widehat N_1(y)
&=32-4(1-y)^4,\\
\widehat N_0(y)
&=\frac{1}{2}(1-y)^2
 \left[\frac{3}{2}(1+y)(y-3)+(1-y)^2\right],
\label{N01II}\\
\widehat C_2(y)
&=16(1-y)(y^2-2y+5),\\
\widehat C_1(y)
&=8(1-y)^3,\\
\widehat C_0(y)
&=-(1-y)\left[4\ell(\ell+1)+\epsilon(1-y)^2\right].
\label{C012II}
\end{align}
The corresponding endpoint limits are
\begin{align}
\lim_{y\to 1^-}\widehat{\mathcal M}_2(y)
&=0=
\lim_{y\to -1^+}\widehat{\mathcal M}_2(y),\\
\lim_{y\to 1^-}\widehat{\mathcal M}_1(y,\Omega)
&=32i\Omega,
&\lim_{y\to -1^+}\widehat{\mathcal M}_1(y,\Omega)
&=-32i\Omega+8,\\
\lim_{y\to 1^-}\widehat{\mathcal M}_0(y,\Omega)
&=0,
&\lim_{y\to -1^+}\widehat{\mathcal M}_0(y,\Omega)
&=256\Omega^2+64i\Omega-8\ell(\ell+1)-8\epsilon.
\end{align}
Suppose that $\Phi_{\Omega\ell\epsilon}(y)$ admits a one-sided $C^2([-1,1])$ expansion at the corresponding endpoint. Since $\widehat{\mathcal M}_2(y)\to 0$ as $y\to\pm 1$, one then has $\widehat{\mathcal M}_2(y)\ddot{\Phi}_{\Omega\ell\epsilon}(y)\longrightarrow 0$. The limiting equation at spatial infinity is
\begin{equation}\label{SecondCompactificationInfinityFull}
32i\Omega\dot{\Phi}_{\Omega\ell\epsilon}(1)=0.
\end{equation}
For $\Omega\neq 0$, this reduces to
\begin{equation}\label{SecondCompactificationInfinityRelation}
\dot{\Phi}_{\Omega\ell\epsilon}(1)=0.
\end{equation}
At the event horizon, the limiting equation is
\begin{equation}\label{SecondCompactificationHorizonRelation}
(1-4i\Omega)\dot{\Phi}_{\Omega\ell\epsilon}(-1)+\left[32\Omega^2+8i\Omega
-\ell(\ell+1)-\epsilon\right]\Phi_{\Omega\ell\epsilon}(-1)=0.
\end{equation}
These limiting relations can be used directly as endpoint rows in an extrema/Lobatto implementation. If a roots grid is used, the endpoints are not collocation points, but the relations remain useful for independently checking the reconstructed eigenfunctions. As in the first compactification, these endpoint relations are necessary regularity conditions but are not sufficient to select the required Jost sectors. For $\Omega=-i\alpha$, with $\alpha>0$, the factorised representative of the unwanted horizon-outgoing solution behaves as
\begin{equation}
\Phi[f_H^{\mathrm{out}}](y,-i\alpha)\sim(1+y)^{4\alpha},\qquad
y\to-1^+,
\end{equation}
up to a nonzero multiplicative constant. At infinity, the factorised
incoming solution behaves as
\begin{equation}
\Phi[f_\infty^{\mathrm{in}}](y,-i\alpha)\sim4^{-4\alpha}(1-y)^{8\alpha}
\exp\left[-\frac{16\alpha}{(1-y)^2}\right],\qquad
y\to 1^-.
\end{equation}
The latter function is $C^\infty$ and flat at $y=1$, while the former
becomes increasingly differentiable as $\alpha$ grows. Consequently,
the unwanted sectors can satisfy the endpoint regularity relations
without their connection coefficients vanishing. The relations above
therefore do not replace the lateral Jost--Wronskian test. In the final step leading to the application of the spectral method, we recast the differential equation \eqref{ODEhynoneII} into the following form
\begin{equation}\label{TSCHII}
\left(\widehat L_0+i\Omega\widehat L_1+\Omega^2\widehat L_2\right)
[\Phi_{\Omega\ell\epsilon}]=0,
\end{equation}
where
\begin{align}
\widehat L_0[\Phi]
&=\widehat L_{00}(y)\Phi+\widehat L_{01}(y)\dot\Phi
 +\widehat L_{02}(y)\ddot\Phi,
\label{L0noneII}\\
\widehat L_1[\Phi]
&=\widehat L_{10}(y)\Phi+\widehat L_{11}(y)\dot\Phi
 +\widehat L_{12}(y)\ddot\Phi,
\label{L1noneII}\\
\widehat L_2[\Phi]
&=\widehat L_{20}(y)\Phi+\widehat L_{21}(y)\dot\Phi
 +\widehat L_{22}(y)\ddot\Phi.
\label{L2noneII}
\end{align}
Table~\ref{tableZweinoneII} records the coefficient functions and their
one-sided endpoint limits.
\begin{table}[t]
\caption{Definitions of the coefficients $\widehat L_{ij}$ and their
one-sided limits at the endpoints of $-1\leqslant y\leqslant 1$.}
\label{tableZweinoneII}
\centering
\small
\begingroup
\setlength{\tabcolsep}{3pt}
\renewcommand{\arraystretch}{1.08}
\begin{tabular}{@{}cccc@{}}
\toprule
$(i,j)$
& \shortstack{Horizon limit\\$\displaystyle\lim_{y\to-1^+}\widehat L_{ij}(y)$}
& $\widehat L_{ij}(y)$
& \shortstack{Infinity limit\\$\displaystyle\lim_{y\to1^-}\widehat L_{ij}(y)$} \\
\midrule
$(0,0)$ & $-8\ell(\ell+1)-8\epsilon$ & $\widehat C_0(y)$ & $0$ \\
$(0,1)$ & $8$ & $\widehat N_0(y)$ & $0$ \\
$(0,2)$ & $0$ & $\widehat{\mathcal M}_2(y)$ & $0$ \\
\addlinespace[2pt]
$(1,0)$ & $64$ & $\widehat C_1(y)$ & $0$ \\
$(1,1)$ & $-32$ & $\widehat N_1(y)$ & $32$ \\
$(1,2)$ & $0$ & $0$ & $0$ \\
\addlinespace[2pt]
$(2,0)$ & $256$ & $\widehat C_2(y)$ & $0$ \\
$(2,1)$ & $0$ & $0$ & $0$ \\
$(2,2)$ & $0$ & $0$ & $0$ \\
\bottomrule
\end{tabular}
\endgroup
\end{table}
\FloatBarrier

\begin{corollary}[Nonselection for C1 and C2]
\label{cor:c1-c2-c2-nonselection}
Assume the local Jost-basis hypotheses of Theorem~\ref{thm:finite-regularity-nonselection}. For either $x_1=2/(1-y_1)$ or $x_2=4/(1-y_2)^2$, and for every $\Omega=-i\alpha$ with $\alpha>1/2$, the one-sided $C^2$ endpoint class contains all of $\mathscr F_\Omega$. The limiting endpoint equations obtained from the compactified differential
equations are therefore necessary identities for this regularity class, not independent radiation conditions. Imposing only those equations and one-sided $C^2$ regularity does not define a discrete set of frequencies on that part of the NIA.
\end{corollary}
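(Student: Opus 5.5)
The plan is to apply Theorem~\ref{thm:finite-regularity-nonselection} with $k=2$, so the work reduces to verifying its local-basis hypotheses for each compactification and noting that $4\alpha>2$ is exactly $\alpha>1/2$.

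\textbf{Horizon.} Take $t=1+y$ as the defining function. For C1, $x_1-1=(1+y_1)/(1-y_1)=t\,h(t)$ with $h(t)=1/(2-t)$, which is smooth with $h(0)=1/2>0$. For C2, $x_2-1=4/(1-y_2)^2-1=t\,h(t)$, where $h$ is smooth with $h(0)=1$, consistent with the local expansion stated after Eq.~\eqref{secondcompactification}. The factorised indicial equation \eqref{FactorisedIndicialEquation} has roots $0$ and $4\alpha$ at $\Omega=-i\alpha$. Standard Frobenius theory at this regular singular point then gives the basis required in hypothesis (i), including the possibly logarithmic form when $4\alpha\in\mathbb N$. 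Pulling this basis back through the smooth diffeomorphism $x-1=t\,h(t)$ keeps it in the stated form, because $h^{4\alpha}$ is absorbed into $a_1$.

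\textbf{Infinity.} Take $u=1-y$. Then C1 has $C=2$, $q=1$, $r=0$, and C2 has $C=4$, $q=2$, $r=0$.

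\emph{Opposite sector.} From Eq.~\eqref{WrongRemaindersNIA}, the unwanted representative is
\[
x^{-4\alpha}e^{-4\alpha x}\times(\text{asymptotic series}),
\]
which becomes $u^{4\alpha q}e^{-4\alpha C/u^q}\,b(u)$ up to constants. This matches the explicit C1 and C2 formulas already displayed in the subsections. The symbol-type estimates on $b$ under differentiation are precisely the sectorial hypothesis that the corollary allows us to assume.

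\emph{Desired sector.} The factorised outgoing sector is $\sum_k c_k^{\mathrm{out}}x^{-k}$ in the sense of sectorial asymptotics. In $u$ this is an asymptotic series in powers of $u^{q}$ with differentiable symbol bounds, so its one-sided $C^2$ membership is part of the assumed hypotheses.

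\textbf{Conclusion.} With these checks, Theorem~\ref{thm:finite-regularity-nonselection} with $k=2$ gives Eq.~\eqref{eq:regularity-domain-is-maximal} for both maps whenever $\alpha>1/2$.

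For the remaining assertions, recall that the endpoint relations \eqref{FirstCompactificationInfinityRelation}, \eqref{FirstCompactificationHorizonRelation}, \eqref{SecondCompactificationInfinityRelation} and \eqref{SecondCompactificationHorizonRelation} were derived as one-sided limits of the polynomial-coefficient equations \eqref{ODEhynone} and \eqref{ODEhynoneII}. Along any one-sided $C^2$ solution, $\mathcal M_2\ddot\Phi\to0$ and the other coefficients have finite limits. Hence every element of the $C^2$ class satisfies these relations automatically: they are identities on $\mathscr F_\Omega$, not additional constraints.

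It follows that for every $\alpha>1/2$ the problem ``solve the equation, be one-sided $C^2$, satisfy the endpoint relations'' has a two-dimensional solution space. Every such $\Omega$ therefore admits nontrivial solutions, so the set of admissible frequencies contains the whole ray $\{-i\alpha:\alpha>1/2\}$. That set is not discrete, in particular not on the reported range $\alpha\simeq15$--$32$.

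The main obstacle is the infinity endpoint, which is irregular. There one must justify that the formal outgoing series and the exponentially small incoming representative correspond to genuine solutions with derivative-stable remainders in the compactified variable. I would handle this by invoking the sectorial Jost asymptotics (Olver-type theory along the real axis) explicitly allowed in hypothesis (ii), rather than proving convergence, and by checking that the chain rule for $x=Cu^{-q}$ produces only polynomial factors in $u^{-1}$. Those factors are harmless for the flat sector by Lemma~\ref{lem:power-log-flatness}(ii), and for the outgoing sector they only raise powers of $u^{q}$ by bounded amounts.
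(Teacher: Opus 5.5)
Your proposal is correct and follows essentially the same route as the paper. You check that both maps are smooth local diffeomorphisms at the horizon with infinity parameters $(C,q)=(2,1)$ and $(4,2)$, apply Theorem~\ref{thm:finite-regularity-nonselection} with $k=2$ so that $4\alpha>2$ becomes $\alpha>1/2$, and observe that the limiting endpoint rows are one-sided limits of the differential equation, so they hold automatically on the $C^2$ class; your explicit $h(t)$ and $u=1-y$ computations, and the closing remark that every $\Omega=-i\alpha$ with $\alpha>1/2$ carries a two-dimensional admissible solution space, spell out steps the paper leaves implicit.
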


\begin{proof}
For C1, we find
\begin{equation}
x_1-1=\frac{1+y_1}{1-y_1},\qquad
x_1=2(1-y_1)^{-1},
\end{equation}
so the horizon map is a smooth local diffeomorphism and the infinity
parameters in Theorem~\ref{thm:finite-regularity-nonselection} are
$(C,q)=(2,1)$. For C2, we have
\begin{equation}
x_2-1=\frac{(1+y_2)(3-y_2)}{(1-y_2)^2},\qquad
x_2=4(1-y_2)^{-2},
\end{equation}
so the horizon map is again a smooth local diffeomorphism and
$(C,q)=(4,2)$. The threshold $4\alpha>2$ is exactly $\alpha>1/2$. For the unwanted horizon sector, the value and first two derivatives needed
by the one-sided $C^2$ class exist when $4\alpha>2$, including the possible resonant power--log term. At infinity, every derivative of the unwanted sector vanishes. The desired sectors satisfy the limiting equations because those equations are obtained by taking the one-sided limit of the differential equation for a $C^2$ solution. Hence, both local basis elements, and therefore every global solution, satisfy the stated endpoint class and limiting rows.
\end{proof}

At this point a remark is in order. The threshold $\alpha>1/2$ pertains to the full one-sided $C^2$ domain. The displayed first-order endpoint rows are already insensitive to the unwanted horizon sector when $4\alpha>1$. In that regime, $\Phi$, $\Phi'$, and the product of the vanishing principal coefficient with $\Phi''$ all tend to zero. Thus, the row-level threshold is $\alpha>1/4$, including the resonant power--log cases. Consequently, throughout the reported damping interval, the isolated pencil roots cannot be attributed to spectral selection by the stated continuum $C^2$ endpoint conditions, since those conditions admit every interior solution. Their isolation arises instead at the level of the finite-dimensional realisation, through polynomial truncation, collocation, and the implementation of the endpoint rows. This observation does not determine the collective limiting interpretation of the roots, nor does it establish convergence to a weighted branch-cut representation. The physical classification of each candidate remains governed independently by the lateral Jost condition $\mathcal D^\pm(\Omega)=0$.

\section{Finite polynomial pencils, exact grid geometry, and diagnostics}
\label{sec:numerical_method}

The compactified frequency-domain equations previously derived have the common quadratic-operator form
\begin{equation}
\label{eq:operator_polynomial_numerics}
\mathcal{Q}(\Omega)\Phi
\equiv
\left(\mathcal{L}_0+i\Omega\mathcal{L}_1+\Omega^2\mathcal{L}_2\right)\Phi
=0,
\end{equation}
where $\mathcal{L}_j=L_j$ for the first compactification in Eq.~\eqref{TSCH}, and $\mathcal{L}_j=\widehat{L}_j$ for the quadratic compactification in Eq.~\eqref{TSCHII}. The discussion below concerns these quadratic frequency-domain pencils. The purpose of the discretisation is to generate and assess candidate frequencies of the compactified problems. It is not used, by itself, to identify zeros of the analytically continued Jost determinant. The latter distinction is essential on the NIA and will be maintained throughout the numerical analysis.

\subsection{Coordinate pullback does not commute with truncation}
\label{subsec:pullback_noncommutation}

The two differential equations represent the same physical exterior, but equal-degree polynomial trial spaces are not mapped into one another. Write $y_1$ and $y_2$ for the C1 and C2 coordinates. At a common radius,
\begin{equation}
\label{eq:compactification-transition-map}
y_1=\tau(y_2):=1-\frac{(1-y_2)^2}{2},\qquad
y_2=\tau^{-1}(y_1):=1-\sqrt{2(1-y_1)}.
\end{equation}
For a function $f=f(y_1)$ expressed in the $\mathrm{C1}$ coordinate,
we use the standard pullback notation
\begin{equation}
\tau^*f:=f\circ\tau,\qquad
(\tau^*f)(y_2)=f\bigl(\tau(y_2)\bigr).
\end{equation}
Thus, $\tau^*$ maps functions expressed in the $\mathrm{C1}$ coordinate
to functions expressed in the $\mathrm{C2}$ coordinate. Similarly,
$(\tau^{-1})^*g:=g\circ\tau^{-1}$ for a function $g=g(y_2)$.

\begin{proposition}[Noncommutation of pullback and polynomial truncation]
\label{prop:pullback-truncation-noncommutation}
Let $\mathbb P_d(y_j)$ denote the space of polynomials of degree at
most $d$ in the coordinate $y_j$. Then,
\begin{equation}\label{eq:c1-to-c2-polynomial-pullback}
\tau^*\bigl(\mathbb P_d(y_1)\bigr):=\left\{\tau^*P:P\in\mathbb P_d(y_1)\right\}=
\left\{P\circ\tau:P\in\mathbb P_d(y_1)\right\}
\subset\mathbb P_{2d}(y_2).
\end{equation}
If $P\in\mathbb P_d(y_1)$ has exact degree $d$, then $\tau^*P$ has
exact degree $2d$. Thus, $\tau^*(\mathbb P_{n-1}(y_1))$ is not contained in $\mathbb P_{n-1}(y_2)$ for $n\geqslant2$. In the reverse direction, a generic $\mathrm{C2}$ polynomial $Q=Q(y_2)$ becomes
$((\tau^{-1})^*Q)(y_1)=Q\bigl(1-\sqrt{2(1-y_1)}\bigr)$, which is generally nonpolynomial and need not be $C^1$ at $y_1=1$. Hence, equal-degree C1 and C2 trial spaces are not related by the exact continuum coordinate change. More precisely, their exact common polynomial core is
\begin{equation}
\label{eq:exact-pullback-common-core}
\tau^*(\mathbb P_d(y_1))\cap\mathbb P_d(y_2)
=\tau^*(\mathbb P_{\lfloor d/2\rfloor}(y_1)),
\end{equation}
which has dimension $\lfloor d/2\rfloor+1$. Equivalently,
\begin{equation}
\label{eq:inverse-polynomial-common-core}
\left\{Q\in\mathbb P_d(y_2):Q\circ\tau^{-1}\in\mathbb P_d(y_1)\right\}
=\left\{R((1-y_2)^2):R\in\mathbb P_{\lfloor d/2\rfloor}\right\}.
\end{equation}
Therefore, exact containment of the C1 degree-$(n-1)$ trial space after
pullback requires C2 degree at least $2(n-1)$, and hence, at least $2n-1$
coefficients.
\end{proposition}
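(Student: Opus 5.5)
The plan is to reduce everything to the single substitution $s:=1-y_2$, $w:=1-y_1$, under which $\tau$ reads $w=s^2/2$. Polynomials in $y_j$ are exactly polynomials in $w$ or $s$ of the same degree, since the shifts are affine. A polynomial $P(y_1)=\sum_{k=0}^d p_k w^k$ therefore pulls back to
\[
(\tau^*P)(y_2)=\sum_{k=0}^d p_k\,2^{-k}s^{2k}.
\]
This is a polynomial in $y_2$ of degree at most $2d$. Its $s^{2d}$ coefficient is $p_d 2^{-d}\neq0$ when $\deg P=d$, which gives the inclusion \eqref{eq:c1-to-c2-polynomial-pullback} and the exact-degree claim. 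Taking $P=y_1$, so that $\tau^*P$ has degree $2>1$, shows non-containment for $n\geqslant2$.

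For the reverse direction, write $Q(y_2)=\sum_j q_j s^j$. Since $s=\sqrt{2w}$, we get $Q\circ\tau^{-1}=\sum_j q_j (2w)^{j/2}$. Any nonzero odd coefficient produces a term $w^{(2i+1)/2}$. I will argue that these half-integer powers are linearly independent of integer powers, for instance by comparing behaviour at $w\to0^+$ or by the parity $w\mapsto$ analytic continuation around $0$ (sign flip of $\sqrt{w}$). Hence $Q\circ\tau^{-1}$ is polynomial iff $Q$ is even in $s$, i.e.\ $Q=R(s^2)=R((1-y_2)^2)$. Then $\deg_{y_1}(Q\circ\tau^{-1})=\deg R\leqslant\lfloor d/2\rfloor$ automatically. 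This proves \eqref{eq:inverse-polynomial-common-core}. For the non-$C^1$ remark, the lowest odd term $q_1 s=q_1\sqrt{2w}$ has unbounded $w$-derivative at $w=0$ when $q_1\neq0$, which happens generically.

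For the common core \eqref{eq:exact-pullback-common-core}: by the computation above, $\tau^*P$ has degree exactly $2\deg P$, so $\tau^*P\in\mathbb P_d(y_2)$ iff $\deg P\leqslant\lfloor d/2\rfloor$. Since $\tau$ is a bijection of $[-1,1]$ (or simply injective on $y_2<1$), $\tau^*$ is injective. The dimension is therefore $\lfloor d/2\rfloor+1$, and the two descriptions agree via $P\leftrightarrow R$ with $R(v)=P(1-v/2)$. The final count follows from the exact-degree statement with $d=n-1$: containing $\tau^*(\mathbb P_{n-1}(y_1))$ requires degree $2(n-1)$, i.e.\ $2n-1$ coefficients.

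The only step needing care is the linear independence of half-integer powers from integer powers, which makes the reverse direction exact rather than heuristic. I expect it to be handled cleanly by isolating the odd part $\sum_i q_{2i+1}s^{2i+1}=\sqrt{2w}\,S(w)$ with $S$ polynomial. If this were polynomial, then $\sqrt{w}$ would be rational, which is impossible unless $S\equiv0$.
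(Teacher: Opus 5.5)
Your argument is essentially the paper's proof. Both use the substitution $s=1-y_2$ (the paper's $z$), degree doubling under $\tau$, and the even/odd split of $Q$ in $s$. You also supply, via the parity of degrees in $2wS(w)^2=T(w)^2$ or the irrationality of $\sqrt{w}$, the justification the paper only asserts for why the odd part must vanish.

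One slip needs fixing: $P=y_1$ witnesses non-containment only for $n=2$. For $n\geqslant3$ its pullback $1-(1-y_2)^2/2$ has degree $2\leqslant n-1$ and does lie in $\mathbb P_{n-1}(y_2)$. Take instead $P$ of exact degree $n-1$ (e.g.\ $y_1^{n-1}$); by your exact-degree result its pullback has degree $2n-2>n-1$.
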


\begin{proof}
Equation~\eqref{eq:compactification-transition-map} follows by equating the two formulas for $x$. Since $\tau$ is quadratic with nonzero quadratic coefficient, composition with a degree-$d$ polynomial has degree $2d$. The inverse contains a square root. For example, the C2 polynomial $P(y_2)=y_2$ becomes $1-\sqrt{2(1-y_1)}$, whose derivative diverges as $y_1\to1^-$. The degree identity also shows that $\deg(P\circ\tau)\leqslant d$ exactly when
$\deg P\leqslant\lfloor d/2\rfloor$, which proves
Eq.~\eqref{eq:exact-pullback-common-core}. For the reverse statement, we put $z=1-y_2$ and split $Q(1-z)$ into its even and odd powers of $z$. Under the inverse map, $z=\sqrt{2(1-y_1)}$. Moreover, polynomiality in $y_1$ forces the odd part to vanish. Thus, $Q(1-z)$ is a polynomial in $z^2$, proving Eq.~\eqref{eq:inverse-polynomial-common-core} and the dimension and containment claims.
\end{proof}

If $\Pi_n^{(r)}$ denotes projection onto the degree-$(n-1)$ trial space in representation $r$, Proposition~\ref{prop:pullback-truncation-noncommutation} states generically that $\Pi_n^{(2)}\tau^*\neq\tau^*\Pi_n^{(1)}$. The observed map dependence is therefore a failure of finite approximation operations to commute, not a change in the Schwarzschild continuum spectrum. Last but not least, compactification charts and their coordinate changes form a groupoid, and pullback acts contravariantly on the ambient function spaces. The assignment that replaces each ambient space by the fixed-degree polynomial subspace $\mathbb P_{n-1}$ is not a subfunctor. More precisely, Eq.~\eqref{eq:c1-to-c2-polynomial-pullback}
shows that pullback leaves the chosen trial space. Equivalently, the family of truncations $\Pi_n^{(r)}$ is not a natural transformation with respect to the nonlinear chart change. This categorical statement is precisely the finite-dimensional obstruction described above. It is not an additional physical assumption.

\subsection{Chebyshev representation and matrix assembly}
\label{subsec:chebyshev_assembly}

At resolution $n$, the regularised radial function is approximated by the truncated Chebyshev series
\begin{equation}\label{eq:chebyshev_expansion_numerics}
\Phi_n(y)=\sum_{k=0}^{n-1}a_kT_k(y),\qquad
\mathbf{a}=(a_0,\ldots,a_{n-1})^{\mathsf T}\neq\mathbf{0},
\end{equation}
where $T_k(y)$ denotes the Chebyshev polynomial of the first kind~\cite{Boyd2000, Trefethen2000}. For the endpoint-excluding discretisation we use the $n$ Chebyshev roots
\begin{equation}\label{eq:chebyshev_roots_grid}
y_j^{\mathrm R}=\cos\left(\frac{(2j-1)\pi}{2n}\right),\qquad j\in\{1,\ldots,n\}.
\end{equation}
The endpoint-inclusive comparison uses the Lobatto points
\begin{equation}\label{eq:chebyshev_lobatto_grid}
y_j^{\mathrm L}=\cos\left(\frac{j\pi}{n-1}\right),\qquad j\in\{0,\ldots,n-1\}.
\end{equation}
The endpoint theorem also explains why an unwanted sector can look
spectrally immaculate. The following exact model calculation quantifies its Chebyshev tail.

\begin{proposition}[Chebyshev tail of an endpoint Frobenius power]
\label{prop:chebyshev-frobenius-tail}
Let $\beta>-1/2$. For $n\geqslant 1$, let $c_n(\beta)$ be the $n$th
Chebyshev coefficient of $f_\beta(y)=(1+y)^\beta$ in the convention
\begin{equation}
c_n(\beta)=\frac{2}{\pi}\int_0^\pi
f_\beta(\cos\theta)\cos(n\theta)\,d\theta.
\end{equation}
Then, we have
\begin{equation}
\label{eq:exact-chebyshev-frobenius-coefficient}
c_n(\beta)=\frac{2^{1-\beta}\Gamma(2\beta+1)}
{\Gamma(\beta+n+1)\Gamma(\beta-n+1)}.
\end{equation}
If $\beta\notin\mathbb N_0$, reflection gives
\begin{equation}
\label{eq:reflected-chebyshev-frobenius-coefficient}
c_n(\beta)=(-1)^{n+1}
\frac{2^{1-\beta}\Gamma(2\beta+1)\sin(\pi\beta)}{\pi}
\frac{\Gamma(n-\beta)}{\Gamma(n+\beta+1)},
\end{equation}
and hence,
\begin{equation}
\label{eq:chebyshev-frobenius-tail-asymptotic}
c_n(\beta)=(-1)^{n+1}
\frac{2^{1-\beta}\Gamma(2\beta+1)\sin(\pi\beta)}{\pi}
n^{-2\beta-1}\bigl[1+O(n^{-1})\bigr].
\end{equation}
For $m\in\mathbb N_0$ and $n>m$, let $d_{n,m}$ denote the $n$th Chebyshev of $(1+y)^m\log(1+y)$ in the same convention, namely
\begin{equation}
\label{eq:chebyshev-power-log-coefficient-definition}
d_{n,m}:=\frac{2}{\pi}\int_0^\pi
(1+\cos\theta)^m\log(1+\cos\theta)\cos(n\theta)\,d\theta,
\qquad n\geqslant 1.
\end{equation}
Then, for $n>m$
\begin{equation}
\label{eq:exact-chebyshev-power-log-coefficient}
d_{n,m}=(-1)^{n+m+1}2^{1-m}(2m)!
\frac{\Gamma(n-m)}{\Gamma(n+m+1)},
\end{equation}
Hence, $d_{n,m}=O\bigl(n^{-2m-1}\bigr)$ as $n\to\infty$.
\end{proposition}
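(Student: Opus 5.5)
The plan is to reduce the Chebyshev coefficient to a classical trigonometric moment by the half-angle substitution, then obtain the power--log coefficient by differentiating with respect to $\beta$. Since $1+\cos\theta=2\cos^2(\theta/2)$, the substitution $\varphi=\theta/2$ gives
\begin{equation*}
c_n(\beta)=\frac{2^{\beta+2}}{\pi}\int_0^{\pi/2}\cos^{2\beta}\varphi\,\cos(2n\varphi)\,d\varphi .
\end{equation*}
I will invoke the standard integral (Gradshteyn--Ryzhik 3.631.9)
\begin{equation*}
\int_0^{\pi/2}\cos^{\mu-1}\varphi\,\cos(a\varphi)\,d\varphi
=\frac{\pi\,\Gamma(\mu)}{2^{\mu}\,\Gamma\!\bigl(\tfrac{\mu+a+1}{2}\bigr)\Gamma\!\bigl(\tfrac{\mu-a+1}{2}\bigr)},\qquad \mu>0 .
\end{equation*}
Here $\mu=2\beta+1$ and $a=2n$. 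The hypothesis $\beta>-1/2$ is exactly the condition $\mu>0$, which also makes the original integral absolutely convergent. Collecting the powers of two gives $2^{\beta+2}\cdot 2^{-2\beta-1}=2^{1-\beta}$, which yields Eq.~\eqref{eq:exact-chebyshev-frobenius-coefficient}.

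If I want to avoid citing the table, I would derive the formula as follows. Write $\cos(2n\varphi)=\operatorname{Re}e^{2in\varphi}$ and deform the contour to reduce the integral to a Beta function. Alternatively, I can check the formula at integer $n$ by Rodrigues-type integration by parts and extend it in $\beta$ by analyticity (Carlson's theorem applies, since both sides are analytic and of controlled growth in $\beta$). This identity is the only nonroutine input, and I regard it as the main obstacle. Everything after it is manipulation of Gamma functions.

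For $\beta\notin\mathbb N_0$, apply the reflection formula to $\Gamma(\beta-n+1)\Gamma(n-\beta)$. This gives
\begin{equation*}
\frac{1}{\Gamma(\beta-n+1)}=\frac{\Gamma(n-\beta)\,\sin\bigl(\pi(\beta+1-n)\bigr)}{\pi},
\qquad \sin\bigl(\pi(\beta+1-n)\bigr)=(-1)^{n+1}\sin(\pi\beta),
\end{equation*}
and substituting this into Eq.~\eqref{eq:exact-chebyshev-frobenius-coefficient} gives Eq.~\eqref{eq:reflected-chebyshev-frobenius-coefficient}. The asymptotic form~\eqref{eq:chebyshev-frobenius-tail-asymptotic} then follows from the standard ratio $\Gamma(n+a)/\Gamma(n+b)=n^{a-b}[1+O(n^{-1})]$, taken with $a=-\beta$ and $b=\beta+1$.

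For the logarithmic coefficient, note that $\partial_\beta(1+y)^\beta=(1+y)^\beta\log(1+y)$. Hence $d_{n,m}=\partial_\beta c_n(\beta)\big|_{\beta=m}$. Differentiation under the integral is justified by dominated convergence for $\beta$ in a small neighbourhood of $m\geqslant0$, because the integrand and its $\beta$-derivative are bounded by an integrable multiple of $(1+\cos\theta)^{m-\delta}(1+|\log(1+\cos\theta)|)$.

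Next, differentiate the right-hand side of Eq.~\eqref{eq:exact-chebyshev-frobenius-coefficient}. For $n>m$, the factor $1/\Gamma(\beta-n+1)$ vanishes at $\beta=m$, since $m-n+1=-k$ with $k=n-m-1\in\mathbb N_0$. Therefore only the derivative of that factor survives, and all other factors are simply evaluated at $\beta=m$. Using $\frac{d}{dz}\bigl[1/\Gamma(z)\bigr]\big|_{z=-k}=(-1)^k k!$, I obtain
\begin{equation*}
d_{n,m}=2^{1-m}(2m)!\,\frac{(-1)^{n-m-1}(n-m-1)!}{\Gamma(n+m+1)},
\end{equation*}
which is Eq.~\eqref{eq:exact-chebyshev-power-log-coefficient}, since $(n-m-1)!=\Gamma(n-m)$ and $(-1)^{n-m-1}=(-1)^{n+m+1}$. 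Applying the same Gamma-ratio asymptotic gives $d_{n,m}=O(n^{-2m-1})$.
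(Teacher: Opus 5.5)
Your proposal is correct and follows essentially the same route as the paper: the half-angle substitution, the standard beta--Fourier integral (your Gradshteyn--Ryzhik formula with $\mu=2\beta+1$, $a=2n$ reduces exactly to the paper's displayed identity), Euler reflection, the gamma-ratio asymptotic, and differentiation in $\beta$ at $\beta=m$. The differences are cosmetic. You differentiate $1/\Gamma(\beta-n+1)$ at its zero, whereas the paper differentiates the sine factor in the reflected form, and you also justify differentiation under the integral sign, which the paper leaves implicit.
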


\begin{proof}
Let $u=\theta/2$. Since $1+\cos\theta=2\cos^2u$, the standard
beta--Fourier integral, which converges under the hypothesis
$\beta>-1/2$,
\begin{equation}
\int_0^{\pi/2}\cos^{2\beta}u\cos(2nu)\,du=\frac{\pi\Gamma(2\beta+1)}
{2^{2\beta+1}\Gamma(\beta+n+1)\Gamma(\beta-n+1)}
\end{equation}
gives
\begin{equation}
\int_0^\pi(1+\cos\theta)^\beta\cos(n\theta)\,d\theta=\frac{\pi2^{-\beta}\Gamma(2\beta+1)}
{\Gamma(\beta+n+1)\Gamma(\beta-n+1)},
\end{equation}
which proves Eq.~\eqref{eq:exact-chebyshev-frobenius-coefficient}.
Euler's reflection formula gives
\begin{equation}
\frac{1}{\Gamma(\beta-n+1)}=(-1)^{n+1}\frac{\sin(\pi\beta)}{\pi}\Gamma(n-\beta),
\end{equation}
which proves Eq.~\eqref{eq:reflected-chebyshev-frobenius-coefficient}. The standard gamma-ratio asymptotic $\Gamma(n-\beta)/\Gamma(n+\beta+1)
=n^{-2\beta-1}(1+O(n^{-1}))$ proves Eq.~\eqref{eq:chebyshev-frobenius-tail-asymptotic}. Finally, $(1+y)^m\log(1+y)=\partial_\beta(1+y)^\beta|_{\beta=m}$. For $n>m$, the gamma ratio in Eq.~\eqref{eq:reflected-chebyshev-frobenius-coefficient} is finite at $\beta=m$, while $\sin(\pi m)=0$. Thus, only the derivative of the sine factor survives there, and differentiation yields Eq.~\eqref{eq:exact-chebyshev-power-log-coefficient}.
\end{proof}

\begin{corollary}[Uniform differentiation rate for the endpoint model]
\label{cor:chebyshev-endpoint-model-derivative-rate}
Let $S_N f$ denote the degree-$N$ Chebyshev truncation of $f$. If
$\beta\notin\mathbb N_0$ and $\beta>k\geqslant0$, with $k$ an integer, then
\begin{equation}
\label{eq:frobenius-model-derivative-rate}
\left\|f_\beta^{(k)}-(S_Nf_\beta)^{(k)}\right\|_\infty
=O\left(N^{-2(\beta-k)}\right).
\end{equation}
For $g_m(y)=(1+y)^m\log(1+y)$ and $0\leqslant k<m$, we find
\begin{equation}
\label{eq:power-log-model-derivative-rate}
\left\|g_m^{(k)}-(S_Ng_m)^{(k)}\right\|_\infty
=O\left(N^{-2(m-k)}\right).
\end{equation}
\end{corollary}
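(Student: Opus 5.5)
The plan is to differentiate the Chebyshev tail termwise and bound the result by the exact coefficients of Proposition~\ref{prop:chebyshev-frobenius-tail}. Since $c_n(\beta)=O(n^{-2\beta-1})$ and $\beta>k$, the tail series converges well enough to be differentiated termwise. We write
\begin{equation*}
f_\beta-S_Nf_\beta=\sum_{n>N}c_n(\beta)T_n.
\end{equation*}
The uniform sup norm of $(1+y)^\beta$ is irrelevant here; only the coefficients matter. The key tool is Markov's inequality in its exact form for Chebyshev polynomials: $\|T_n^{(k)}\|_\infty=T_n^{(k)}(1)=\prod_{j=0}^{k-1}\frac{n^2-j^2}{2j+1}\leqslant n^{2k}/(2k-1)!!$. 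This gives
\begin{equation*}
\bigl\|f_\beta^{(k)}-(S_Nf_\beta)^{(k)}\bigr\|_\infty\leqslant\sum_{n>N}|c_n(\beta)|\,n^{2k}\lesssim\sum_{n>N}n^{2k-2\beta-1}.
\end{equation*}
The sum is $O(N^{-2(\beta-k)})$ by integral comparison, because $2\beta-2k>0$.

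The same estimate also justifies the identification $f_\beta^{(k)}=\sum_n c_nT_n^{(k)}$ on $[-1,1]$. The differentiated series converges absolutely and uniformly for every order $j\leqslant k$. Its sum therefore equals the $k$th derivative of the uniform limit, which is $f_\beta$ on $(-1,1]$. Continuity then covers the endpoint, and this uses Lemma~\ref{lem:power-log-flatness} to know that $f_\beta\in C_+^k$ at $y=-1$. The hypothesis $\beta\notin\mathbb N_0$ is used only so that the reflected formula \eqref{eq:reflected-chebyshev-frobenius-coefficient} applies; for integer $\beta$ the tail vanishes identically.

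For $g_m$ I would repeat the argument with $d_{n,m}=O(n^{-2m-1})$ from \eqref{eq:exact-chebyshev-power-log-coefficient}, valid for $n>m$. Only finitely many low modes are excluded, and they are irrelevant to the tail once $N\geqslant m$. The series with weights $n^{2k}$ is then bounded by $\sum_{n>N}n^{2k-2m-1}=O(N^{-2(m-k)})$ whenever $k<m$. Identifying the differentiated sum with $g_m^{(k)}$ again needs $g_m\in C_+^k$, which holds for $k<m$ by Lemma~\ref{lem:power-log-flatness}(i).

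The only genuinely delicate point is the sharpness of the Markov factor $n^{2k}$ at the endpoints. The lossy bound it produces is exactly what yields the stated exponent $2(\beta-k)$ rather than something better. Interior estimates would gain, but the statement asks for the uniform norm, so the endpoint growth of $T_n^{(k)}$ is what governs the rate. No further obstacle is expected.
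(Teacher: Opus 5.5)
Your proposal is correct and follows the paper's proof essentially step for step. Both arguments combine the Markov bound $\|T_n^{(k)}\|_\infty=O(n^{2k})$ with the coefficient decay from Proposition~\ref{prop:chebyshev-frobenius-tail}, justify termwise differentiation by absolute uniform convergence for every order $j\leqslant k$, and then sum the tails $\sum_{n>N}n^{2k-2\beta-1}$ and $\sum_{n>N}n^{2k-2m-1}$. Your extra details are accurate refinements of the same route, not a different approach: the exact value $T_n^{(k)}(1)=\prod_{j=0}^{k-1}(n^2-j^2)/(2j+1)$, and the observation that the tail vanishes for integer $\beta$.
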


\begin{proof}
For fixed $k$, the Chebyshev derivative bound $\|T_n^{(k)}\|_\infty=O(n^{2k})$ combines with Eqs.~\eqref{eq:chebyshev-frobenius-tail-asymptotic} and
\eqref{eq:exact-chebyshev-power-log-coefficient}. Summing the differentiated tails gives respectively
$\sum_{n>N}O(n^{-2\beta-1+2k})$ and
$\sum_{n>N}O(n^{-2m-1+2k})$. The corresponding derivative series converge absolutely and uniformly for every order $0\leqslant j\leqslant k$, because $\beta>k$ and $m>k$. The standard termwise-differentiation theorem therefore identifies their sums with $f_\beta^{(j)}$ and $g_m^{(j)}$; summing the two $k$th-derivative tails yields the stated powers.
\end{proof}

For the nonresonant unwanted horizon sector, $\beta=4\alpha$, so the pure endpoint model has coefficients $c_n=O(n^{-8\alpha-1})$, uniform truncation error $O(N^{-8\alpha})$, and $k$th-derivative error $O(N^{-8\alpha+2k})$ for fixed $k<4\alpha$. Across
$\alpha\simeq15$--$32$, the coefficient scale ranges from roughly
$n^{-121}$ to $n^{-257}$. The unwanted sector can therefore be
indistinguishable from an analytic function over a finite coefficient window. A smooth nonvanishing amplitude preserves this leading endpoint rate provided that no competing interior or endpoint singularity produces a slower tail, but Proposition~\ref{prop:chebyshev-frobenius-tail} is a model-tail theorem, not a convergence theorem for the full pencil. Notice that the two maps can be written together as $x_a(y)=[2/(1-y)]^a$, with $a=1$ for C1 and $a=2$ for C2. This notation turns the observed quadratic-versus-quartic node reach into an exact result.

\begin{theorem}[Exact radial reach of the Chebyshev grids]
\label{thm:exact-chebyshev-radial-reach}
For the $n$ roots in Eq.~\eqref{eq:chebyshev_roots_grid}, the sampled radii are
\begin{equation}
\label{eq:exact-root-radii}
x_{j,n}^{(a),\mathrm R}
=\csc^{2a}\left(\frac{(2j-1)\pi}{4n}\right),
\qquad a\in\{1,2\}.
\end{equation}
In particular, we have
\begin{equation}
\label{eq:exact-root-extreme-radii}
x_{\max,n}^{(a),\mathrm R}
=\csc^{2a}\left(\frac{\pi}{4n}\right),\qquad
x_{\min,n}^{(a),\mathrm R}=\sec^{2a}\left(\frac{\pi}{4n}\right).
\end{equation}
The far-end expansions are
\begin{align}
x_{\max,n}^{(1),\mathrm R}
&=\frac{16}{\pi^2}n^2+\frac13
+\frac{\pi^2}{240n^2}+O(n^{-4}),
\label{eq:c1-root-far-reach}\\
x_{\max,n}^{(2),\mathrm R}
&=\frac{256}{\pi^4}n^4
+\frac{32}{3\pi^2}n^2+\frac{11}{45}+O(n^{-2}),
\label{eq:c2-root-far-reach}
\end{align}
while the near-horizon distances satisfy
\begin{align}
x_{\min,n}^{(1),\mathrm R}-1
&=\frac{\pi^2}{16n^2}
+\frac{\pi^4}{384n^4}+O(n^{-6}),
\label{eq:c1-root-horizon-reach}\\
x_{\min,n}^{(2),\mathrm R}-1
&=\frac{\pi^2}{8n^2}
+\frac{7\pi^4}{768n^4}+O(n^{-6}).
\label{eq:c2-root-horizon-reach}
\end{align}
For the Lobatto grid, $j=0$ represents infinity and $j=n-1$ the horizon. The finite radii are exactly
\begin{equation}
\label{eq:exact-lobatto-radii}
x_{j,n}^{(a),\mathrm L}=\csc^{2a}\left(\frac{j\pi}{2(n-1)}\right),
\qquad 1\leqslant j\leqslant n-1.
\end{equation}
The largest finite interior radii consequently obey
\begin{align}
x_{\max,n}^{(1),\mathrm L,\mathrm{int}}
&=\frac{4}{\pi^2}(n-1)^2+\frac13
+\frac{\pi^2}{60(n-1)^2}+O((n-1)^{-4}),
\label{eq:c1-lobatto-far-reach}\\
x_{\max,n}^{(2),\mathrm L,\mathrm{int}}
&=\frac{16}{\pi^4}(n-1)^4
+\frac{8}{3\pi^2}(n-1)^2+\frac{11}{45}
+O((n-1)^{-2}).
\label{eq:c2-lobatto-far-reach}
\end{align}
At every common computational coordinate, $x_2(y)=x_1(y)^2$.
\end{theorem}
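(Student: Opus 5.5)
The plan is a direct trigonometric computation followed by Laurent expansions. The key identity is the half-angle formula. Write $y=\cos\theta$, so that $1-y=2\sin^2(\theta/2)$. Then both maps become explicit:
\begin{equation*}
x_a(y)=\left[\frac{2}{1-y}\right]^a=\csc^{2a}(\theta/2).
\end{equation*}
This gives $x_2=x_1^2$ at every common computational coordinate immediately.

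For the roots grid, $\theta_j=(2j-1)\pi/(2n)$, which yields Eq.~\eqref{eq:exact-root-radii}. Since $\csc$ is decreasing on $(0,\pi/2]$ and $\theta_j/2\in(0,\pi/2)$, the extremes come from the two ends of the grid:
\begin{itemize}
\item $j=1$ gives $\theta/2=\pi/(4n)$ and hence $x_{\max}=\csc^{2a}(\pi/(4n))$;
\item $j=n$ gives $\theta/2=\pi/2-\pi/(4n)$, so $\csc$ becomes $\sec(\pi/(4n))$ and $x_{\min}=\sec^{2a}(\pi/(4n))$.
\end{itemize}
For the Lobatto grid, $\theta_j=j\pi/(n-1)$ gives $\csc^{2a}(j\pi/(2(n-1)))$. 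Here $j=0$ is $y=1$ (infinity, excluded), $j=n-1$ returns $\csc^{2a}(\pi/2)=1$ (the horizon), and the largest finite interior radius comes from $j=1$.

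The asymptotic expansions then follow from the standard Laurent series. For the cosecant,
\begin{equation*}
\csc^2 u=u^{-2}+\tfrac13+\tfrac{u^2}{15}+O(u^4),
\end{equation*}
and squaring gives
\begin{equation*}
\csc^4 u=u^{-4}+\tfrac23u^{-2}+\left(\tfrac19+\tfrac{2}{15}\right)+O(u^2),
\qquad \tfrac19+\tfrac{2}{15}=\tfrac{11}{45}.
\end{equation*}
Substituting $u=\pi/(4n)$ yields Eqs.~\eqref{eq:c1-root-far-reach}--\eqref{eq:c2-root-far-reach}. For example, $u^2/15=\pi^2/(240n^2)$ and $\tfrac23u^{-2}=32n^2/(3\pi^2)$. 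Substituting $u=\pi/(2(n-1))$ gives the Lobatto formulas, e.g.\ $u^2/15=\pi^2/(60(n-1)^2)$.

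For the horizon distances, use
\begin{equation*}
\sec^2u-1=\tan^2u=u^2+\tfrac23u^4+O(u^6),
\end{equation*}
so that with $u=\pi/(4n)$,
\begin{equation*}
\tfrac23u^4=\tfrac{2\pi^4}{3\cdot256\,n^4}=\tfrac{\pi^4}{384n^4}.
\end{equation*}
For $a=2$, write
\begin{equation*}
\sec^4u-1=\tan^2u\,(\sec^2u+1)=\left(u^2+\tfrac23u^4\right)\left(2+u^2\right)+O(u^6)=2u^2+\tfrac73u^4+O(u^6).
\end{equation*}
With $u=\pi/(4n)$ this gives $\pi^2/(8n^2)+7\pi^4/(768n^4)$, matching Eq.~\eqref{eq:c2-root-horizon-reach}.

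Every step is elementary, so I do not expect a real obstacle. The only place needing care is verifying the coefficient arithmetic in the $O(n^{-4})$ and constant terms, such as $11/45$ and $7/768$. I will do this by carrying the expansions of $\csc^2$ and $\tan^2$ to one order beyond what is displayed, so that the squared and product terms are exact through the stated order.
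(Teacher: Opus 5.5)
Your proposal is correct and follows the paper's proof essentially step for step. The half-angle identity $1-y=2\sin^2(\theta/2)$ gives the exact $\csc^{2a}$ radii, the extremes sit at $j=1$ and $j=n$ (at $j=1$ for the Lobatto grid), and the expansions come from the Laurent series of $\csc^2$ and $\sec^2$ and their squares. Your explicit monotonicity argument and the factorisation $\sec^4u-1=\tan^2u\,(\sec^2u+1)$ are cosmetic refinements, and your coefficients ($11/45$, $\pi^4/384$, $7\pi^4/768$, $\pi^2/60$) all check out.
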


\begin{proof}
For a roots point, we set $\theta_j=(2j-1)\pi/(4n)$. The half-angle identity gives $1-y_j^{\mathrm R}=2\sin^2\theta_j$, proving
Eq.~\eqref{eq:exact-root-radii}. The extreme values correspond to $j=1$
and $j=n$. For a Lobatto point, the same identity with
$\theta_j=j\pi/[2(n-1)]$ proves Eq.~\eqref{eq:exact-lobatto-radii}, and the largest finite interior radius corresponds to $j=1$. The expansions follow after substituting $z=\pi/(4n)$ for the roots grid and $z=\pi/[2(n-1)]$ for the Lobatto grid into
\begin{equation}
\csc^2z=z^{-2}+\frac13+\frac{z^2}{15}+O(z^4),\qquad
\sec^2z=1+z^2+\frac23z^4+O(z^6),
\end{equation}
and from squaring these series.
\end{proof}

\begin{corollary}[Reach exponent for algebraic compactifications]
\label{cor:general-compactification-reach}
Let $x:[-1,1)\to[1,+\infty)$ be a continuous, monotonically increasing
compactification map. If $x(y)\sim C(1-y)^{-p}$ as $y\to1^-$, with
$C,p>0$, then the roots grid has
\begin{equation}\label{eq:general-root-reach-law}
x_{\max,n}^{\mathrm R}
\sim C\left(\frac{8}{\pi^2}\right)^p n^{2p}.
\end{equation}
\end{corollary}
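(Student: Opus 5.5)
The plan is to reduce the statement to the exact root-grid identity already obtained in the proof of Theorem~\ref{thm:exact-chebyshev-radial-reach}, and then let the algebraic asymptotics of $x$ at $y=1$ carry the rest.

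First I will locate the largest sampled radius. Because $x$ is monotonically increasing on $[-1,1)$, the largest radius on the roots grid is attained at the root nearest $y=1$. That root is $y_1^{\mathrm R}=\cos(\pi/(2n))$. So
\begin{equation*}
x_{\max,n}^{\mathrm R}=x\bigl(y_1^{\mathrm R}\bigr).
\end{equation*}
By the half-angle identity used in that proof,
\begin{equation*}
1-y_1^{\mathrm R}=2\sin^2\!\left(\frac{\pi}{4n}\right).
\end{equation*}
This quantity tends to $0^+$ as $n\to\infty$, so $y_1^{\mathrm R}\to1^-$.

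Next I will insert this into the hypothesis $x(y)\sim C(1-y)^{-p}$. Writing $\delta_n=1-y_1^{\mathrm R}\to0^+$, the hypothesis means $x(1-\delta)/\bigl(C\delta^{-p}\bigr)\to1$ as $\delta\to0^+$. Composing with the sequence $\delta_n$ gives
\begin{equation*}
x_{\max,n}^{\mathrm R}\sim C\,\delta_n^{-p}.
\end{equation*}
This is a limit along a sequence, so no uniformity is needed.

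Finally I will expand $\delta_n$. Since $\sin z\sim z$, we have $\delta_n=2\sin^2(\pi/(4n))\sim \pi^2/(8n^2)$. Raising this to the power $-p$ preserves asymptotic equivalence for fixed $p>0$, because the ratio of the two sides tends to $1$ and so does its $-p$th power. This yields
\begin{equation*}
x_{\max,n}^{\mathrm R}\sim C\left(\frac{8}{\pi^2}\right)^p n^{2p},
\end{equation*}
which is Eq.~\eqref{eq:general-root-reach-law}.

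As a consistency check I will compare with the exact C1 and C2 formulas. C1 has $C=2$, $p=1$, giving $16n^2/\pi^2$. C2 has $C=4$, $p=2$, giving $256n^4/\pi^4$. Both agree with Eqs.~\eqref{eq:c1-root-far-reach} and \eqref{eq:c2-root-far-reach}.

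I do not expect a genuine obstacle. The only point needing care is the first step, where monotonicity is used to guarantee that the maximum is attained at $j=1$. Continuity plays no role beyond this.
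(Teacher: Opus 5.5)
Your proposal is correct and follows essentially the same route as the paper: monotonicity places the maximum at the root nearest $y=1$, the half-angle identity gives $1-y_1^{\mathrm R}=2\sin^2[\pi/(4n)]\sim\pi^2/(8n^2)$, and substituting this into $x(y)\sim C(1-y)^{-p}$ yields the law. Your extra remarks (the limit is taken along a sequence, raising to the power $-p$ preserves asymptotic equivalence, and the C1/C2 consistency checks) only spell out what the paper's three-line proof leaves implicit.
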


\begin{proof}
The nearest root satisfies $1-y_1^{\mathrm R}=2\sin^2(\pi/(4n))\sim\pi^2/(8n^2)$. Monotonicity of $x$ makes this nearest-to-$1$ root the one with maximal sampled radius. Substituting
the preceding asymptotic into $x(y)\sim C(1-y)^{-p}$ gives
Eq.~\eqref{eq:general-root-reach-law}.
\end{proof}

Theorem~\ref{thm:exact-chebyshev-radial-reach} proves physical node
placement. It explains the geometric source of the $n^2$ and $n^4$ radial scales, but it does not derive the empirical spectral-edge powers of the nonnormal quadratic pencils. That implication requires a separate operator-level asymptotic estimate.

On the roots grid every row is obtained by collocating the differential
equation in the open interval. The endpoint relations derived in the
preceding section are then evaluated as independent a posteriori diagnostic
checks. In the Lobatto realisation, the two rows associated with $y=\pm 1$ are replaced by the corresponding limiting equations, that is Eqs.~\eqref{FirstCompactificationInfinityRelation} and \eqref{FirstCompactificationHorizonRelation} for $x=2/(1-y)$, and Eqs.~\eqref{SecondCompactificationInfinityRelation} and \eqref{SecondCompactificationHorizonRelation} for $x=4/(1-y)^2$, respectively. The endpoint relations are therefore not appended as extra constraints, and the matrix pencil remains square. For either compactification, we write
\begin{equation}\label{eq:operator_component_form}
\mathcal{L}_q[\Phi]=c_{q0}(y)\Phi+c_{q1}(y)\Phi'+c_{q2}(y)\Phi'',
\qquad q\in\{0,1,2\},
\end{equation} 
with the coefficient functions $c_{qr}(y)$ specified by Eqs.~\eqref{L0none}--\eqref{L2none}, or by Eqs.~\eqref{L0noneII}--\eqref{L2noneII}. At an interior collocation point $y_j$, the entries of the three coefficient matrices are
\begin{equation}\label{eq:matrix_assembly_entries}
(M_q^{\mathrm{raw}})_{j,k+1}=c_{q0}(y_j)T_k(y_j)+c_{q1}(y_j)T_k'(y_j)
+c_{q2}(y_j)T_k''(y_j)
\end{equation}
with $q\in\{0,1,2\}$ and $k\in\{0,1,\dots,n-1\}$. The basis functions and their first two derivatives are evaluated pointwise, without expanding $T_k$ into monomials and without extracting symbolic polynomial coefficients. Starting from
$T_0(y)=1$, $T_1(y)=y$, $T_0'(y)=0$, $T_1'(y)=1$, and $T_0''(y)=T_1''(y)=0$, the assembly uses the coupled recurrences
\begin{subequations}\label{eq:chebyshev_derivative_recurrences}
\begin{align}
T_{k+1}(y)&=2yT_k(y)-T_{k-1}(y),\label{eq:chebyshev_recurrence_T}\\
T_{k+1}'(y)&=2T_k(y)+2yT_k'(y)-T_{k-1}'(y),\label{eq:chebyshev_recurrence_DT}\\
T_{k+1}''(y)&=4T_k'(y)+2yT_k''(y)-T_{k-1}''(y).\label{eq:chebyshev_recurrence_DDT}
\end{align}
\end{subequations}
This recurrence-based construction reduces the amount of symbolic algebra
and permits the matrices to be assembled directly at the requested decimal
precision. Before export, the recurrence values of $T_k(y)$, $T_k'(y)$, and
$T_k''(y)$ are cross-checked against direct basis evaluation, and the maximum
relative discrepancies are recorded in the assembly metadata. Collocation of Eq.~\eqref{eq:operator_polynomial_numerics} gives the quadratic matrix polynomial
\begin{equation} \label{eq:quadratic_matrix_pencil}
Q_n^{\mathrm{raw}}(\Omega)\mathbf{a}\equiv
\left(M_0^{\mathrm{raw}}+i\Omega M_1^{\mathrm{raw}}
+\Omega^2M_2^{\mathrm{raw}}\right)\mathbf{a}=\mathbf{0},\qquad M_q^{\mathrm{raw}}\in\mathbb{R}^{n\times n}.
\end{equation}
Equation~\eqref{eq:quadratic_matrix_pencil} defines the \emph{raw} pencil. Matrix assembly and eigensolution are both performed in
arbitrary precision. The matrix entries are exported as decimal text and are
parsed directly into the arbitrary-precision type used by the eigensolver,
without an intermediate binary64 conversion. The decimal precision used to
export the matrices and the working precision used by the eigensolver are
recorded separately. The latter includes guard digits and is increased
whenever the observed conditioning or precision tests require it. Extra
working digits can reduce arithmetic error in the eigensolve, but cannot
restore information that was not retained during matrix assembly. Consequently, no fixed machine-precision assumption enters the construction.

\subsection{Frequency-independent row equilibration} \label{subsec:row_equilibration}

The three raw coefficient matrices can have strongly different row scales. We therefore form the horizontal block matrix
\begin{equation}\label{eq:block_matrix_for_equilibration}
\mathscr{M}^{\mathrm{raw}}=\begin{bmatrix}
M_0^{\mathrm{raw}}&M_1^{\mathrm{raw}}&M_2^{\mathrm{raw}}
\end{bmatrix}
\end{equation}
and define
\begin{equation}\label{eq:row_scaling_definition}
r_j=\left[\sum_{k=1}^{n}\left(|(M_0^{\mathrm{raw}})_{jk}|^2+|(M_1^{\mathrm{raw}})_{jk}|^2+|(M_2^{\mathrm{raw}})_{jk}|^2\right)
\right]^{1/2},\qquad
d_j=r_j^{-1}.
\end{equation}
With $D=\operatorname{diag}(d_1,\ldots,d_n)$, the equilibrated matrices and
pencil are
\begin{equation}\label{eq:equilibrated_pencil}
M_q^{\mathrm{eq}}=DM_q^{\mathrm{raw}},\qquad
Q_n^{\mathrm{eq}}(\Omega)=DQ_n^{\mathrm{raw}}(\Omega)=M_0^{\mathrm{eq}}+i\Omega M_1^{\mathrm{eq}}+\Omega^2M_2^{\mathrm{eq}}.
\end{equation}
The same frequency-independent left scaling is applied to all three
coefficient matrices, and no column scaling is used. Provided no row of
$\mathscr{M}^{\mathrm{raw}}$ vanishes, $D$ is nonsingular, so the raw and equilibrated polynomials are strictly equivalent in exact arithmetic and have the same finite eigenvalues. Row equilibration is therefore a numerical
preconditioning step, not a modification of the spectral problem. It does
not remove intrinsic non-normality or eigenvalue ill-conditioning. The equilibrated pencil is used for the primary eigensolution. The raw pencil is retained as an independent representation check and is solved in selected runs for which the additional cost is acceptable. Besides the raw/equilibrated eigenvalue comparison, the implementation verifies the assembled identities $M_q^{\mathrm{eq}}=DM_q^{\mathrm{raw}}$ and monitors the row-norm dynamic
range before and after scaling.

\subsection{Companion linearisation and generalized Schur solution}
\label{subsec:qep_linearisation}

The following construction applies to either representation. In the primary
solve, $M_q=M_q^{\mathrm{eq}}$ and $Q_n=Q_n^{\mathrm{eq}}$. In a raw-representation solve, $M_q=M_q^{\mathrm{raw}}$ and $Q_n=Q_n^{\mathrm{raw}}$. The quadratic eigenvalue problem is converted into a $2n$-dimensional generalized eigenvalue problem using the first companion linearisation~\cite{Tisseur2001}
\begin{equation}\label{eq:qep_linearisation}
\mathbb{A}\mathbf{v}=\Omega\mathbb{B}\mathbf{v},\qquad
\mathbb{A}=
\begin{pmatrix}
0&I\\
-M_0&-iM_1
\end{pmatrix},
\qquad
\mathbb{B}
=
\begin{pmatrix}
I&0\\
0&M_2
\end{pmatrix},
\end{equation}
where
\begin{equation}\label{eq:linearised_right_vector}
\mathbf{v}=
\begin{pmatrix}
\mathbf{a}\\
\Omega\mathbf{a}
\end{pmatrix}.
\end{equation}
Equation~\eqref{eq:qep_linearisation} avoids any inversion of $M_2$. Hence, a singular leading coefficient and the associated infinite generalized eigenvalues can be handled within the pencil rather than removed by an ad hoc regularisation. The generalized spectrum and right eigenvectors are computed in arbitrary precision with a generalized Schur, or QZ, decomposition~\cite{MolerStewart1973}. A generalized left eigenvector $\mathbf{w}$ satisfies
\begin{equation}\label{eq:adjoint_generalized_pencil}
\mathbb{A}^{\dagger}\mathbf{w}=\overline{\Omega}\,\mathbb{B}^{\dagger}\mathbf{w}.
\end{equation}
In the implementation, these vectors are obtained by solving the adjoint
generalized pencil and matching its finite eigenvalues one-to-one to the
complex conjugates of the eigenvalues of the original pencil. If
$\mathbf{w}=(\mathbf{x}^{\mathsf T},\mathbf{b}^{\mathsf T})^{\mathsf T}$,
then its lower block satisfies
\begin{equation}\label{eq:left_polynomial_vector}
Q_n(\Omega)^{\dagger}\mathbf{b}=\mathbf{0}
\end{equation}
and is therefore the left polynomial eigenvector used below. For the
equilibrated pencil, $\mathbf{b}_{\mathrm{eq}}$ is obtained in this way,
whereas the corresponding raw left vector is
\begin{equation}\label{eq:left_vector_scaling}
\mathbf{b}_{\mathrm{raw}}=D\mathbf{b}_{\mathrm{eq}},
\end{equation}
up to normalisation. Right and left polynomial vectors are normalised to
unit Euclidean norm. Their otherwise arbitrary phase is fixed by making a
largest-modulus component real and positive. The eigenvalue mismatch in
the original/adjoint pairing is retained as a diagnostic. The pairing is
a one-to-one proximity assignment and can become ambiguous inside a tight
cluster. Condition numbers are therefore not trusted when the pairing
error or the corresponding left residual is inaccurate.

\begin{proposition}[Exact equivalence of the finite representations]
\label{prop:qep-representation-equivalence}
Let $Q(\Omega)=M_0+i\Omega M_1+\Omega^2M_2$ be regular. If $R$ is
nonsingular and independent of $\Omega$, then $Q$ and $RQ$ have the same finite eigenvalues, algebraic multiplicities, and right nullspaces. Their left eigenvectors are related by
$b_{\mathrm{raw}}=R^{\dagger}b_{\mathrm{eq}}$. For the companion matrices in Eq.~\eqref{eq:qep_linearisation},
\begin{equation}\label{eq:companion-determinant-identity}
\det(\mathbb A-\Omega\mathbb B)=\det Q(\Omega).
\end{equation}
Moreover, $Q(\Omega)a=0$ for finite $\Omega$ if and only if
$(a^{\mathsf T},\Omega a^{\mathsf T})^{\mathsf T}$ is a generalized
eigenvector of $(\mathbb A,\mathbb B)$.
\end{proposition}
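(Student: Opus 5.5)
The proof splits into three independent pieces: the left-scaling invariance, the companion determinant identity, and the eigenvector correspondence. All three are finite-dimensional linear algebra, and regularity of $Q$ is used only so that "finite eigenvalue" and "algebraic multiplicity" are defined through the nonzero polynomial $\det Q(\Omega)$.

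\textbf{Left scaling.} Write $\det(RQ(\Omega))=\det R\,\det Q(\Omega)$, where $\det R\neq0$ is a constant. Hence the two determinant polynomials have the same roots with the same orders, so finite eigenvalues and algebraic multiplicities coincide, and $RQ$ is regular as well. Right nullspaces agree because $R$ is injective: $RQ(\Omega)a=0$ if and only if $Q(\Omega)a=0$. For left vectors, suppose $(RQ)^{\dagger}b_{\mathrm{eq}}=Q^{\dagger}R^{\dagger}b_{\mathrm{eq}}=0$. Then $R^{\dagger}b_{\mathrm{eq}}$ lies in the left nullspace of $Q$, which gives $b_{\mathrm{raw}}=R^{\dagger}b_{\mathrm{eq}}$. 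With $R=D$ real diagonal this reproduces Eq.~\eqref{eq:left_vector_scaling}.

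\textbf{Companion determinant.} Write
\[
\mathbb A-\Omega\mathbb B=\begin{pmatrix}-\Omega I & I\\ -M_0 & -iM_1-\Omega M_2\end{pmatrix}.
\]
Rather than divide by $\Omega$, I would factor it with unimodular block matrices. Right-multiplying by $\begin{pmatrix} I&0\\ \Omega I& I\end{pmatrix}$ (determinant $1$) adds $\Omega$ times the second block column to the first. This gives
\[
\begin{pmatrix}0 & I\\ -M_0-i\Omega M_1-\Omega^2M_2 & -iM_1-\Omega M_2\end{pmatrix}=\begin{pmatrix}0&I\\ -Q(\Omega)& *\end{pmatrix}.
\]
Its determinant is $\det\begin{pmatrix}0&I\\-Q&*\end{pmatrix}=(-1)^{n}\det(-Q)\cdot(\pm1)$. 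The main point needing care is the sign bookkeeping: swapping the $n$ columns of each block contributes $(-1)^{n^2}=(-1)^n$, and $\det(-Q)=(-1)^n\det Q$, so the signs cancel and the result is exactly $\det Q(\Omega)$. I would double-check this sign on $n=1$, where $\det\begin{pmatrix}-\Omega&1\\-m_0&-im_1-\Omega m_2\end{pmatrix}=m_0+i\Omega m_1+\Omega^2m_2$. As a polynomial identity it also holds at every $\Omega$, so regularity, the finite spectrum and algebraic multiplicities transfer to the pencil $(\mathbb A,\mathbb B)$.

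\textbf{Eigenvector correspondence.} Take $v=(a^{\mathsf T},c^{\mathsf T})^{\mathsf T}$. The first block row of $(\mathbb A-\Omega\mathbb B)v=0$ reads $c=\Omega a$. Substituting into the second block row gives $-M_0a-i\Omega M_1a-\Omega^2M_2a=0$, i.e.\ $Q(\Omega)a=0$. Conversely, if $Q(\Omega)a=0$, both rows hold for $v=(a,\Omega a)$. Since $v\neq0$ exactly when $a\neq0$, this is a bijection between the right nullspaces. This matches Eq.~\eqref{eq:linearised_right_vector} and completes the proposition.
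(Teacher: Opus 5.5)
Your proposal is correct. The left-scaling and eigenvector-correspondence parts coincide with the paper's argument essentially line by line.

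The one genuine difference is the companion determinant. The paper takes the Schur complement of the upper-left block $-\Omega I$, which requires $\Omega\neq0$. It obtains $\det(-\Omega I)\det(-\Omega^{-1}Q(\Omega))=\det Q(\Omega)$ and then extends the identity to $\Omega=0$ because both sides are polynomials. Your right multiplication by $\begin{pmatrix} I&0\\ \Omega I&I\end{pmatrix}$ has determinant $1$ for every $\Omega$, so it proves the identity at all frequencies at once. The continuation step you mention at the end is therefore unnecessary in your route.

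Your route also gives more than the scalar identity. Write $X=-iM_1-\Omega M_2$ and left-multiply further by the unimodular factor $\begin{pmatrix} I&0\\ -X&I\end{pmatrix}$. This reduces the block matrix to $\begin{pmatrix} 0&I\\ -Q(\Omega)&0\end{pmatrix}$, which is equivalent by constant nonsingular matrices to $\operatorname{diag}(I_n,Q(\Omega))$. That is the standard linearisation statement, and it preserves partial multiplicities, not only the algebraic ones. The Schur-complement computation is shorter but yields only the determinant equality.

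One cosmetic correction: the displayed value $(-1)^{n}\det(-Q)\cdot(\pm1)$ should read $(-1)^{n^2}\det(-Q)=(-1)^n(-1)^n\det Q=\det Q$. The stray $(\pm1)$ conflicts with your own sign count in the next sentence, which is correct, as is your $n=1$ check.
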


\begin{proof}
The identity $\det(RQ)=\det R\det Q$ shows that the determinant changes only by a nonzero constant, so all finite zeros and their algebraic multiplicities are preserved. Moreover, $RQ(\Omega)a=0$ if and only if $Q(\Omega)a=0$. If $b_{\mathrm{eq}}^\dagger RQ(\Omega)=0$, then $(R^\dagger b_{\mathrm{eq}})^\dagger Q(\Omega)=0$, which gives the stated left-vector relation. For $\Omega\neq 0$, the Schur complement of the upper-left block of $\mathbb A-\Omega\mathbb B$ gives
\begin{eqnarray}
\det(\mathbb A-\Omega\mathbb B)&=&\det(-\Omega I)
\det\left(-iM_1-\Omega M_2-\Omega^{-1}M_0\right),\notag\\
&=&\det(-\Omega I)\det\left(-\Omega^{-1}Q(\Omega)\right)
=\det Q(\Omega).
\end{eqnarray}
Both sides are polynomials in $\Omega$, so the identity also holds at
$\Omega=0$. Finally, writing a companion vector as $(u^{\mathsf T},
v^{\mathsf T})^{\mathsf T}$, its first block row gives $v=\Omega u$. After this substitution, the second block row is $-Q(\Omega)u=0$. This proves both directions of the eigenvector equivalence.
\end{proof}

\subsection{Residuals, backward errors, pseudospectra, and conditioning}
\label{subsec:residuals_conditioning}

Following the normwise polynomial-eigenvalue framework of~\cite{Tisseur2000, Tisseur2001}, for a computed polynomial eigenpair
$(\Omega,\mathbf{a})$ we define
\begin{equation}\label{eq:polynomial_residual}
\mathbf{r}=Q_n(\Omega)\mathbf{a},\qquad
\mathbf{r}_{\mathrm L}=Q_n(\Omega)^{\dagger}\mathbf{b}.
\end{equation}
In the following, the subscript $\star$ selects the matrix norm. For
$\star=2$, it denotes the spectral, or induced operator 2-norm,
$\|M\|_2:=\sigma_{\rm max}(M)=\sqrt{\lambda_{\rm max}(M^\dagger M)}$.
For $\star=F$, it denotes the Frobenius norm
$\|M\|_F:=\sqrt{\sum_{j,k}|M_{jk}|^2}$. In either case, the right scaled
residual is
\begin{equation}\label{eq:right_backward_error}
\eta_{\star}^{\mathrm R}(\Omega)=\frac{\|Q_n(\Omega)\mathbf{a}\|_2}
{\left(\|M_0\|_{\star}+|\Omega|\|M_1\|_{\star}+|\Omega|^2\|M_2\|_{\star}
\right)\|\mathbf{a}\|_2}.
\end{equation}
The corresponding left quantity is
\begin{equation}\label{eq:left_backward_error}
\eta_{\star}^{\mathrm L}(\Omega)=\frac{\|Q_n(\Omega)^{\dagger}\mathbf{b}\|_2}
{\left(\|M_0\|_{\star}+|\Omega|\|M_1\|_{\star}+|\Omega|^2\|M_2\|_{\star}
\right)\|\mathbf{b}\|_2}.
\end{equation}

\begin{proposition}[Exact normwise backward error]
\label{prop:exact-qep-backward-error}
Let $a\neq0$ and $b\neq0$, and let
\begin{equation}
s_\star(\Omega)=\|M_0\|_\star+|\Omega|\|M_1\|_\star
+|\Omega|^2\|M_2\|_\star>0,
\end{equation}
and consider complex, unstructured coefficient perturbations
\begin{equation}
\Delta Q(\Omega)=\Delta M_0+i\Omega\Delta M_1
+\Omega^2\Delta M_2,\qquad
\|\Delta M_q\|_\star\leqslant\varepsilon\|M_q\|_\star.
\end{equation}
For either $\star=2$ or $\star=F$, the smallest $\varepsilon$ for which
$[Q_n(\Omega)+\Delta Q(\Omega)]a=0$ is exactly
$\eta_\star^{\mathrm R}(\Omega)$. The analogous minimum subject to
$[Q_n(\Omega)+\Delta Q(\Omega)]^\dagger b=0$ is
$\eta_\star^{\mathrm L}(\Omega)$.
\end{proposition}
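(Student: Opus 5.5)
The proof has two parts: a lower bound showing that no admissible perturbation with $\varepsilon<\eta_\star^{\mathrm R}(\Omega)$ annihilates $a$, and an explicit rank-one construction attaining $\varepsilon=\eta_\star^{\mathrm R}(\Omega)$. This is the Tisseur argument for polynomial eigenproblems. I would treat the right problem first; the left statement then follows by applying it to the adjoint polynomial $Q_n(\Omega)^\dagger$. That polynomial has coefficient matrices $M_q^\dagger$ with the same norms, and for $Q_n(\Omega)^\dagger$ the middle coefficient is the matrix $(iM_1)^\dagger=-iM_1^\dagger$. The sign and conjugation are harmless because only $|\Omega|$ and $\|M_q\|_\star$ enter $s_\star$. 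Throughout, $\Omega$ is fixed, so $Q_n(\Omega)$ and $\Delta Q(\Omega)$ are just complex matrices.

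\textbf{Lower bound.} Suppose $[Q_n(\Omega)+\Delta Q(\Omega)]a=0$ with $\|\Delta M_q\|_\star\leqslant\varepsilon\|M_q\|_\star$. Then $r=Q_n(\Omega)a=-\Delta Q(\Omega)a$. Using $\|Ma\|_2\leqslant\|M\|_2\|a\|_2\leqslant\|M\|_F\|a\|_2$ for both choices of norm, the triangle inequality gives
\begin{equation}
\|r\|_2\leqslant\bigl(\|\Delta M_0\|_\star+|\Omega|\|\Delta M_1\|_\star+|\Omega|^2\|\Delta M_2\|_\star\bigr)\|a\|_2\leqslant\varepsilon\, s_\star(\Omega)\|a\|_2 .
\end{equation}
Hence $\varepsilon\geqslant\eta_\star^{\mathrm R}(\Omega)$.

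\textbf{Attainment.} This is the step that needs care, because the phases of $i\Omega$ and $\Omega^2$ must be absorbed so that the three contributions add coherently. Define unit complex numbers $\theta_0=1$, $\theta_1=\overline{i\Omega}/|\Omega|$ and $\theta_2=\overline{\Omega}^2/|\Omega|^2$; for $\Omega=0$ take $\theta_1=\theta_2=1$, since those terms then vanish. Set
\begin{equation}
\Delta M_q=-\frac{\|M_q\|_\star\,\theta_q}{s_\star(\Omega)}\,\frac{r\,a^\dagger}{\|a\|_2^2}.
\end{equation}
Then $\Delta Q(\Omega)a=-r\,\bigl(\|M_0\|_\star+|\Omega|\|M_1\|_\star+|\Omega|^2\|M_2\|_\star\bigr)/s_\star(\Omega)=-r$, so $[Q_n(\Omega)+\Delta Q(\Omega)]a=0$. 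Because $ra^\dagger$ has rank one, its spectral and Frobenius norms coincide and equal $\|r\|_2\|a\|_2$. Therefore $\|\Delta M_q\|_\star=\eta_\star^{\mathrm R}(\Omega)\|M_q\|_\star$ for both $\star=2$ and $\star=F$, and the minimum is attained exactly.

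\textbf{Left statement.} The same rank-one construction applies to $b$, with $r_{\mathrm L}=Q_n(\Omega)^\dagger b$ and perturbation $(\Delta M_q)^\dagger\propto r_{\mathrm L}b^\dagger$ with conjugated phases. Taking adjoints converts the condition on $[Q_n+\Delta Q]^\dagger b$ back into a condition on $\Delta M_q$ with the same norms, which yields $\eta_\star^{\mathrm L}(\Omega)$.
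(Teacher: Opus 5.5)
Your proposal is correct and follows essentially the same route as the paper. Both use the triangle-inequality lower bound via $\|\cdot\|_2\leqslant\|\cdot\|_\star$, the phase-aligned rank-one perturbation $\Delta M_q\propto\theta_q\|M_q\|_\star\,r a^\dagger$ that attains $\eta_\star^{\mathrm R}(\Omega)$ in both norms, and the left case obtained by applying the same argument to $Q_n(\Omega)^\dagger$, whose coefficient norms are unchanged. The only cosmetic difference is at $\Omega=0$: the paper sets $\Delta M_1=\Delta M_2=0$, whereas you keep admissible nonzero choices that are annihilated by the vanishing scalars, which is equally valid.
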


\begin{proof}
Let $\phi_0=1$, $\phi_1=i\Omega$, and $\phi_2=\Omega^2$. Any admissible
perturbation that cancels the right residual $r=Q_n(\Omega)a$ satisfies
$\|r\|_2=\|\Delta Q(\Omega)a\|_2\leqslant\varepsilon s_\star(\Omega)\|a\|_2$, because $\|\Delta M_q\|_2\leqslant\|\Delta M_q\|_\star$ for both allowed norms. This gives the lower bound
$\varepsilon\geqslant\eta_\star^{\mathrm R}$. For attainability, let us set $E=-ra^\dagger/\|a\|_2^2$ and, whenever $\phi_q\neq0$, we choose
\begin{equation}
\Delta M_q=\frac{\overline{\phi_q}}{|\phi_q|}
\frac{\|M_q\|_\star}{s_\star(\Omega)}E.
\end{equation}
Furthermore, we set $\Delta M_q=0$ when $\phi_q=0$. Then,
\begin{equation}
\Delta Q(\Omega)=\frac{\sum_q|\phi_q|\|M_q\|_\star}{s_\star(\Omega)}E=E,
\end{equation}
so $\Delta Q(\Omega)a=Ea=-r$. Since $E$ has rank one,
$\|E\|_2=\|E\|_F=\|r\|_2/\|a\|_2$, and each coefficient perturbation obeys $\|\Delta M_q\|_\star\leqslant\eta_\star^{\mathrm R}\|M_q\|_\star$. Equality holds for the nonzero phase factors. Thus, the lower bound is attained. Repeating the argument for
$Q_n(\Omega)^\dagger=\sum_q\overline{\phi_q}M_q^\dagger$, whose coefficient norms are unchanged, proves the left statement.
\end{proof}

\begin{corollary}[Exact finite-pencil pseudospectrum]
\label{cor:exact-qep-pseudospectrum}
Let $Q_n$ be a regular $n\times n$ matrix polynomial. For
$\varepsilon\geqslant 0$ and $\star\in\{2,F\}$, define the
coefficientwise-relative normwise pseudospectrum
\begin{equation}
\begin{aligned}
\Lambda_{\varepsilon,\star}(Q_n)
=\bigl\{z\in\mathbb C:&\ \text{there exist complex, unstructured }
\Delta M_0,\Delta M_1,\Delta M_2\text{ such that}\\
&\det[Q_n(z)+\Delta M_0+iz\Delta M_1+z^2\Delta M_2]=0,\\
&\|\Delta M_q\|_\star\leqslant
\varepsilon\|M_q\|_\star\ \text{for }q=0,1,2\bigr\}.
\end{aligned}
\label{eq:qep-pseudospectrum}
\end{equation}
If $s_\star(z)>0$, then
\begin{equation}
\label{eq:qep-pseudospectrum-singular-value}
z\in\Lambda_{\varepsilon,\star}(Q_n)
\quad\Longleftrightarrow\quad
\sigma_{\min}(Q_n(z))\leqslant\varepsilon s_\star(z).
\end{equation}
Consequently, the least admissible relative coefficient perturbation that makes the perturbed evaluation singular at $z$ is exactly
\begin{equation}
\frac{\sigma_{\min}(Q_n(z))}{s_\star(z)}.
\end{equation}
\end{corollary}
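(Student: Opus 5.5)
The plan is to reduce Corollary~\ref{cor:exact-qep-pseudospectrum} to Proposition~\ref{prop:exact-qep-backward-error}, with the right vector optimised over. Fix $z$ with $s_\star(z)>0$. Write $\Delta Q(z)=\Delta M_0+iz\Delta M_1+z^2\Delta M_2$. Singularity of $Q_n(z)+\Delta Q(z)$ is equivalent to the existence of a unit vector $a$ with $[Q_n(z)+\Delta Q(z)]a=0$. So $z\in\Lambda_{\varepsilon,\star}(Q_n)$ holds if and only if some unit $a$ admits an admissible perturbation annihilating $Q_n(z)+\Delta Q(z)$ on $a$. By the proposition, the least admissible $\varepsilon$ for a fixed $a$ is exactly $\eta_\star^{\mathrm R}=\|Q_n(z)a\|_2/s_\star(z)$.

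\textbf{Necessity.} If $z$ lies in the pseudospectrum, take any null vector $a$ of the perturbed evaluation. The lower-bound half of the proof of Proposition~\ref{prop:exact-qep-backward-error} applies verbatim: it uses only $\|\Delta M_q\|_2\leqslant\|\Delta M_q\|_\star$ for $\star\in\{2,F\}$. It gives
\[
\|Q_n(z)a\|_2\leqslant\varepsilon s_\star(z)\|a\|_2 .
\]
Since $\sigma_{\min}(Q_n(z))\leqslant\|Q_n(z)a\|_2/\|a\|_2$, we obtain $\sigma_{\min}(Q_n(z))\leqslant\varepsilon s_\star(z)$.

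\textbf{Sufficiency.} Take $a$ to be a right singular vector for $\sigma_{\min}(Q_n(z))$, normalised so that $\|a\|_2=1$. Then $\|Q_n(z)a\|_2=\sigma_{\min}$. Apply the attainability construction of the proposition with the rank-one $E=-Q_n(z)aa^\dagger$ and the phase-weighted split
\[
\Delta M_q=\frac{\overline{\phi_q}}{|\phi_q|}\,\frac{\|M_q\|_\star}{s_\star(z)}\,E .
\]
This produces perturbations with $\|\Delta M_q\|_\star\leqslant(\sigma_{\min}/s_\star)\|M_q\|_\star\leqslant\varepsilon\|M_q\|_\star$. They satisfy $[Q_n(z)+\Delta Q(z)]a=0$, so the determinant vanishes. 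Here one uses $\|E\|_2=\|E\|_F$ for rank one, as in the proposition.

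\textbf{Minimality.} Combining the two directions, the infimum of admissible $\varepsilon$ equals $\sigma_{\min}(Q_n(z))/s_\star(z)$. It is attained by the sufficiency construction, which gives the final "exact" statement. Regularity of $Q_n$ is used only so that the pseudospectrum is a meaningful subset of $\mathbb C$; it is not otherwise needed. The only delicate point is the norm bookkeeping in the Frobenius case. It is handled because the lower bound passes through the 2-norm while the attaining perturbation is rank one, so the two norms coincide there. I expect no genuine obstacle beyond checking the degenerate case $\phi_q=0$, which occurs for $q=1,2$ at $z=0$. In that case we set $\Delta M_q=0$, as in the proposition.
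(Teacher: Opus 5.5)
Your proposal is correct and follows essentially the same route as the paper: the forward direction bounds $\|Q_n(z)a\|_2\leqslant\varepsilon s_\star(z)\|a\|_2$ for a null vector $a$ of the perturbed evaluation and then uses the variational characterisation of $\sigma_{\min}$, while the converse applies the rank-one, phase-aligned construction of Proposition~\ref{prop:exact-qep-backward-error} to a unit right singular vector for $\sigma_{\min}(Q_n(z))$. Your explicit handling of the $\phi_q=0$ case and of the attainment of the infimum simply spells out what the paper leaves implicit.
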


\begin{proof}
Suppose first that $z\in\Lambda_{\varepsilon,\star}(Q_n)$. Choose a nonzero vector $a$ in the nullspace of the perturbed matrix in
Eq.~\eqref{eq:qep-pseudospectrum}. Proposition~\ref{prop:exact-qep-backward-error} then gives
\begin{equation}
\frac{\|Q_n(z)a\|_2}{s_\star(z)\|a\|_2}\leqslant\varepsilon.
\end{equation}
Minimising over $a\neq 0$ and using the variational characterisation of the smallest singular value proves the forward implication in
Eq.~\eqref{eq:qep-pseudospectrum-singular-value}. Conversely, let $a$ be a unit right singular vector of $Q_n(z)$ associated
with $\sigma_{\min}(Q_n(z))$. If $\sigma_{\min}(Q_n(z))\leqslant\varepsilon s_\star(z)$, the rank-one
construction in Proposition~\ref{prop:exact-qep-backward-error} supplies coefficient perturbations obeying the bounds in
Eq.~\eqref{eq:qep-pseudospectrum} and makes the perturbed evaluation singular. This proves the reverse implication. The same construction is attainable in both the spectral and Frobenius norms because its perturbation has rank one. The displayed minimum follows at once.
\end{proof}

This corollary gives an exact pseudospectrum for the stated finite polynomial pencil and perturbation model. Although the raw and row-equilibrated pencils are strictly equivalent, this pseudospectrum need not be invariant because row equilibration generally changes the coefficient norms and hence, the admissible perturbation set. It is neither a continuum pseudospectrum nor a test for a zero of the physical Jost determinant.

Both the right and left backward errors are evaluated for the raw and
equilibrated pencils. They measure the quality of a computed eigenpair
relative to the sizes of the coefficient matrices. They do not measure the discretisation error of the underlying differential problem. The
generalized-eigenvalue computation is checked independently through the
linearisation residuals
\begin{subequations}\label{eq:linearisation_residuals}
\begin{align}
\rho_{\mathrm{lin}}^{\mathrm R}
&=\frac{\|\mathbb{A}\mathbf{v}-\Omega\mathbb{B}\mathbf{v}\|_2}{\left(\|\mathbb{A}\|_F+|\Omega|\|\mathbb{B}\|_F\right)\|\mathbf{v}\|_2},
\label{eq:linearisation_right_residual}\\
\rho_{\mathrm{lin}}^{\mathrm L}
&=\frac{\|\mathbb{A}^{\dagger}\mathbf{w}-\overline{\Omega}\mathbb{B}^{\dagger}\mathbf{w}\|_2}{\left(\|\mathbb{A}\|_F+|\Omega|\|\mathbb{B}\|_F\right)
\|\mathbf{w}\|_2}.\label{eq:linearisation_left_residual}
\end{align}
\end{subequations}
Writing $\mathbf{v}=(\mathbf{v}_1^{\mathsf T},
\mathbf{v}_2^{\mathsf T})^{\mathsf T}$, we also monitor the companion-block
consistency
\begin{equation}\label{eq:block_consistency}
\rho_{\mathrm{block}}=\frac{\|\mathbf{v}_2-\Omega\mathbf{v}_1\|_2}
{\|\mathbf{v}_2\|_2+|\Omega|\|\mathbf{v}_1\|_2}.
\end{equation}
For a simple finite eigenvalue, the derivative of the matrix polynomial is
\begin{equation}\label{eq:qep_derivative}
Q_n'(\Omega)=iM_1+2\Omega M_2.
\end{equation}
The normwise absolute condition number used in the computations is
\begin{equation}\label{eq:absolute_condition_number}
\kappa_{\mathrm{abs},\star}(\Omega)=\frac{\left(\|M_0\|_{\star}
+|\Omega|\|M_1\|_{\star}+|\Omega|^2\|M_2\|_{\star}\right)\|\mathbf{a}\|_2\|\mathbf{b}\|_2}{\left|\mathbf{b}^{\dagger}Q_n'(\Omega)\mathbf{a}\right|},
\qquad \star\in\{2,F\},
\end{equation}
with relative counterpart
\begin{equation}\label{eq:relative_condition_number}
\kappa_{\mathrm{rel},\star}(\Omega)=\frac{\kappa_{\mathrm{abs},\star}(\Omega)}{|\Omega|}.
\end{equation}

\begin{proposition}[Exact first-order normwise condition number]
\label{prop:qep-first-order-condition-number}
Let $\Omega\neq 0$ be a simple finite eigenvalue with right and left vectors $a$ and $b$, so that $b^\dagger Q_n'(\Omega)a\neq0$. Let
$\mathfrak E_\star$ be the set of complex, unstructured coefficient directions $(E_0,E_1,E_2)$ satisfying $\|E_q\|_\star\leqslant\|M_q\|_\star$. For the local
eigenvalue branch of $M_q(\varepsilon)=M_q+\varepsilon E_q$,
\begin{equation}
\sup_{\mathfrak E_\star}|\Omega'(0)|
=\kappa_{\mathrm{abs},\star}(\Omega),\qquad
\sup_{\mathfrak E_\star}\frac{|\Omega'(0)|}{|\Omega|}
=\kappa_{\mathrm{rel},\star}(\Omega).
\label{eq:qep-exact-first-order-condition-number}
\end{equation}
Consequently, every fixed admissible direction obeys
\begin{align}
|\Omega(\varepsilon)-\Omega|
&\leqslant\kappa_{\mathrm{abs},\star}(\Omega)|\varepsilon|
+O(|\varepsilon|^2),\label{eq:qep-absolute-sensitivity-bound}\\
\frac{|\Omega(\varepsilon)-\Omega|}{|\Omega|}
&\leqslant\kappa_{\mathrm{rel},\star}(\Omega)|\varepsilon|
+O(|\varepsilon|^2).\label{eq:qep-relative-sensitivity-bound}
\end{align}
\end{proposition}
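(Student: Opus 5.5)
The plan is to derive the first-order perturbation formula for a simple eigenvalue of a regular matrix polynomial, and then maximise its modulus over the admissible coefficient directions. The same rank-one attainability idea used in Proposition~\ref{prop:exact-qep-backward-error} should show that the supremum is achieved.

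\emph{Step 1: the derivative formula.} Because $\Omega$ is simple and $b^\dagger Q_n'(\Omega)a\neq0$, $\Omega$ is a simple zero of $\det Q_n$. The perturbed polynomial $Q_n(z;\varepsilon)=\sum_q\phi_q(z)(M_q+\varepsilon E_q)$, with $\phi_0=1$, $\phi_1=iz$, $\phi_2=z^2$, is analytic in $(z,\varepsilon)$. By the implicit function theorem applied to $\det Q_n(z;\varepsilon)=0$, there is a unique analytic eigenvalue branch $\Omega(\varepsilon)$ near $\varepsilon=0$. I expect a simple eigenvalue to also admit an analytic right eigenvector $a(\varepsilon)$ with $a(0)=a$; this follows from analyticity of the one-dimensional kernel, or via the companion linearisation of Proposition~\ref{prop:qep-representation-equivalence}.

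Differentiate $Q_n(\Omega(\varepsilon);\varepsilon)a(\varepsilon)=0$ at $\varepsilon=0$ and multiply on the left by $b^\dagger$. Since $b^\dagger Q_n(\Omega)=0$, the term containing $a'(0)$ drops out, leaving
\begin{equation*}
\Omega'(0)=-\frac{b^\dagger\bigl(E_0+i\Omega E_1+\Omega^2E_2\bigr)a}{b^\dagger Q_n'(\Omega)a}.
\end{equation*}

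\emph{Step 2: the upper bound.} For any admissible direction,
\begin{equation*}
|b^\dagger E_q a|\leqslant\|E_q\|_2\|a\|_2\|b\|_2\leqslant\|M_q\|_\star\|a\|_2\|b\|_2,
\end{equation*}
since $\|\cdot\|_2\leqslant\|\cdot\|_F$. Summing over $q$ with weights $|\phi_q(\Omega)|$ gives $|\Omega'(0)|\leqslant\kappa_{\mathrm{abs},\star}(\Omega)$.

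\emph{Step 3: attainability.} Take the rank-one matrix $E=-ba^\dagger/(\|a\|_2\|b\|_2)$, which has $\|E\|_2=\|E\|_F=1$. When $\phi_q(\Omega)\neq0$, set
\begin{equation*}
E_q=\frac{\overline{\phi_q(\Omega)}}{|\phi_q(\Omega)|}\|M_q\|_\star E,
\end{equation*}
and set $E_q=0$ otherwise. Each $E_q$ then meets its norm constraint exactly. The phases align so that every term in $b^\dagger(\sum_q\phi_qE_q)a$ equals $-|\phi_q|\|M_q\|_\star\|a\|_2\|b\|_2$. Hence $|\Omega'(0)|$ equals $\kappa_{\mathrm{abs},\star}$, and the supremum is attained.

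Dividing by $|\Omega|\neq0$ gives the relative statement. The bounds \eqref{eq:qep-absolute-sensitivity-bound} and \eqref{eq:qep-relative-sensitivity-bound} follow from Taylor's theorem applied to the analytic branch: $\Omega(\varepsilon)-\Omega=\Omega'(0)\varepsilon+O(\varepsilon^2)$, with $|\Omega'(0)|\leqslant\kappa$.

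The main obstacle is Step 1, namely justifying the analytic eigenvector branch and confirming that simplicity is exactly what the hypothesis $b^\dagger Q_n'a\neq0$ encodes. I would handle this through the simple zero of the determinant, using the identity
\begin{equation*}
\tfrac{d}{dz}\det Q_n=\operatorname{tr}\bigl(\operatorname{adj}Q_n\,Q_n'\bigr).
\end{equation*}
For a rank-$(n-1)$ matrix, $\operatorname{adj}Q_n(\Omega)$ is proportional to $ab^\dagger$, so this derivative is a nonzero multiple of $b^\dagger Q_n'(\Omega)a$. The determinant therefore has a simple zero. Once simplicity is established, the kernel is a rank-one analytic family, and the eigenvector branch can be taken from the normalised columns of $\operatorname{adj}Q_n(\Omega(\varepsilon);\varepsilon)$.
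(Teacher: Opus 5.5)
Your proposal is correct and follows essentially the same route as the paper's proof. You differentiate $Q_n(\Omega(\varepsilon);\varepsilon)a(\varepsilon)=0$, annihilate the $a'(0)$ term with $b^\dagger$, bound each $|b^\dagger E_q a|$ by $\|M_q\|_\star\|a\|_2\|b\|_2$, and attain the bound with the phase-aligned rank-one direction built from $ba^\dagger$; your extra minus sign and the $\phi_q=0$ provision are harmless, since $\Omega\neq0$ makes all three phases defined, and your adjugate argument ($\frac{d}{dz}\det Q_n=\operatorname{tr}(\operatorname{adj}Q_n\,Q_n')$ with $\operatorname{adj}Q_n(\Omega)\propto ab^\dagger$) only fills in the detail behind the paper's one-line claim that simplicity yields a differentiable eigenvalue--eigenvector branch.
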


\begin{proof}
Simplicity supplies a differentiable local eigenvalue and eigenvector branch. If we differentiate $Q_{n,\varepsilon}(\Omega(\varepsilon))a(\varepsilon)=0$ at $\varepsilon=0$, the resulting term $Q_n(\Omega)a'(0)$ vanishes after left-multiplication by $b^\dagger$, and hence, we have
\begin{equation}
\Omega'(0)=-\frac{b^\dagger(E_0+i\Omega E_1+\Omega^2E_2)a}
{b^\dagger Q_n'(\Omega)a}.
\end{equation}
For each coefficient,
$|b^\dagger E_q a|\leqslant\|b\|_2\|E_q\|_2\|a\|_2
\leqslant\|b\|_2\|M_q\|_\star\|a\|_2$. The triangle inequality therefore gives $|\Omega'(0)|\leqslant\kappa_{\mathrm{abs},\star}(\Omega)$. To attain the bound, we put $\phi_0=1$, $\phi_1=i\Omega$, $\phi_2=\Omega^2$, and
\begin{equation}
U=\frac{ba^\dagger}{\|b\|_2\|a\|_2},\qquad
E_q=\frac{\overline{\phi_q}}{|\phi_q|}\|M_q\|_\star U.
\end{equation}
All three phase quotients are defined because $\Omega\neq0$. The rank-one matrix $U$ has both spectral and Frobenius norm one, while
$b^\dagger Ua=\|b\|_2\|a\|_2$. Thus, all three numerator contributions are phase aligned and the upper bound is attained. Division by $|\Omega|$ proves the relative identity. The local branch is analytic under these simple-root hypotheses, so its Taylor expansion gives the two stated local bounds.
\end{proof}

The proposition is local. In particular, the product
$\kappa_{\mathrm{rel},\star}\eta_\star^{\mathrm R}$ is a first-order consistency indicator because, without an eigenvalue-isolation radius and control of the higher-order term, it is not a nonasymptotic forward-error enclosure. The relative quantity is undefined at $\Omega=0$, which is excluded from the physical candidate set. Equations~\eqref{eq:absolute_condition_number}
and \eqref{eq:relative_condition_number} are scalar simple-eigenvalue
condition numbers. Near a multiple eigenvalue or a tightly clustered group, they must be interpreted with caution. In particular, a small residual can coexist with a very poorly determined eigenvalue when the derivative pairing $|\mathbf{b}^{\dagger}Q_n'(\Omega)\mathbf{a}|$ is small. The raw and equilibrated condition numbers are both retained because left row scaling changes the normwise coefficient-perturbation model even though it leaves the exact eigenvalues unchanged. When the raw generalized spectrum is available, an equilibrated eigenvalue is additionally assigned the representation gap
\begin{equation}\label{eq:raw_equilibrated_gap}
\Delta_{\mathrm{raw/eq}}(\Omega)=\min_{\lambda\in\sigma_{\mathrm{fin}}(Q_n^{\mathrm{raw}})}|\Omega-\lambda|,
\end{equation}
where $\sigma_{\mathrm{fin}}(Q_n^{\mathrm{raw}})$ denotes the finite
raw-pencil spectrum. Since the two pencils are equivalent in exact arithmetic, $\Delta_{\mathrm{raw/eq}}$ measures numerical representation sensitivity, not a second physical boundary-value problem. In a tight cluster, the nearest raw eigenvalue need not carry an unambiguous mode label, so this gap is interpreted together with the eigenvector and conditioning data.

\subsection{Coefficient tails and endpoint checks}
\label{subsec:coefficient_endpoint_diagnostics}

The Chebyshev coefficients provide a direct test of whether the computed
radial function is resolved in the chosen basis. Let
\begin{equation}\label{eq:tail_index_definition}
m_{10}=\max\{5,\lceil0.1n\rceil\},\qquad
m_{20}=\max\{8,\lceil0.2n\rceil\},
\end{equation}
and let $\mathbf{a}^{(m)}$ denote the final $m$ coefficients of $\mathbf{a}$. We record
\begin{subequations}\label{eq:coefficient_tail_metrics}
\begin{align}
t_{m,2}
&=\frac{\|\mathbf{a}^{(m)}\|_2}{\|\mathbf{a}\|_2},\qquad m\in\{m_{10},m_{20}\},
\label{eq:tail_l2_metric}\\
t_{m,\infty}
&=\frac{\|\mathbf{a}^{(m)}\|_{\infty}}{\|\mathbf{a}\|_{\infty}},
\qquad m\in\{m_{10},m_{20}\},\label{eq:tail_infinity_metric}\\
t_{\mathrm{last}}
&=\frac{|a_{n-1}|}{\max_{0\leqslant k<n}|a_k|}.\label{eq:last_coefficient_metric}
\end{align}
\end{subequations}
The fraction of non-increasing adjacent coefficient magnitudes within the
final $20\%$ tail is also retained. Decaying tails support spectral resolution,
whereas a flat or growing tail is a warning. These tests are heuristic:
coefficient decay at finite $n$ is neither a proof of analyticity nor a pole
criterion. Endpoint values are reconstructed analytically from the
coefficients using
\begin{subequations}\label{eq:chebyshev_endpoint_values}
\begin{align}
T_k(1)&=1,
&
T_k(-1)&=(-1)^k,
\label{eq:chebyshev_endpoint_function_values}\\
T_k'(1)&=k^2,
&
T_k'(-1)&=(-1)^{k-1}k^2,\qquad k\geqslant 1.
\label{eq:chebyshev_endpoint_derivative_values}
\end{align}
\end{subequations}
Consequently, we have
\begin{subequations}\label{eq:reconstructed_endpoint_data}
\begin{align}
\Phi_n(1)&=\sum_{k=0}^{n-1}a_k,
&
\Phi_n(-1)&=\sum_{k=0}^{n-1}(-1)^k a_k,
\label{eq:reconstructed_endpoint_values}\\
\Phi_n'(1)&=\sum_{k=1}^{n-1}k^2a_k,
&
\Phi_n'(-1)&=\sum_{k=1}^{n-1}(-1)^{k-1}k^2a_k.
\label{eq:reconstructed_endpoint_derivatives}
\end{align}
\end{subequations}
The reconstructed $\Phi_n(\pm 1)$ and $\Phi_n'(\pm 1)$ are substituted into
the endpoint relations already derived for the chosen compactification. If such a relation is written as
\begin{equation}\label{eq:generic_endpoint_relation}
R_{\pm}(\Omega)=c_{\pm,1}(\Omega)\Phi_n'(\pm1)+c_{\pm,0}(\Omega)\Phi_n(\pm1),
\end{equation}
its scaled residual is
\begin{equation}\label{eq:generic_endpoint_scaled_residual}
\eta_{\pm}^{\mathrm{end}}=\frac{|R_{\pm}(\Omega)|}
{|c_{\pm,1}(\Omega)|\,|\Phi_n'(\pm1)|+|c_{\pm,0}(\Omega)|\,|\Phi_n(\pm1)|},
\end{equation}
with a small arithmetic floor used only to avoid division by zero. For the
derivative-only infinity relation of the quadratic compactification, the
normalising scale is taken to be $\sum_{k=1}^{n-1}k^2|a_k|$. On a roots
grid the endpoint residuals are independent a posteriori tests because the
endpoints were not collocated. On a Lobatto grid they mainly verify the
correct enforcement of the two endpoint rows and therefore do not
constitute independent evidence of convergence.

\subsection{Cross-resolution matching and mode association}
\label{subsec:cross_resolution_matching}

\begin{proposition}[Reflection symmetry of the finite pencil]
\label{prop:finite-pencil-reflection-symmetry}
If $Q_n$ is regular and $M_0,M_1,M_2$ are real, then
\begin{equation}
\label{eq:finite-pencil-reflection-identity}
Q_n(-\overline\Omega)=\overline{Q_n(\Omega)}.
\end{equation}
Thus every finite eigenvalue $\Omega$ with right eigenvector $a$ has the
partner $-\overline\Omega$ with eigenvector $\overline a$ and the same
algebraic multiplicity.
\end{proposition}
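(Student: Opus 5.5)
The identity follows from a direct entrywise computation with the real coefficient matrices; the eigenvalue pairing is then read off by conjugation, and the multiplicity statement comes from the determinant. First, for any complex $\Omega$, substitute $-\overline\Omega$ into $Q_n(\Omega)=M_0+i\Omega M_1+\Omega^2M_2$. This gives
\begin{equation*}
Q_n(-\overline\Omega)=M_0-i\overline\Omega M_1+\overline\Omega^{\,2}M_2 .
\end{equation*}
Because $M_0,M_1,M_2$ are real, we have $\overline{M_q}=M_q$. Also $\overline{i\Omega}=-i\overline\Omega$ and $\overline{\Omega^2}=\overline\Omega^{\,2}$. Hence the right-hand side equals $\overline{M_0+i\Omega M_1+\Omega^2M_2}=\overline{Q_n(\Omega)}$, which is Eq.~\eqref{eq:finite-pencil-reflection-identity}. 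The real coefficients hold for both the raw and the equilibrated pencils, since $M_q^{\mathrm{raw}}\in\mathbb R^{n\times n}$ and $D$ is a real positive diagonal matrix.

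Second, let $Q_n(\Omega)a=0$ with $a\neq0$. Conjugating entrywise gives $\overline{Q_n(\Omega)}\,\overline a=0$, so by the identity $Q_n(-\overline\Omega)\overline a=0$. Since $\overline a\neq0$, the point $-\overline\Omega$ is a finite eigenvalue with right eigenvector $\overline a$. Regularity of $Q_n$ ensures that finite eigenvalues are exactly the zeros of the not identically vanishing polynomial $p(\Omega)=\det Q_n(\Omega)$. The same conjugation maps left eigenvectors: $b\mapsto\overline b$.

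Third, for the algebraic multiplicity, take determinants in the identity: $p(-\overline\Omega)=\overline{p(\Omega)}$ for all $\Omega$. Then $p$ has real coefficients in the variable $i\Omega$. Equivalently, the polynomial $\tilde p(\Omega):=\overline{p(-\overline\Omega)}$ coincides with $p$. If $p(\Omega)=(\Omega-\Omega_0)^m g(\Omega)$ with $g(\Omega_0)\neq0$, then
\begin{equation*}
p(\Omega)=\overline{p(-\overline\Omega)}=(-1)^m(\Omega+\overline{\Omega_0})^m\,\overline{g(-\overline\Omega)} .
\end{equation*}
The factor $\overline{g(-\overline\Omega)}$ does not vanish at $\Omega=-\overline{\Omega_0}$. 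So $-\overline{\Omega_0}$ is a zero of multiplicity exactly $m$.

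The only mildly delicate point is this multiplicity bookkeeping. In particular, one must keep algebraic multiplicity, defined via $\det Q_n$ as in Proposition~\ref{prop:qep-representation-equivalence}, separate from geometric multiplicity. The latter also transfers, because $a\mapsto\overline a$ is a real-linear bijection between the two nullspaces that preserves complex dimension. Finally, points on the NIA satisfy $-\overline\Omega=\Omega$, so there the statement is vacuous rather than a pairing.
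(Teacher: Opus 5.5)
Your proof is correct and follows the paper's route: you conjugate the real-coefficient pencil to get the identity, conjugate $Q_n(\Omega)a=0$ to obtain the partner eigenvector $\overline a$, and take determinants so that $\det Q_n(-\overline\Omega)=\overline{\det Q_n(\Omega)}$ carries zero orders across. Your explicit factorisation $p(\Omega)=(\Omega-\Omega_0)^m g(\Omega)$ only spells out the order-preservation step that the paper states in one line, and your remarks on left eigenvectors and geometric multiplicity are correct extras.
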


\begin{proof}
Equation~\eqref{eq:finite-pencil-reflection-identity} follows by conjugating
$M_0+i\Omega M_1+\Omega^2M_2$. If $Q_n(\Omega)a=0$, conjugation gives
$Q_n(-\overline\Omega)\overline a=0$. Taking determinants in
Eq.~\eqref{eq:finite-pencil-reflection-identity} gives
$\det Q_n(-\overline\Omega)=\overline{\det Q_n(\Omega)}$; the reflected
polynomial zeros therefore have the same orders, which proves the algebraic
multiplicity statement.
\end{proof}

The generalized pencil contains finite and infinite eigenvalues, zero or
near-zero roots, symmetry partners, and roots outside the frequency region
of interest. Before cross-resolution comparison, we retain finite damped
roots in a prescribed analysis window. By
Proposition~\ref{prop:finite-pencil-reflection-symmetry}, the spectrum is
symmetric under $\Omega\mapsto-\overline{\Omega}$. When convenient, only
the representative branch with non-negative real part is tabulated. A root
is classified numerically as lying on the negative imaginary axis only when
\begin{equation}\label{eq:nia_numerical_classification}
\operatorname{Im}\Omega<-\tau_{\mathrm I},\qquad
|\operatorname{Re}\Omega|\leqslant\tau_{\mathrm{NIA}},
\end{equation}
where the tolerances are numerical classification parameters, not physical
widths. Their values, together with the zero-mode and analysis-window
thresholds, are stored in the run metadata. Modes are associated across three resolutions $n_1<n_2<n_3$. The spectrum at the highest resolution is used as the anchor. Anchors are traversed in increasing lexicographic order of $(|\operatorname{Im}\Omega_3|,\operatorname{Re}\Omega_3,|\Omega_3|)$. For each anchor $\Omega_3$, the nearest unused roots $\Omega_2$ and $\Omega_1$ are selected from the two lower-resolution candidate sets. This is a greedy one-to-one association rather than a global assignment, so close clusters can make individual labels non-unique. We then form
\begin{equation}\label{eq:pairwise_resolution_drifts}
d_{12}=|\Omega_1-\Omega_2|,\qquad
d_{23}=|\Omega_2-\Omega_3|,\qquad
d_{13}=|\Omega_1-\Omega_3|,
\end{equation}
and accept the association only if
\begin{equation}\label{eq:triple_match_criterion}
\Delta_{123}\equiv\max\{d_{12},d_{23},d_{13}\}\leqslant\tau_{123},
\end{equation}
where
\begin{equation}\label{eq:triple_match_tolerance}
\tau_{123}=\max\left\{\tau_{\mathrm{abs}},\tau_{\mathrm{rel}}
\max\left(1,|\Omega_1|,|\Omega_2|,|\Omega_3|\right)
\right\}.
\end{equation}
The representative frequency of an accepted triple is $\Omega_3$, the
highest-resolution value. No arithmetic average is taken. We also report
\begin{equation}\label{eq:relative_resolution_drift}
\delta_{123}=\frac{\Delta_{123}}{\max(1,|\Omega_3|)}
\end{equation}
and, for compact presentation only,
\begin{equation}\label{eq:estimated_converged_digits}
N_{\mathrm{drift}}=-\log_{10}\delta_{123}.
\end{equation}
The latter is an empirical drift indicator, not a rigorous error bound. Frequency proximity is supplemented by a comparison of the right
Chebyshev-coefficient vectors. For a pair of resolutions $n_r$ and $n_s$,
let $m=\min(n_r,n_s)$. We truncate both vectors to their first $m$ entries, and
normalise them to unit Euclidean norm. With
\begin{equation}\label{eq:eigenvector_overlap}
\mathcal{O}_{rs}=\left|(\mathbf{a}^{(r)}_{1:m})^{\dagger}\mathbf{a}^{(s)}_{1:m}
\right|,
\end{equation}
the optimally phase-aligned difference is
\begin{equation}\label{eq:eigenvector_aligned_difference}
\delta^{\mathrm{vec}}_{rs}=\left\|\mathbf{a}^{(r)}_{1:m}-e^{-i\arg c_{rs}}\mathbf{a}^{(s)}_{1:m}\right\|_2,\qquad
c_{rs}=(\mathbf{a}^{(r)}_{1:m})^{\dagger}\mathbf{a}^{(s)}_{1:m}.
\end{equation}
For these unit vectors the two quantities obey the exact identity
\begin{equation}
\delta^{\mathrm{vec}}_{rs}=\sqrt{2(1-\mathcal O_{rs})},
\end{equation}
so they are equivalent summaries rather than independent diagnostics. High
overlap, or equivalently small aligned difference, supports the proposed mode
association. Both are basis-coordinate diagnostics, used only after expressing
the vectors in the same ordered Chebyshev basis; neither is an invariant
distance between continuum spectral subspaces. They do not remove the ambiguity caused by near-degenerate or
strongly non-normal clusters. The matching tolerance in Eq.~\eqref{eq:triple_match_criterion} is therefore
used only to associate modes. Acceptance of a triple is not, by itself, a
claim of convergence. The observed drift must be consistent with the backward errors, condition numbers, coefficient tails, endpoint checks, and precision dependence. Conversely, failure of a nearest-neighbour match at one working precision is not interpreted as physical disappearance of a mode when the relevant part of the spectrum is demonstrably ill-conditioned.

\subsection{Evidence hierarchy and practical reproducibility}
\label{subsec:evidence_hierarchy}

No single scalar diagnostic is used as a universal acceptance test. A
finite-pencil eigenpair is regarded as numerically credible only when the
following information is mutually consistent, i.e. the raw and equilibrated
representations, right and left polynomial residuals, linearisation
residuals, adjoint eigenvalue pairing, eigenvalue conditioning, Chebyshev
coefficient decay, endpoint residuals, cross-resolution frequency drift,
and cross-resolution eigenvector overlap. These quantities are interpreted
relative to the working precision and are then tested under changes of
resolution, arithmetic precision, collocation grid, compactification, and,
where applicable, formulation. The accompanying package supplies the Maple
assembly worksheets, machine-readable spectra and diagnostics, the Julia
Jost calculations, and the source data and scripts for the figures. Its
README records the software versions, units, inputs, and principal commands.
These materials are sufficient to inspect the numerical evidence without
burdening the article with workflow protocol. The logical hierarchy is
\begin{equation}
\label{eq:finite_pencil_evidence_hierarchy}
\text{small finite-pencil residual}
\;\not\!\Longrightarrow\;
\text{continuum convergence}
\;\not\!\Longrightarrow\;
\text{QNM pole}.
\end{equation}
Finally, all diagnostics in this section concern the finite-dimensional
polynomial eigenvalue problem. Small backward errors, moderate condition
numbers, decaying Chebyshev coefficients, endpoint consistency, and
resolution stability can establish that a root is a well-resolved eigenvalue
of a sequence of collocation pencils. They cannot establish that the same
frequency is an isolated pole of the Schwarzschild Green function. In particular, these tests do not by themselves distinguish a physical QNM pole from a cut-related finite representation or spectral pollution, and they say nothing about resonances on unphysical sheets not represented by the calculation. The physical-pole assessment is made only through the lateral Jost--Wronskian analysis described separately.

\section{Lateral Jost--Wronskian calculation and numerical diagnostics}\label{sec:jost_numerics}

The finite pencils generate candidate frequencies but do not determine pole character. We therefore evaluate directly the weighted determinant already defined in Eq.~\eqref{JostDeterminant}, using analytic representations of the horizon-ingoing and infinity-outgoing solutions of the original Regge--Wheeler equation. This construction is independent of the compactified matrix pencils and does not replace a Jost condition by compactified endpoint regularity. For exact solutions, the determinant is independent of the matching radius and satisfies Eq.~\eqref{JostConnectionRelation}. Note that the numerical objective is to compute the two lateral limits in Eq.~\eqref{LateralWronskians} and to decide whether an apparent small value is an isolated zero rather than a truncation, extrapolation, matching-radius, or normalisation effect. Throughout this section, $x=r/(2M)>1$, $\Omega=M\omega$, $F(x)=1-1/x$, and $\Omega=-i\alpha$ on the NIA.

\subsection{Analytic representations of the Jost solutions}
\label{subsec:jost_series}

Direct shooting in the lower half of the frequency plane is poorly suited to
this purpose because the required Jost sector can be exponentially subdominant to the linearly independent local solution. Instead, we employ the convergent Jaff\'e and Leaver--Tricomi-$U$ representations used in the analytic branch-cut construction of Refs.~\cite{Leaver1985PRSLA, Leaver1986PRD, CasalsOttewill2013PRD}. Let $\zeta=1-1/x$ with $0<\zeta<1$. The horizon-ingoing solution is represented as
\begin{equation}\label{eq:jost_num_horizon_factor}
f_H^{\mathrm{in}}(x,\Omega)=P_H(x,\Omega)J(x,\Omega),
\end{equation}
where
\begin{equation}\label{eq:jost_num_jaffe}
P_H(x,\Omega)=(x-1)^{-2i\Omega}x^{4i\Omega}e^{2i\Omega x},\quad
J(x,\Omega)=\sum_{n=0}^{\infty}a_n(\Omega)\zeta^n,\qquad
a_0=e^{-4i\Omega}.
\end{equation}
At truncation order $N$, the derivative is evaluated analytically, that is
\begin{equation}\label{eq:jost_num_horizon_derivative}
J_N'(x,\Omega)=\frac{1}{x^2}\sum_{n=1}^{N}n a_n \zeta^{n-1},\quad
\partial_x f_{H,N}^{\mathrm{in}}=P_H\left[J_N'+(\partial_x\log P_H)J_N\right],
\end{equation}
with
\begin{equation}
\partial_x\log P_H=-\frac{2i\Omega}{x-1}+\frac{4i\Omega}{x}+2i\Omega.
\end{equation}
The infinity-outgoing solution is represented by
\begin{equation}\label{eq:jost_num_infinity_factor}
f_\infty^{\mathrm{out}}(x,\Omega)=P_\infty(x,\Omega)H(x,\Omega),
\end{equation}
where
\begin{equation}\label{eq:jost_num_leaverU}
P_\infty(x,\Omega)=x^{1+s}(x-1)^{-2i\Omega}e^{2i\Omega x},\quad
H(x,\Omega)=\sum_{n=0}^{\infty}\widetilde a_n(\Omega)\mathscr U_n(x,\Omega),
\end{equation}
and
\begin{equation}\label{eq:jost_num_Ubasis}
\mathscr U_n(x,\Omega)=(1-4i\Omega)_nU\!\left(A_n,B,Z\right),\quad
A_n=s+1-4i\Omega+n,\qquad
B=2s+1,\qquad
Z=-4i\Omega x,
\end{equation}
with
\begin{equation}\label{eq:jost_num_U_normalization}
\widetilde a_0=(-4i\Omega)^{s+1-4i\Omega}.
\end{equation}
The complex power in Eq.~\eqref{eq:jost_num_U_normalization} and the Tricomi
function $U$ are evaluated on their principal branches. Consequently, for
$\Omega=\pm\delta-i\alpha$, the argument $Z$ approaches the negative real
axis from opposite sides. The same analytic representation therefore
implements the two continuations in Eq.~\eqref{LateralWronskians}. No
sheet-dependent ad hoc sign is inserted into the recurrence or the
prefactors. The derivative of the basis function is obtained from a contiguous relation. When its denominator is safely separated from zero, the implementation uses
\begin{equation}\label{eq:jost_num_contiguous_derivative}
\partial_x \mathscr U_n=\frac{A_n}{x}\left[\frac{n+1-s-4i\Omega}{n+1-4i\Omega}\mathscr U_{n+1}-\mathscr U_n
\right].
\end{equation}
Near a small value of $n+1-4i\Omega$, the code switches to the direct
identity
\begin{equation}\label{eq:jost_num_direct_U_derivative}
\partial_x \mathscr U_n=4i\Omega A_n(1-4i\Omega)_nU(A_n+1,B+1,Z),
\end{equation}
which follows from $\partial_ZU(A,B,Z)=-A\,U(A+1,B+1,Z)$. The switch avoids
introducing an artificial loss of accuracy through a nearly singular
contiguous formula. The derivative of the full infinity solution is then
\begin{equation}
\partial_x f_{\infty,N}^{\mathrm{out}}=P_\infty\left[H_N'+(\partial_x\log P_\infty)H_N\right],
\end{equation}
where
\begin{equation}
\partial_x\log P_\infty=\frac{1+s}{x}-\frac{2i\Omega}{x-1}+2i\Omega.
\end{equation}
The Jaff\'e and Leaver--$U$ coefficient sequences obey the same three-term
recurrence,
\begin{equation}\label{eq:jost_num_recurrence}
\alpha_n a_{n+1}+\beta_n a_n+\gamma_n a_{n-1}=0,\qquad a_{-1}=0,
\end{equation}
with the appropriate value of $a_0$ for the two representations. To avoid
ambiguity, $n$ in this subsection is the recurrence index, not the QNM
overtone $n_{\rm QNM}$; we set $\chi=2i\Omega$ to keep the frequency
dependence compact. The recurrence coefficients are
\begin{align}
\alpha_n&=(n+1)(n-2\chi+1),\\
\beta_n&=-\left[2n^2+(2-8\chi)n+8\chi^2-4\chi+\ell(\ell+1)+1-s^2\right],\\
\gamma_n&=n^2-4\chi n+4\chi^2-s^2.
\label{eq:jost_num_recurrence_coefficients}
\end{align}
These coefficients reproduce the recurrence in
Ref.~\cite{CasalsOttewill2013PRD} after converting its barred frequency to the
present convention $\bar\omega=2\Omega$. On the NIA, $\chi=2\alpha$, and the forward recurrence has a near-resonant index when $n+1\simeq4\alpha$. All NIA values are therefore approached with $\delta>0$. The recurrence is
never evaluated exactly at a singular lateral frequency. The baseline
truncation is chosen adaptively as
\begin{equation}\label{eq:jost_num_term_policy}
N_0(\alpha)=\max\left\{N_{\min},\left\lceil4\alpha\right\rceil+M_{\rm saf}\right\},
\end{equation}
where $M_{\rm saf}$ is a safety margin beyond the near-resonant index.
Successive levels $N_k=N_0+k\,\Delta N$ are compared rather than accepting a
single truncation. The calibration calculation fixed $N_{\min}=100$,
$M_{\rm saf}=180$, and $\Delta N=40$. Four term levels at 140 decimal digits
were sufficient over the lower and central parts of the ladder, while the
upper part was evaluated at 200 decimal digits with five term levels. The
precise term count used at each frequency is retained in the run metadata.

\subsection{Physical, regularised, and scale-free zero diagnostics}
\label{subsec:jost_zero_diagnostics}

Three related quantities are retained because each exposes a different
failure mode. First, the physical determinant $\mathcal D$ in Eq.~\eqref{JostDeterminant} is the quantity directly connected to the incoming connection coefficient through Eq.~\eqref{JostConnectionRelation}. Its numerical
value is computed as the difference
\begin{equation}
\mathcal D=T_1-T_2,\qquad
T_1=F f_H^{\mathrm{in}}\partial_x f_\infty^{\mathrm{out}},\qquad
T_2=F f_\infty^{\mathrm{out}}\partial_x f_H^{\mathrm{in}}.
\end{equation}
The cancellation indicator
\begin{equation}\label{eq:jost_num_cancellation}
c_{\mathcal D}=\frac{|\mathcal D|}{|T_1|+|T_2|}
\end{equation}
is recorded to reveal strong subtraction between large terms. A small
$c_{\mathcal D}$ is necessary near a determinant zero but is not sufficient
to establish one. Second, the horizon normalisation is regularised according to
\begin{equation}\label{eq:jost_num_regularized_D}
\widehat f_H^{\mathrm{in}}=-\sin(4\pi i\Omega)f_H^{\mathrm{in}},\qquad
\widehat{\mathcal D}=-\sin(4\pi i\Omega)\mathcal D.
\end{equation}
\begin{corollary}[Local order under sine regularisation]
\label{cor:sine-regularization-divisor}
On a domain avoiding $\{-ik/4:k\in\mathbb Z\}$, $\widehat{\mathcal D}$
and $\mathcal D$ have the same zero divisor. If $k\neq 0$ and
$\mathcal D$ admits a meromorphic germ at $\Omega_k=-ik/4$, then
\begin{equation}
\label{eq:sine-order-shift}
\operatorname{ord}_{\Omega_k}\widehat{\mathcal D}
=\operatorname{ord}_{\Omega_k}\mathcal D+1.
\end{equation}
A zero of $\widehat{\mathcal D}$ at a quarter point is therefore not, by itself, evidence for a physical pole.
\end{corollary}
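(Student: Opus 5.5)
The argument reduces to two facts about $s(\Omega):=-\sin(4\pi i\Omega)$. It is entire, and its zeros are all simple and located exactly on the set $\{-ik/4:k\in\mathbb Z\}$. With these in hand, the statement follows from the divisor calculus of Proposition~\ref{prop:intrinsic-jost-divisor}, applied to the single frame change $\widetilde f_H=s\,f_H^{\mathrm{in}}$ with $b\equiv1$.

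First I will locate the zeros of $s$. Writing $\sin(4\pi i\Omega)=0$ gives $4\pi i\Omega=k\pi$ with $k\in\mathbb Z$, hence $\Omega=-ik/4$. The derivative is $s'(\Omega)=-4\pi i\cos(4\pi i\Omega)$. At such a point $\cos(k\pi)=(-1)^k\neq0$, so every zero of $s$ is simple.

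On a domain $U$ avoiding these points, $s$ is holomorphic and nowhere zero. Proposition~\ref{prop:intrinsic-jost-divisor} with $a=s$ and $b=1$ then gives $\operatorname{div}_U\widehat{\mathcal D}=\operatorname{div}_U\mathcal D$. This is the first claim.

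At a quarter point $\Omega_k=-ik/4$ with $k\neq0$, suppose $\mathcal D$ has a meromorphic germ of order $m$, so $\mathcal D=(\Omega-\Omega_k)^m g$ with $g(\Omega_k)\neq0$. Since $s$ has a simple zero there, $s=(\Omega-\Omega_k)h$ with $h(\Omega_k)\neq0$. The product $\widehat{\mathcal D}=s\mathcal D$ is therefore $(\Omega-\Omega_k)^{m+1}hg$. The local order is thus additive, as in the meromorphic clause of Proposition~\ref{prop:intrinsic-jost-divisor}, and equals $m+1$. The case $k=0$ is excluded because the origin is the branch point, where no meromorphic germ exists.

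The final sentence of the statement follows directly. Take $m=0$, i.e.\ $\mathcal D$ regular and nonvanishing at $\Omega_k$. Then $\widehat{\mathcal D}$ still has a simple zero at $\Omega_k$, which comes purely from the regulariser. If instead $m=-1$, because the Jaff\'e normalisation has a pole from the near-resonant recurrence index $n+1=4\alpha$, then $\widehat{\mathcal D}$ has a finite nonzero value there. In both cases a zero of $\widehat{\mathcal D}$ at a quarter point says nothing about the physical divisor until the contribution of $s$ is subtracted.

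The only delicate point is the hypothesis that $\mathcal D$ admits a meromorphic germ at $\Omega_k$. The NIA points lie on the cut, so this should be read on a fixed lateral continuation, or on a one-sided neighbourhood. I will state explicitly that the order is computed for that continuation, and that the corollary assumes rather than proves meromorphy.
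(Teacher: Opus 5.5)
Your proof is correct and matches the paper's argument: you use the nonvanishing holomorphy of the sine factor away from the quarter points together with Proposition~\ref{prop:intrinsic-jost-divisor} for the divisor equality, and the nonzero derivative $\pm 4\pi i\cos(k\pi)$ to get a simple zero, so additivity of local orders gives the shift by one. Your extra remarks, that the germ is read on a fixed lateral continuation and the two cases $m=0$ and $m=-1$, are consistent with the paper's surrounding discussion of quarter-point normalisation poles.
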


\begin{proof}
Away from the quarter points, the sine factor is holomorphic and nowhere zero, so multiplication by it leaves the divisor unchanged by
Proposition~\ref{prop:intrinsic-jost-divisor}. At a quarter point,
\begin{equation}
\frac{d}{d\Omega}\sin(4\pi i\Omega)\bigg|_{\Omega=\Omega_k}
=4\pi i\cos(k\pi)\neq0,
\end{equation}
so its zero is simple. For the assumed meromorphic germ, additivity of local orders then gives Eq.~\eqref{eq:sine-order-shift}. The restriction $k\neq 0$ keeps this local meromorphic assertion separate from the branch point at $\Omega=0$.
\end{proof}
This factor removes the generic quarter-point poles associated with the chosen
horizon Jost normalisation and is useful when probing the nearly
quarter-spaced finite-pencil ladder. The implementation evaluates
$\widehat{\mathcal D}$ both from the regularised horizon series and from the
product in Eq.~\eqref{eq:jost_num_regularized_D}. Their difference is an
internal consistency test. The regularised determinant must not be used
naively at an exceptional point. In particular, for axial gravitational
perturbations with $s=\ell=2$, the algebraically special frequency is
$\Omega_{\mathrm{AS}}=-2i$. The sine factor vanishes there, but a
normalisation pole of $\mathcal D$ could in principle cancel that zero.
The direct calculation reported below instead finds a finite, nonzero
unregularised $\mathcal D$. Its regularised zero is consequently the local order shift in Corollary~\ref{cor:sine-regularization-divisor}, not a pole condition. At and near this point, the decisive quantities are $\mathcal D$, $A_{\mathrm{in}}$, and the mismatch defined below, supplemented by physical-half-plane contour tests. Third, we compute the scale-free logarithmic-derivative mismatch
\begin{equation}\label{eq:jost_num_mismatch}
\mathfrak m(\Omega;x)=F(x)\left(\frac{\partial_x f_\infty^{\mathrm{out}}}{f_\infty^{\mathrm{out}}}-\frac{\partial_x f_H^{\mathrm{in}}}{f_H^{\mathrm{in}}}
\right).
\end{equation}
Whenever neither Jost solution vanishes at the matching point, $\mathcal D=f_H^{\mathrm{in}}f_\infty^{\mathrm{out}}\,\mathfrak m$. Hence, $\mathfrak m=0$ is equivalent to $\mathcal D=0$ under this nonvanishing assumption. Unlike $\mathcal D$, $\mathfrak m$ is invariant under separate nonzero rescalings of the two Jost solutions. It is therefore the preferred scale-free zero indicator when the adopted Jost normalisations make $|\mathcal D|$ or $|\widehat{\mathcal D}|$ very large. Its nonzero value need not be independent of the matching radius, since
$\mathfrak m=\mathcal D/(f_H^{\mathrm{in}}f_\infty^{\mathrm{out}})$. What must be radius independent is the location of a genuine zero.

\subsection{Lateral extrapolation}\label{subsec:jost_lateral_extrapolation}

For each target $\alpha$, each side is evaluated on a descending set of positive offsets $\delta_1>\delta_2>\cdots>\delta_m>0$ in $\Omega_j^{\pm}=\pm\delta_j-i\alpha$, and the two edges are extrapolated independently. The production calculation uses a conservative low-order Neville construction rather than a single high-degree polynomial through all offsets. If the offsets are ordered from largest to smallest, the primary value $Q_0^{(2)}$ is the quadratic extrapolation through the three smallest offsets. It is compared with the cubic extrapolation $Q_0^{(3)}$ through the four smallest offsets and with the shifted quadratic $\widetilde Q_0^{(2)}$ through the preceding three offsets. For every $Q\in\{\mathcal D,\widehat{\mathcal D},A_{\mathrm{in}},\mathfrak m,f_H^{\mathrm{in}},f_\infty^{\mathrm{out}}\}$, the reported diagnostic is
\begin{equation}\label{eq:jost_num_delta_error}
\varepsilon_\delta[Q]=\max\!\left\{\operatorname{reldiff}\!\left(Q_0^{(2)},Q_0^{(3)}\right),\operatorname{reldiff}\!\left(Q_0^{(2)},\widetilde Q_0^{(2)}\right)\right\},
\end{equation}
where
\begin{equation}\label{eq:jost_num_reldiff}
\operatorname{reldiff}(z,w)=\frac{|z-w|}{\max\{1,|z|,|w|\}}.
\end{equation}
Despite the mnemonic name, this is a mixed scaled discrepancy: it is absolute when both arguments have magnitude
below one and relative to their larger magnitude above unit scale.
This is an empirical stability indicator rather than a rigorous extrapolation bound. The complete finite-$\delta$ sequence is retained, and a proposed zero is accepted only if it persists when the range and number of offsets are changed. Because the radial equation has real coefficients, the adopted branch convention should satisfy the edge symmetry
\begin{equation}\label{eq:jost_num_lip_symmetry}
Q^-(-i\alpha)=\overline{Q^+(-i\alpha)}
\end{equation}
for $Q=\mathcal D,\widehat{\mathcal D},\mathfrak m$. We monitor
\begin{equation}
 \varepsilon_{\mathrm{lip}}[Q]
 =\operatorname{reldiff}\!\left(Q^+,\overline{Q^-}\right).
 \label{eq:jost_num_lip_defect}
\end{equation}
Failure of Eq.~\eqref{eq:jost_num_lip_symmetry} at a level larger than the
truncation and arithmetic uncertainties signals an inconsistent branch,
special-function, or extrapolation evaluation. An optional branch-cut-strength diagnostic is
\begin{equation}\label{eq:jost_num_branch_strength}
q(\alpha)=\frac{f_\infty^{\mathrm{out},+}(-i\alpha)-f_\infty^{\mathrm{out},-}(-i\alpha)}{i f_\infty^{\mathrm{out}}(+i\alpha)}.
\end{equation}
With consistent lateral and positive-imaginary-axis normalisations, this
quantity should be real up to numerical error. It is used as a selected
branch-convention control rather than as a pole criterion.

\subsection{Truncation, matching-radius, and arithmetic diagnostics}
\label{subsec:jost_error_checks}

For a truncated series $S_N=\sum_{n=0}^{N}t_n$, we define the relative tail
indicator
\begin{equation}\label{eq:jost_num_tail}
\tau_N[S]=\frac{\displaystyle\max_{\max(0,N-w+1)\leqslant n\leqslant N}|t_n|}
 {\max\{\varepsilon_{\mathrm{floor}},|S_N|\}},
\end{equation}
where $w$ is a short terminal window. Separate values are recorded for the
physical horizon series, the regularised horizon series, and the infinity
series. For the infinity representation we use the larger of the function
and derivative tails. Tail decay is evidence that a particular series
summation is resolved. It is not evidence that the resulting determinant has
a zero. The change between the two largest term levels is monitored through
\begin{equation}\label{eq:jost_num_term_change}
\varepsilon_N[Q]=\max_{\pm}\operatorname{reldiff}\!\left(Q_{N_k}^{\pm},Q_{N_{k-1}}^{\pm}\right).
\end{equation}
The recurrence diagnostics
\begin{equation}
d_{\mathrm{rec}}=\min_n|\alpha_n|,\qquad
d_U=\min_n|n+1-4i\Omega|
\label{eq:jost_num_denominators}
\end{equation}
quantify proximity to the forward-recurrence and contiguous-derivative
singularities. The number of direct derivative fallbacks in Eq.~\eqref{eq:jost_num_direct_U_derivative} is also recorded. The Wronskians are evaluated at several matching radii $x_j>1$. Their relative spread is
\begin{equation}\label{eq:jost_num_radius_spread}
\varepsilon_x[Q]=\max_j\operatorname{reldiff}\!\left(Q(x_j),Q(x_1)\right),\qquad
Q\in\{\mathcal D,\widehat{\mathcal D}\}.
\end{equation}
A genuine determinant is independent of $x$, so Eq.~\eqref{eq:jost_num_radius_spread} is one of the strongest internal checks
on the simultaneous accuracy of both Jost solutions. The mismatch is also
reported at all radii, but its nonzero magnitude is not expected to be
constant. Rather, an inferred zero must remain at the same frequency for every
admissible matching radius. Tricomi $U$ is evaluated with arbitrary-precision complex ball arithmetic using the Arb/FLINT library through the Julia interface \texttt{Arblib.jl}~\cite{Johansson2017Arb}. The weakest available Arb accuracy in the run is recorded. The current implementation then extracts the complex midpoint of each $U$-ball and performs the recurrence summation, Wronskian algebra, and
extrapolation with arbitrary-precision \texttt{BigFloat} arithmetic. The
calculation is consequently a high-precision numerical assessment, but not a fully
interval-certified proof. Selected fixed targets, together with any credible
feature surviving the preceding checks, are repeated at higher working
precisions. Software versions, inputs, and representative outputs are
recorded in the accompanying repository rather than in the manuscript. The
principal diagnostics and their logical roles are summarised in
Table~\ref{tab:jost_diagnostics}.
\begin{table}[htbp]
\caption{Principal Jost/Wronskian diagnostics. A pass in any single row is not
sufficient for a pole declaration. The diagnostics expose distinct numerical
and analytic failure modes.}
\label{tab:jost_diagnostics}
\centering
\footnotesize
\begingroup
\setlength{\tabcolsep}{4pt}
\renewcommand{\arraystretch}{1.08}
\newcommand{\JDQuantity}[1]{\parbox[t]{0.135\textwidth}{\raggedright #1}}
\newcommand{\JDDefinition}[1]{\parbox[t]{0.295\textwidth}{\raggedright #1}}
\newcommand{\JDRole}[1]{\parbox[t]{0.500\textwidth}{\raggedright #1}}
\begin{tabular}{@{}lll@{}}
\toprule
\JDQuantity{Quantity}
& \JDDefinition{Definition or comparison}
& \JDRole{Numerical role} \\
\midrule
\JDQuantity{$\mathcal D$, $A_{\mathrm{in}}$}
& \JDDefinition{Eqs.~\eqref{JostDeterminant} and \eqref{JostConnectionRelation}}
& \JDRole{Physical zero condition away from exceptional normalisations.} \\
\JDQuantity{$\widehat{\mathcal D}$}
& \JDDefinition{Eq.~\eqref{eq:jost_num_regularized_D}}
& \JDRole{Removes generic quarter-point normalisation poles; not decisive at the algebraically special point.} \\
\JDQuantity{$\mathfrak m$}
& \JDDefinition{Eq.~\eqref{eq:jost_num_mismatch}}
& \JDRole{Normalisation-independent zero indicator, provided neither Jost solution vanishes at the matching point.} \\
\addlinespace[2pt]
\JDQuantity{$\tau_N$, $\varepsilon_N$}
& \JDDefinition{Eqs.~\eqref{eq:jost_num_tail} and \eqref{eq:jost_num_term_change}}
& \JDRole{Tests convergence under series truncation.} \\
\JDQuantity{$\varepsilon_\delta$}
& \JDDefinition{Eq.~\eqref{eq:jost_num_delta_error}}
& \JDRole{Tests stability of each one-sided $\delta\to0$ extrapolation.} \\
\JDQuantity{$\varepsilon_x$}
& \JDDefinition{Eq.~\eqref{eq:jost_num_radius_spread}}
& \JDRole{Tests Wronskian independence of the matching point.} \\
\JDQuantity{$\varepsilon_{\mathrm{lip}}$}
& \JDDefinition{Eq.~\eqref{eq:jost_num_lip_defect}}
& \JDRole{Tests the expected conjugacy of the two lateral continuations.} \\
\JDQuantity{$\varepsilon_{\mathrm{reg}}$}
& \JDDefinition{$\operatorname{reldiff}(\widehat{\mathcal D},-\sin(4\pi i\Omega)\mathcal D)$}
& \JDRole{Checks the independent physical and regularised horizon evaluations.} \\
\addlinespace[2pt]
\JDQuantity{$d_{\mathrm{rec}},d_U$}
& \JDDefinition{Eq.~\eqref{eq:jost_num_denominators}}
& \JDRole{Warns of recurrence or contiguous-derivative near resonances.} \\
\JDQuantity{$c_{\mathcal D}$}
& \JDDefinition{Eq.~\eqref{eq:jost_num_cancellation}}
& \JDRole{Measures subtractive cancellation in the Wronskian, without classifying its cause.} \\
\JDQuantity{Arb accuracy and precision repeat}
& \JDDefinition{Ball-accuracy report and fixed-target reruns}
& \JDRole{Tests special-function evaluation and the arithmetic floor.} \\
\JDQuantity{Contour winding and complex root}
& \JDDefinition{Argument principle and Newton refinement}
& \JDRole{Tests numerically for an isolated off-axis zero of a specified physical continuation; exact exclusion requires certified contour hypotheses.} \\
\bottomrule
\end{tabular}
\endgroup
\end{table}
\FloatBarrier

\subsection{Off-axis controls and local zero refinement}
\label{subsec:jost_offaxis_controls}

Before interpreting the branch cut, the method is calibrated on established
off-axis Schwarzschild QNMs. At each supplied control frequency, we evaluate
$\widehat{\mathcal D}$ and $\mathfrak m$ at successive term levels and
several matching radii. We then refine selected seeds with Newton iteration
on $\widehat{\mathcal D}$, which has the same nonexceptional off-axis zeros
as $\mathcal D$. The derivative is approximated along the real direction by
the fourth-order centred stencil
\begin{equation}\label{eq:jost_num_newton_derivative}
\widehat{\mathcal D}'(\Omega)\simeq\frac{-\widehat{\mathcal D}(\Omega+2h)+8\widehat{\mathcal D}(\Omega+h)-8\widehat{\mathcal D}(\Omega-h)+\widehat{\mathcal D}(\Omega-2h)}{12h}.
\end{equation}
For an analytic determinant, differentiation along the real direction gives
the complex derivative. The complete Newton history, including the complex
step and determinant value, is retained. Recovery of the known off-axis modes
with decreasing truncation error and matching-radius spread is a prerequisite
for interpreting any NIA calculation.

\subsection{Scans, physical-half-plane boxes, and the argument principle}
\label{subsec:jost_searches}

Evaluating the finite-pencil frequencies alone would be incomplete because a true Jost zero could lie between them. We therefore scan $\alpha$ on each edge using $|\widehat{\mathcal D}^{+}|$ as the primary zero diagnostic, while retaining $|\mathcal D^{+}|$, the complex phase, and the left lip as consistency checks. The logarithmic-derivative mismatch is kept only as an auxiliary quantity because it can become extremely small when the product of the two Jost amplitudes is large even though $\mathcal D$ is clearly separated from zero relative to the observed numerical variations in the adopted normalisation. Local minima, component sign changes, and unusually large phase increments are candidate generators only. Any credible feature must be rescanned on finer meshes and reevaluated with the full truncation, lateral-offset, matching-radius, and precision checks. A local minimum of $|\mathcal D|$, $|\widehat{\mathcal D}|$, or $|\mathfrak m|$ is never by itself labelled a pole. Possible off-axis zeros of the physical continuations are tested with rectangular contours that remain strictly in either $\operatorname{Re}\Omega>0$ or $\operatorname{Re}\Omega<0$. These contours are reached from the upper half-plane without crossing the NIA and therefore do not enter unphysical sheets. Along a counterclockwise contour $\Gamma$, the phase accumulation is computed from successive ratios of $\widehat{\mathcal D}$, that is
\begin{equation}\label{eq:jost_num_winding}
\mathcal N_\Gamma\simeq\frac{1}{2\pi}\sum_j\arg\left[\frac{\widehat{\mathcal D}(\Omega_{j+1})}{\widehat{\mathcal D}(\Omega_j)}\right].
\end{equation}

\Needspace{13\baselineskip}
\begin{proposition}[Exact contour criterion]
\label{prop:certified-argument-principle}
Let $U$ be an open subset of one fixed continuation sheet, let
$R\Subset U$ be a rectangle with positively oriented boundary, and let
$E$ be holomorphic on a neighbourhood of $\overline R$. If $E$ is nonzero
on $\partial R$, then
\begin{equation}
\label{eq:exact-argument-principle}
\operatorname{wind}(E(\partial R),0)
=\frac{1}{2\pi i}\int_{\partial R}\frac{E'(\Omega)}{E(\Omega)}\,d\Omega
\end{equation}
is the number of zeros of $E$ in $R$, counted with multiplicity. If $E$ is
meromorphic instead, the integral counts zeros minus poles.
\end{proposition}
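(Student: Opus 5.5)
This is the classical argument principle, so the plan is to reduce it to the residue theorem on a rectangle. There are two ingredients: a local computation of the residue of $E'/E$ at each zero or pole, and a global step converting the contour integral into a winding number.

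\textbf{Step 1: finiteness.} Since $E$ is holomorphic (or meromorphic) on a neighbourhood $V$ of $\overline R$ and nonzero on $\partial R$, its zeros and poles in $\overline R$ form a finite set. If there were infinitely many, they would accumulate at a point of the compact set $\overline R$. The identity theorem would then force $E\equiv0$ on the component of $V$ containing $\overline R$, contradicting $E\neq0$ on $\partial R$. For the meromorphic case, poles are isolated by definition and also cannot lie on $\partial R$, because $E$ is assumed finite and nonzero there. So $E'/E$ is meromorphic near $\overline R$ with finitely many singularities, all in the open interior.

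\textbf{Step 2: local residues.} At a zero or pole $\Omega_0$ of order $m\in\mathbb Z$, write $E(\Omega)=(\Omega-\Omega_0)^m g(\Omega)$ with $g$ holomorphic and nonvanishing near $\Omega_0$. Then
$E'/E=m/(\Omega-\Omega_0)+g'/g$, so the residue is $m$. This is the same additivity of local orders already used in Proposition~\ref{prop:intrinsic-jost-divisor}.

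\textbf{Step 3: residue theorem.} The rectangle is convex, hence simply connected, and a slightly enlarged open rectangle still lies in $V$. The residue theorem, or equivalently Cauchy's theorem for the positively oriented Jordan curve $\partial R$, gives
\begin{equation*}
\frac{1}{2\pi i}\int_{\partial R}\frac{E'}{E}\,d\Omega=\sum_{\Omega_0\in R}\operatorname{ord}_{\Omega_0}E .
\end{equation*}
The right-hand side is the number of zeros in the holomorphic case, and zeros minus poles in the meromorphic case.

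\textbf{Step 4: identification with the winding number.} Parametrise $\partial R$ piecewise smoothly by $\gamma$. Then $\int_{\partial R}E'/E\,d\Omega=\int_{E\circ\gamma}dw/w$, which equals $2\pi i\,\operatorname{wind}(E(\partial R),0)$. This holds because $E\circ\gamma$ is a closed piecewise-$C^1$ curve that avoids $0$, and along it a continuous branch of $\log E$ changes by $2\pi i$ times the winding number.

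\textbf{Main obstacle.} The only point needing care is the sheet hypothesis. The branch point $\Omega=0$ and the NIA cut must not meet $\overline R$, so that $E$ is a single-valued holomorphic or meromorphic function on a neighbourhood of the rectangle. This is exactly what $R\Subset U$, with $U$ inside one fixed continuation sheet, guarantees. I will state explicitly that the sum over $j$ in Eq.~\eqref{eq:jost_num_winding} is only a discretisation of this exact integral, and is not covered by the proposition.
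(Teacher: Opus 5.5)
Your proposal is correct and follows the paper's proof: the local factorisation $E=(\Omega-\Omega_0)^m g$ gives residue $m$ (or $-m$ at a pole) for $E'/E$, the residue theorem turns the contour integral into the signed count, and the integral equals the net change of a continuous argument of $E$ along $\partial R$, which is the winding number. Your finiteness argument and the explicit substitution $w=E\circ\gamma$ only fill in details that the paper leaves implicit.
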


\begin{proof}
If $\Omega_0$ is a zero of order $m$, write
$E(\Omega)=(\Omega-\Omega_0)^m h(\Omega)$ with $h(\Omega_0)\neq0$. Then
$E'/E=m/(\Omega-\Omega_0)+h'/h$, so the logarithmic derivative has residue
$m$ at $\Omega_0$. The residue theorem therefore turns the integral into the
sum of the zero multiplicities inside $R$. Along the nonvanishing boundary,
the same integral is the net change of a continuous argument of $E$ divided
by $2\pi$, which is the winding number. If $E$ has a pole of order $m$, the
same factorisation with exponent $-m$ contributes residue $-m$; hence the
meromorphic version counts zeros minus poles.
\end{proof}

Topologically, the left-hand side of
Eq.~\eqref{eq:exact-argument-principle} is the degree of the map
$E|_{\partial R}:\partial R\to\mathbb C^*$. It is unchanged under any
homotopy that keeps the contour image in $\mathbb C^*$. This explains both
the robustness of a well-resolved winding count and the necessity of
excluding boundary zeros before assigning it a zero count. The determinant-line viewpoint also makes this topological integer
independent of the chosen local Jost frames. Indeed, if
$\widetilde E=abE$, where $a$ and $b$ are holomorphic and nowhere zero on a neighbourhood of $\overline R$, then the boundary map $ab:\partial R\to
\mathbb C^*$ has zero winding by the argument principle. Hence,
\begin{equation}
\operatorname{wind}(\widetilde E(\partial R),0)=\operatorname{wind}(E(\partial R),0).
\end{equation}
This is the topological counterpart of
Proposition~\ref{prop:intrinsic-jost-divisor}, i.e. determinant magnitudes depend on the local trivialisation, whereas the zero count does not. If $ab$ is meromorphic and finite and nonzero on $\partial R$, the winding instead changes by $Z_{ab}(R)-P_{ab}(R)$, counted with multiplicity. This normalisation divisor must be accounted for explicitly. Proposition~\ref{prop:certified-argument-principle} is the exact theorem. The present midpoint-based computation supplies stable numerical winding evidence but does not certify every hypothesis. Notice that the full contour image is not enclosed by intervals, and a pole-free holomorphic representation on each closed rectangle is not proved here. A certification-oriented protocol would require nonvanishing enclosures on the boundary, separation from the
truncation and arithmetic scales, stability under contour, precision, and series refinement, and compatibility of nested contours. It would also require every sampled principal phase increment to remain below $\pi$ and to decrease under subdivision. The present strip calculation checks sampled nonvanishing, boundary-point doubling, phase-increment decrease, the regularisation identity, and known-QNM and empty-contour calibrations. It holds the precision, matching radius, term count, and rectangle geometry fixed. We therefore report its result as resolution-refined numerical winding evidence, not as an accepted computer-assisted zero-free theorem. It says nothing about a zero on the cut, in the near-cut gap, or on an unphysical sheet. Any credible minimum from a scan would require targeted refinement at several term counts, radii, and precisions.

\subsection{Physical-pole criteria and numerical checks}
\label{subsec:jost_classification}

The numerical evidence is interpreted through a hierarchy rather than a
single threshold. A finite-pencil NIA candidate is identified as an isolated
physical pole only if a specified physical lateral determinant converges to
zero at one isolated location and, at a matching radius where neither Jost
solution vanishes, the scale-free mismatch does likewise. That location must
also survive all of the following changes
\begin{enumerate}
\item 
decreasing the lateral offset and altering the extrapolation window;
\item 
increasing the Jaff\'e and Leaver--$U$ truncations;
\item 
changing the matching radius;
\item 
increasing the arithmetic and special-function precision;
\item 
replacing the physical determinant by the scale-free mismatch and, away from exceptional points, by the regularised determinant;
\item 
a compatible complex-root or argument-principle calculation in the same physical half-plane continuation.
\end{enumerate}
No individual diagnostic promotes a frequency to pole status. In particular, a small mismatch indicator, a small finite-$\delta$ determinant, or a stable finite-pencil frequency remains an observation until the full hierarchy is satisfied. Table~\ref{tab:jost_checks} organises the calculations by their
scientific purpose. Off-axis QNMs fix the signs and normalisations.
Representative NIA points set adequate numerical parameters. Candidate-point evaluations test the proposed ladder directly. A uniform mesh tests the gaps between those points, and off-cut contours test for nearby zeros in the physical continuations actually computed.
\begin{table}[htbp]
\caption{Decision protocol for the Jost/Wronskian calculations. The first
five rows describe calculations actually performed; the final row is a
conditional follow-up used only if an earlier diagnostic generates a credible
feature. Numerical parameters are retained as run metadata. No row alone
establishes a physical pole.}
\label{tab:jost_checks}
\centering
\footnotesize
\begingroup
\setlength{\tabcolsep}{4pt}
\renewcommand{\arraystretch}{1.08}
\newcommand{\JCStage}[1]{\parbox[t]{0.180\textwidth}{\raggedright #1}}
\newcommand{\JCCalculation}[1]{\parbox[t]{0.340\textwidth}{\raggedright #1}}
\newcommand{\JCPurpose}[1]{\parbox[t]{0.410\textwidth}{\raggedright #1}}
\begin{tabular}{@{}lll@{}}
\toprule
\JCStage{Check} & \JCCalculation{Calculation} & \JCPurpose{Scientific role} \\
\midrule
\JCStage{Special-function and off-axis calibration}
& \JCCalculation{Tricomi-$U$ identity, known off-axis QNM seeds, one Newton recovery, and a few NIA probes}
& \JCPurpose{Checks the special-function interface, signs, normalisations, derivatives, and output diagnostics.} \\
\JCStage{Representative NIA calibration}
& \JCCalculation{The algebraically special neighbourhood and selected low, central, and upper finite-pencil ladder points}
& \JCPurpose{Sets adequate precision, term margin, matching radii, and lateral-offset window.} \\
\JCStage{Candidate-point evaluations}
& \JCCalculation{Both lips at every regular finite-pencil candidate, in batches}
& \JCPurpose{Tests whether any finite-pencil location is also an isolated lateral Jost zero.} \\
\JCStage{Uniform NIA mesh}
& \JCCalculation{A uniform $0.05$-spaced $\alpha$-mesh; local refinement only if a credible feature appears}
& \JCPurpose{Detects minima or zeros between the finite-pencil frequencies.} \\
\JCStage{Off-cut boundary sampling}
& \JCCalculation{Known-QNM and empty-box controls, representative and algebraically-special-adjacent boxes, and full-strip boundaries in both physical continuations}
& \JCPurpose{Computes sampled numerical winding evidence for nearby off-axis zeros; it neither certifies a zero count nor probes unphysical sheets.} \\
\JCStage{Triggered complex refinement}
& \JCCalculation{Conditional Newton refinement and repeats at several precisions, term counts, radii, offsets, and contours}
& \JCPurpose{Tests a persistent minimum for an isolated zero; the uniform NIA mesh triggered no such refinement.} \\
\bottomrule
\end{tabular}
\endgroup
\end{table}
\FloatBarrier
Within the computed physical continuations, we reserve the term \emph{isolated
physical pole} for a zero of the specified physical lateral continuation that
survives all reported variations in truncation, matching radius, precision,
and contour. A finite-pencil candidate with nonzero physical lateral
determinants and no stable nearby complex zero is disfavoured as a pole in the
tested region, but that negative result does not prove that the nodes
approximate the cut. Such a positive interpretation requires the weighted
convergence criterion in Sec.~\ref{sec:keldysh_cut_limit}. An
apparent zero created only by the sine regularisation is a normalisation
artefact; this is precisely what happens when $\mathcal D$ is finite at the
axial algebraically special point but $\widehat{\mathcal D}$ vanishes. A minimum whose location or depth changes materially with the numerical
parameters remains unresolved. A resonance on an unphysical sheet lies outside this classification because assessing it requires an explicit continuation through the NIA cut, as in the Schwarzschild analyses of Refs.~\cite{MaassenVanDenBrink2000PRD, Leung2003CQG}. The hierarchy is deliberately stricter than finite-pencil convergence. It separates an accurate eigenpair of a discretised compactified problem from an
isolated zero of the physical horizon--infinity Jost determinant.

\section{Finite-pencil spectra and representation dependence}
\label{sec:finite_pencil_results}

We now report the finite-dimensional spectra in the axial gravitational sector
$(s,\ell)=(2,2)$. On the NIA we write $\Omega=-i\alpha$, with $\alpha>0$.
The term \emph{finite-pencil root} denotes an eigenvalue of a discretised
quadratic pencil; it does not imply an isolated pole of the analytically
continued Green function. We compare three realisations. C1--R uses
$x=2/(1-y)$ on Chebyshev roots, C1--L uses the same compactification on the
endpoint-inclusive Chebyshev--Lobatto grid, and C2--R uses
$x=4/(1-y)^2$ on Chebyshev roots. Together they separate arithmetic precision,
collocation and endpoint-row treatment, and radial compactification.
Table~\ref{tab:finite_pencil_configurations} summarises the computations. Unless
stated otherwise, a quoted frequency is the value at the largest resolution
in its three-resolution window, not an average over that window.

\subsection{Finite-pencil configurations and controls}
\label{subsec:finite_pencil_configurations}

\begin{table*}[t]
\caption{Finite-pencil configurations for the axial gravitational
$(s,\ell)=(2,2)$ sector. In each pair $(d_{\rm M},d_{\rm J})$, $d_{\rm M}$ is
the decimal precision of the Maple matrix export and $d_{\rm J}$ is the Julia
working precision. The $d_{\rm M}=50$ calculations check the implementation;
$d_{\rm M}=100$ provides the resolution census and $d_{\rm M}=200$ the
high-precision comparison.}
\label{tab:finite_pencil_configurations}
\centering
\small
\begingroup
\setlength{\tabcolsep}{4pt}
\renewcommand{\arraystretch}{1.08}
\begin{tabular*}{0.94\textwidth}{@{\extracolsep{\fill}}lllcc@{}}
\toprule
Realisation & Compactification & Grid
& $(d_{\rm M},d_{\rm J})$ & Resolutions \\
\midrule
C1--R & $x=2/(1-y)$ & Chebyshev roots
 & $(100,130)$ & $n=40,50,\ldots,160$ \\
 & & & $(200,230)$ & $n=120,130,140,150,160$ \\
\addlinespace[2pt]
C1--L & $x=2/(1-y)$ & Chebyshev--Lobatto
 & $(50,80)$ & $n=40,45,50$ \\
 & & & $(200,230)$ & $n=120,130,140,150,160$ \\
\addlinespace[2pt]
C2--R & $x=4/(1-y)^2$ & Chebyshev roots
 & $(50,80)$ & $n=40,45,50$ \\
 & & & $(100,130)$ & $n=40,50,\ldots,160$ \\
 & & & $(200,230)$ & $n=120,130,140,150,160$ \\
\bottomrule
\end{tabular*}
\endgroup
\end{table*}

Every production calculation returned exactly $2n$ finite generalized
eigenvalues. At $d_{\rm M}=200$, the exported equilibrated matrices obeyed
$M_q^{\rm eq}=D M_q^{\rm raw}$ to approximately $5\times10^{-201}$ in the
maximum-entry norm, while the scaled block-row norms differed from unity by at
most approximately $10^{-199}$. Independently executed copies of every shared
resolution were byte-for-byte identical. These checks establish reproducibility of the finite-dimensional problems before any physical interpretation of their spectra is attempted. The fixed-resolution precision comparison is particularly informative. The full NIA counts are shown in Table~\ref{tab:nia_counts_precision}. For C1--R, the $d_{\rm M}=100$ calculation loses a substantial part of the NIA spectrum at $n=150$ and $160$, whereas the $d_{\rm M}=200$ calculation restores the nearly linear count. In contrast, C2--R has identical $d_{\rm M}=100$ and $d_{\rm M}=200$ counts at every common resolution. Thus, the high-resolution failure of the original C1--R census was an arithmetic-precision ceiling, while the different C2--R spectral organisation is not.

\begin{table}[t]
\caption{Full NIA eigenvalue counts at the resolutions common to the
$d_{\rm M}=100$ and $d_{\rm M}=200$ calculations. Counts refer to the complete finite
spectra, not only to the $|\Omega|\leqslant100$ matching window.}
\label{tab:nia_counts_precision}
\centering
\small
\begingroup
\setlength{\tabcolsep}{6pt}
\renewcommand{\arraystretch}{1.08}
\begin{tabular}{@{}ccccc@{}}
\toprule
& \multicolumn{2}{c}{C1--R} & \multicolumn{2}{c}{C2--R} \\
\cmidrule(lr){2-3}\cmidrule(l){4-5}
$n$ & $d_{\rm M}=100$ & $d_{\rm M}=200$ & $d_{\rm M}=100$ & $d_{\rm M}=200$ \\
\midrule
120 & 124 & 124 & 140 & 140 \\
130 & 132 & 132 & 160 & 160 \\
140 & 146 & 146 & 170 & 170 \\
150 & 120 & 154 & 178 & 178 \\
160 & 116 & 166 & 194 & 194 \\
\bottomrule
\end{tabular}
\endgroup
\end{table}

\subsection{Off-axis quasinormal-mode controls}
\label{subsec:off_axis_controls}

The ordinary damped off-axis modes provide the primary implementation control,
because their frequencies should be independent of the compactification and
of whether roots or Lobatto points are used. Table~\ref{tab:off_axis_controls}
uses C1--R at $n=160$ as the reference and compares it with C1--L and C2--R.
The modes are ordered by increasing damping. In every row, the difference
between numerical realisations is smaller than the largest unresolved
three-resolution drift among the three calculations. The expected compactification-independent spectrum is therefore recovered before the NIA
population is interpreted.

\begin{table*}[t]
\caption{First six off-axis control modes at $n=160$. Here
$\Delta_{\rm CL}=|\Omega_{\rm C1-L}-\Omega_{\rm C1-R}|$ and
$\Delta_{\rm C2}=|\Omega_{\rm C2-R}-\Omega_{\rm C1-R}|$. The last column is
the largest three-resolution drift among C1--R, C1--L, and C2--R.}
\label{tab:off_axis_controls}
\centering
\small
\begingroup
\setlength{\tabcolsep}{4pt}
\renewcommand{\arraystretch}{1.08}
\begin{tabular*}{0.96\textwidth}{@{\extracolsep{\fill}}cccccc@{}}
\toprule
$j$
& \shortstack{$\operatorname{Re}\Omega_{\rm C1-R}$}
& \shortstack{$-\operatorname{Im}\Omega_{\rm C1-R}$}
& $\Delta_{\rm CL}$ & $\Delta_{\rm C2}$
& \shortstack{Largest\\drift} \\
\midrule
0 & 0.373671684418 & 0.088962315689 & $8.65\times10^{-34}$ & $5.41\times10^{-34}$ & $4.95\times10^{-31}$ \\
1 & 0.346710996879 & 0.273914875291 & $3.54\times10^{-24}$ & $2.43\times10^{-24}$ & $3.31\times10^{-22}$ \\
2 & 0.301053454612 & 0.478276983223 & $1.36\times10^{-16}$ & $1.02\times10^{-16}$ & $3.27\times10^{-15}$ \\
3 & 0.251504962185 & 0.705148202417 & $1.99\times10^{-11}$ & $1.62\times10^{-11}$ & $2.06\times10^{-10}$ \\
4 & 0.207514606868 & 0.946844894625 & $3.10\times10^{-8}$ & $2.96\times10^{-8}$ & $1.71\times10^{-7}$ \\
5 & 0.169297834948 & 1.195605594871 & $3.11\times10^{-6}$ & $2.13\times10^{-6}$ & $1.17\times10^{-5}$ \\
\bottomrule
\end{tabular*}
\endgroup
\end{table*}
An external continued-fraction comparison gives the same conclusion. After
converting Berti's tabulation from $2M\omega$ to $\Omega=M\omega$, the
absolute C1--R discrepancies for overtones $j=0,\ldots,5$ are respectively
$1.12\times10^{-16}$, $5.55\times10^{-17}$, $1.24\times10^{-16}$,
$1.63\times10^{-11}$, $2.73\times10^{-8}$, and $2.92\times10^{-6}$~\cite{BertiRingdownData}. Each is below the corresponding largest
three-resolution drift in Table~\ref{tab:off_axis_controls}. The control is
therefore external to the three collocation realisations, not merely an
agreement among them.
The algebraically special finite-pencil root is likewise recovered as $\Omega_{\rm AS}=-2i$ in all three realisations. At $d_{\rm M}=200$, its roots-versus-Lobatto difference is approximately $1.6\times10^{-175}$, and the C2--R value at $n=160$ differs from $-2i$ by approximately $9\times10^{-178}$. This is a sensitive check of the matrix signs, the frequency convention and the endpoint relations. It does not, however, settle the exceptional Jost-normalisation issue at the algebraically special frequency, which is deferred to the Wronskian analysis.

\subsection{First compactification: resolution and precision}
\label{subsec:c1_resolution_precision}

Across the five computed resolutions $n=120,130,\ldots,160$, the C1--R NIA
population grows approximately one-for-one with the spectral resolution. Its
terminal root obeys the empirical finite-window law
\begin{equation}\label{eq:c1_terminal_law}
4\alpha_{\max}\simeq n-2.
\end{equation}
For example, at $n=160$ the $d_{\rm M}=200$ roots calculation gives $4\alpha_{\max}^{\rm C1-R}=158.0009243564$. The C1--L value is $4\alpha_{\max}^{\rm C1-L}=158.0009244674$, showing that the terminal scaling is insensitive to the change of grid and to the replacement of the endpoint rows. The lowest NIA root behaves differently. On both C1 grids it lies on an $n^{-2}$ scale across the computed window, but the coefficient is grid dependent, that is
\begin{align}
n^2\alpha_{\min}^{\rm C1-R}&=6.38794,\ldots,6.38655,\\
 n^2\alpha_{\min}^{\rm C1-L}&=3.91956,\ldots,3.90260,
\end{align}
for $120\leqslant n\leqslant160$. This distinction is visible in Fig.~\ref{fig:c1_edge_scalings}. The same figure also displays the arithmetic failure of the C1--R $d_{\rm M}=100$ terminal root at $n=160$ and its recovery at $d_{\rm M}=200$.
\begin{figure*}[t]
\centering
\begin{minipage}{0.48\textwidth}
\centering
\includegraphics[width=\linewidth]{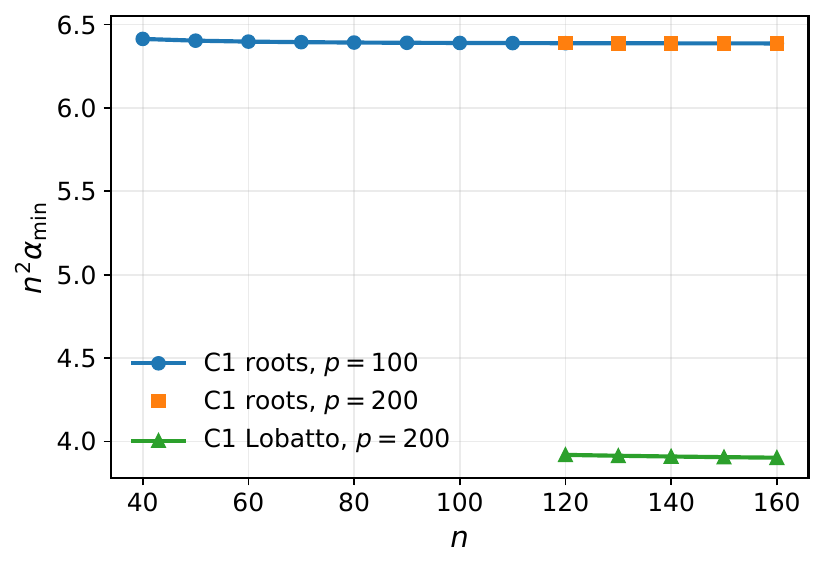}
\end{minipage}
\hfill
\begin{minipage}{0.48\textwidth}
\centering
\includegraphics[width=\linewidth]{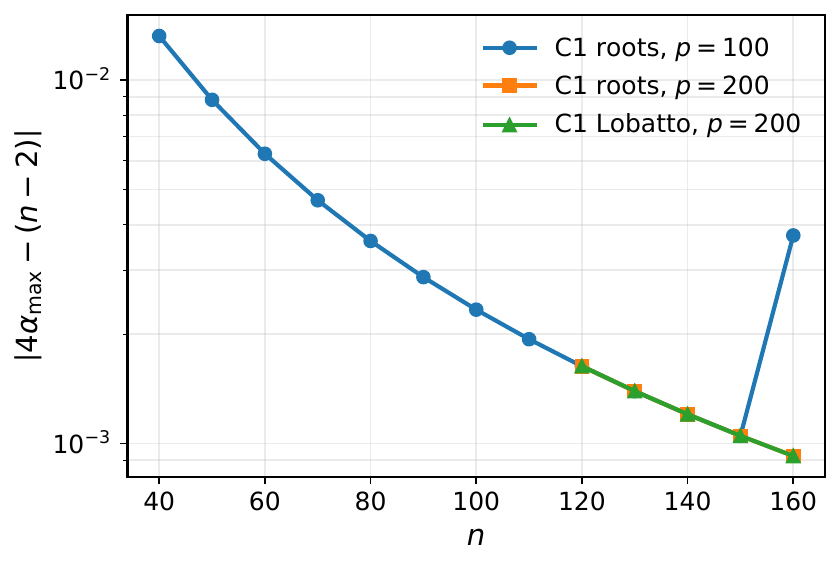}
\end{minipage}
\caption{Edge scalings for the first compactification. Left: the lowest NIA
root lies on an $n^{-2}$ scale, but the coefficient is different on the roots
and Lobatto grids. Right: the terminal root tracks
$4\alpha_{\max}=n-2$ over the computed resolutions on both grids. The upward displacement of the $d_{\rm M}=100$
roots point at $n=160$ accompanies the high-resolution loss of NIA roots and
is removed at $d_{\rm M}=200$.}
\label{fig:c1_edge_scalings}
\end{figure*}
For the regular high-damping C1--R roots at $n=160$, the equilibrated scaled backward errors are of order $10^{-239}$, while the raw/equilibrated frequency gaps lie between approximately $4\times10^{-104}$ and $5\times10^{-89}$. The relative condition indicators are very large, $10^{99}\lesssim\kappa_{\rm rel,2}\lesssim2\times10^{113}$, but the largest observed product with the equilibrated right backward error, $\kappa_{\rm rel,2}\eta_2^{\rm R}$, is approximately $1.4\times10^{-125}$. The smallest relative cross-resolution drift in the same branch is approximately $1.65\times10^{-10}$, corresponding to an absolute drift of approximately $3.72\times10^{-9}$. Both are far larger than the observed arithmetic and representation discrepancies. The remaining motion is therefore consistent with discretisation error rather than an arithmetic-precision ceiling, despite the severe conditioning of the individual eigenvalues. As noted after Proposition~\ref{prop:qep-first-order-condition-number}, $\kappa_{\rm rel,2}\eta_2^{\rm R}$ alone is not a rigorous forward-error enclosure.

\subsection{Roots-versus-Lobatto comparison}
\label{subsec:c1_grid_comparison}

The roots and Lobatto implementations of the first compactification produce the
same high-damping finite-pencil structure. In the window $(140,150,160)$, both contain 70 matched NIA triples: one root approaching the
origin, the algebraically special root, and 68 regular high-damping roots. At
$n=160$, the regular branches occupy
\begin{align}
15.0786158201&\leqslant\alpha\leqslant31.8428207066&&\text{(C1--R)},\\
15.0784880419&\leqslant\alpha\leqslant31.8428207059&&\text{(C1--L)}.
\end{align}
Their median nearest-neighbour spacings are respectively $0.250192684076$, and $0.250192684065$. After removing five modes at each edge, the mean spacings are
$0.2502103020$ and $0.2502101402$, with sample standard deviations of order
$6\times10^{-5}$. Thus, the approximately quarter-spaced lattice is not an
artefact of choosing roots rather than Lobatto points. Figure~\ref{fig:nia_spectrum_comparison} compares the complete NIA spectra at
$n=160$. Near the origin, the two C1 grids already differ visibly and C2--R
introduces a substantially finer hierarchy. At finite damping, C1--R and
C1--L are nearly indistinguishable, whereas C2--R has a different distribution.
The figure is a comparison of full finite spectra, not a claim that each plotted
root defines a converged trajectory.
\begin{figure*}[t]
\centering
\begin{minipage}{0.48\textwidth}
\centering
\includegraphics[width=\linewidth]{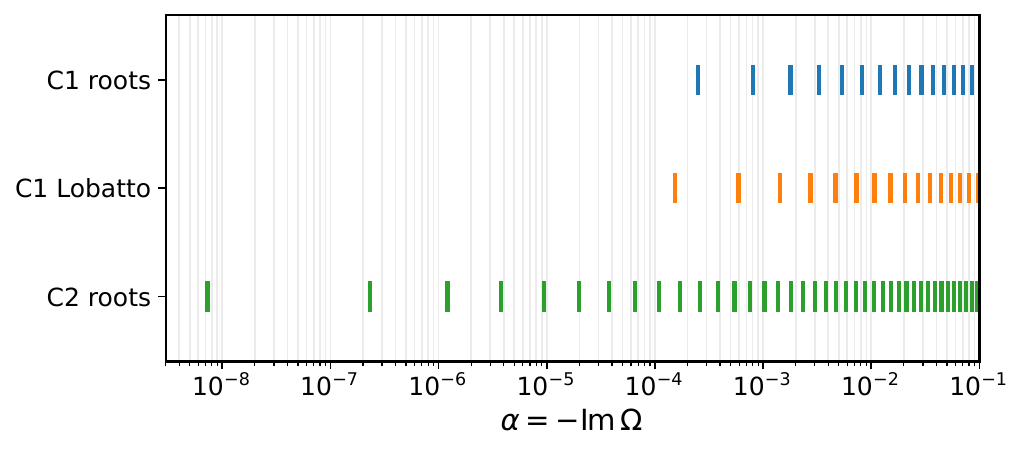}
\end{minipage}
\hfill
\begin{minipage}{0.48\textwidth}
\centering
\includegraphics[width=\linewidth]{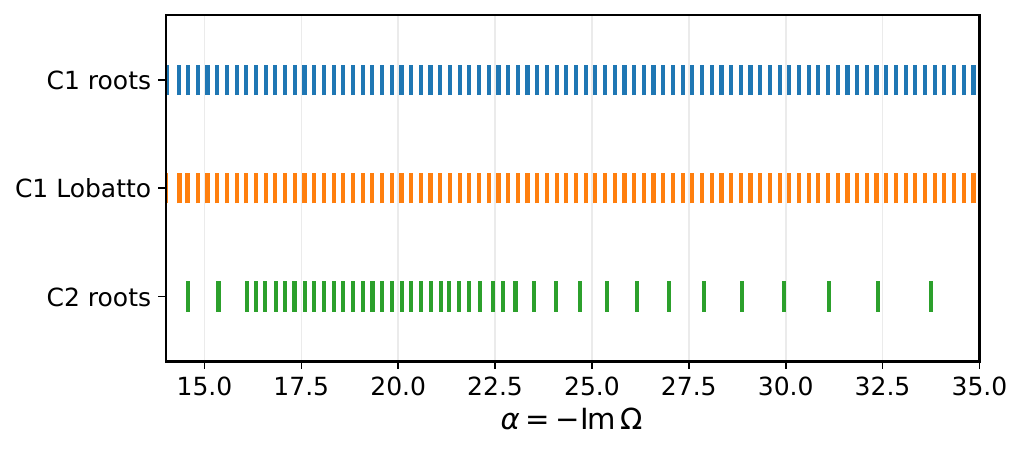}
\end{minipage}
\caption{NIA eigenvalue locations at $n=160$. Each vertical mark is one
finite-pencil root. Left: the near-zero region on a logarithmic scale. Right:
the finite-damping interval $14\leqslant\alpha\leqslant35$. The C1 roots and Lobatto
spectra almost coincide in the regular high-damping region, while the C2 roots
are reorganised by the different compactification.}
\label{fig:nia_spectrum_comparison}
\end{figure*}
For the 68 regular C1 trajectories, the median roots-versus-Lobatto frequency
difference is $\operatorname{median}|\alpha_{\rm C1-R}-\alpha_{\rm C1-L}|
 =2.94\times10^{-10}$, and the maximum is $2.95\times10^{-4}$. Every one of the 68 grid differences is smaller than the larger of the two corresponding within-grid resolution drifts. The median ratio of grid difference to that drift is
$1.58\times10^{-2}$, and the maximum ratio is $0.545$. This comparison is
shown in Fig.~\ref{fig:c1_grid_drift}.
\begin{figure}[t]
\centering
\includegraphics[width=0.96\columnwidth]{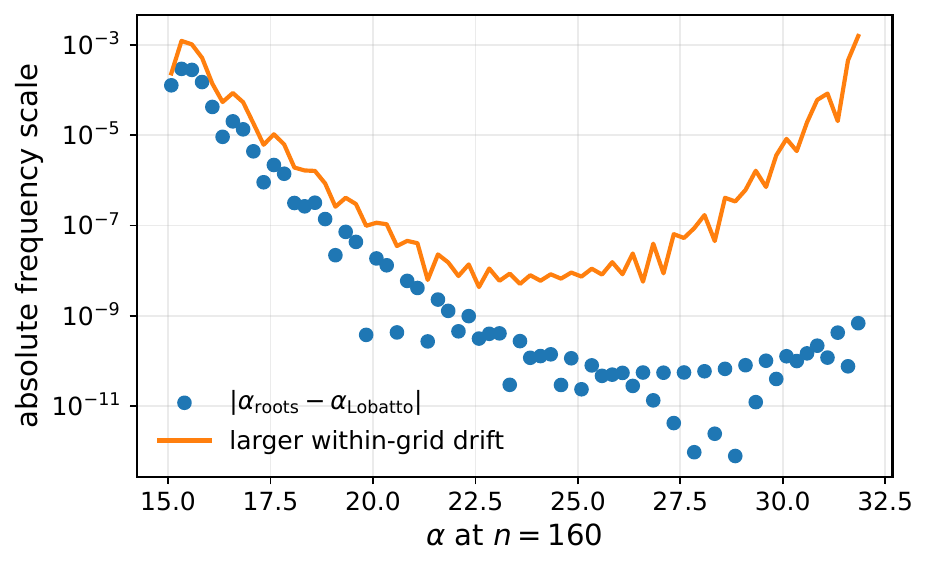}
\caption{Roots-versus-Lobatto frequency differences for the 68 matched regular
C1 NIA trajectories at $n=160$, compared with the larger within-grid
three-resolution drift. The grid difference remains below the unresolved
resolution scale for every trajectory.}
\label{fig:c1_grid_drift}
\end{figure}
The agreement extends to the eigenfunctions. After phase alignment, the
minimum roots-versus-Lobatto Chebyshev-coefficient overlap over the 68 regular
modes is $0.9999999999764556$, while the median phase-aligned unit-vector difference is $4.67\times10^{-12}$ and the maximum is $6.86\times10^{-6}$. The near-zero root behaves differently. Its roots-versus-Lobatto overlap is only
$0.980699$, and the corresponding unit-vector difference is $0.1965$. At
$n=160$ the Lobatto near-zero mode also has a coefficient-tail measure of about
$0.155$. The root approaching $\Omega=0$ is therefore not resolved as a nonzero grid-independent mode. For the Lobatto pencil the limiting endpoint equations are rows of the matrix problem. Their very small residuals consequently verify consistency between the Maple assembly and the Julia reconstruction, but they are not independent out-of-sample boundary tests. The off-axis controls, coefficient decay, raw/equilibrated invariance, and roots-versus-Lobatto comparisons provide the independent validation.

\FloatBarrier
\subsection{Second compactification}
\label{subsec:c2_results}

\begin{figure}[htbp]
\centering
\begin{minipage}{0.48\textwidth}
\centering
\includegraphics[width=\linewidth]{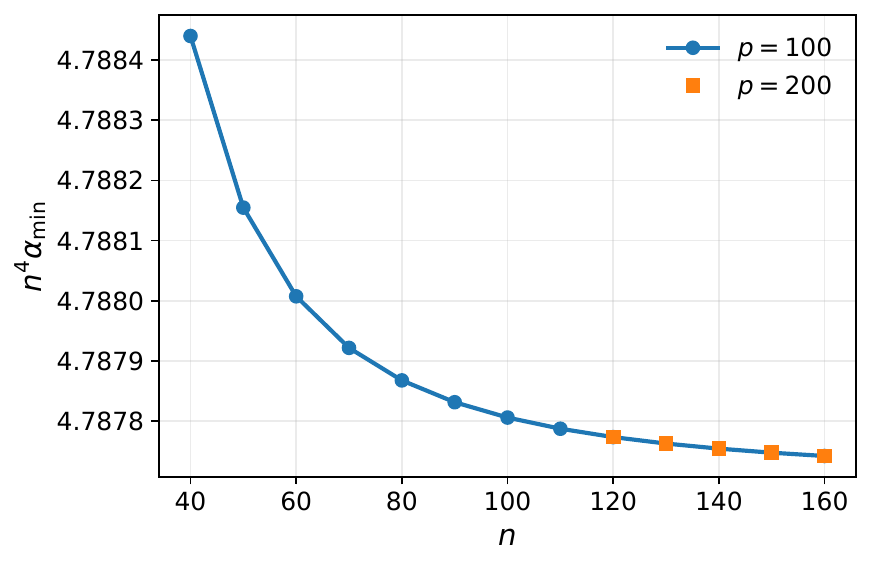}
\end{minipage}
\hfill
\begin{minipage}{0.48\textwidth}
\centering
\includegraphics[width=\linewidth]{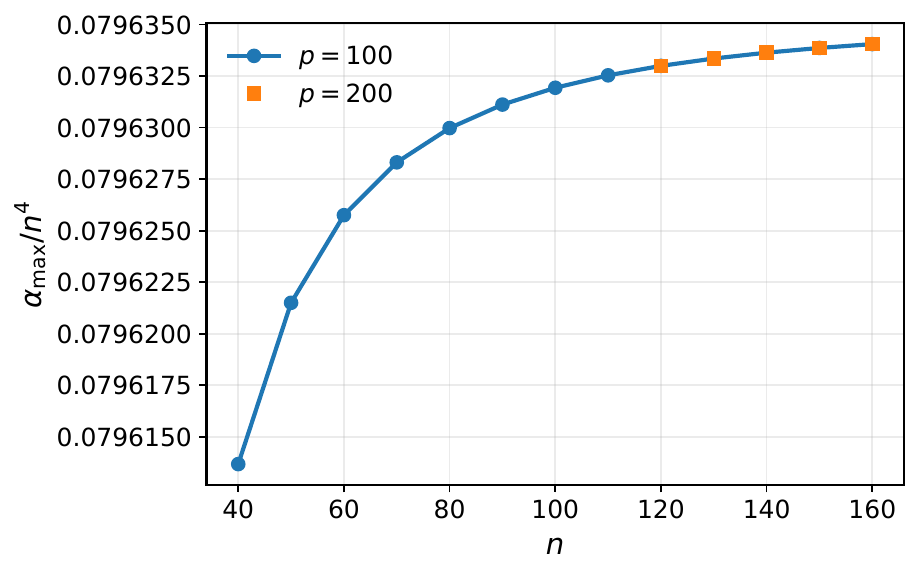}
\end{minipage}
\caption{Normalised C2--R edge scalings over $n=120,130,\ldots,160$. Left:
$n^4\alpha_{\min}$ is nearly constant. Right: $\alpha_{\max}/n^4$ is nearly
constant. The $d_{\rm M}=100$ and $d_{\rm M}=200$ points coincide on the plotted scale,
supporting the interpretation that the observed $n^{\pm4}$ scaling is not a
precision artefact.}
\label{fig:c2_edge_scalings}
\end{figure}
\FloatBarrier

The second compactification reorganises the NIA population even though it
reproduces the off-axis controls. Across the five computed resolutions, the
lowest C2--R root scales approximately as $n^{-4}$, whereas the terminal root
scales approximately as $n^4$. The normalised quantities are shown in
Fig.~\ref{fig:c2_edge_scalings}. Over $120\leqslant n\leqslant160$ the
$d_{\rm M}=200$ values are
\begin{align}
n^4\alpha_{\min}&=4.7877734,\ldots,4.7877420,\\
\frac{\alpha_{\max}}{n^4}&=0.079632994,\ldots,0.079634050.
\end{align}
Fits in powers of $1/n^2$ give the provisional finite-pencil intercepts
$$
\widehat c_{\min}=4.78770\ldots,\qquad
\widehat c_{\max}=0.0796354\ldots,
$$
where the hats denote fitted $1/n^2\to0$ intercepts. These are empirical
finite-resolution fits, not established asymptotic laws or continuum
frequencies. The observed $n^{-4}$ scaling is not restricted to the first root. At fixed
low ordinal rank $j$, the products $n^4\alpha_j(n)$ are nearly constant over
$n=120,130,\ldots,160$; for example,
\begin{align}
n^4\alpha_1&=4.7877734,\ldots,4.7877420,\\
n^4\alpha_2&=153.08735,\ldots,153.09092,\\
n^4\alpha_3&=791.88383,\ldots,792.00800.
\end{align}
These data are consistent with fixed-rank sequences collapsing to $\Omega=0$
rather than approaching nonzero NIA frequencies. Their clean finite-window
scaling should not be confused with convergence to a nonzero continuum mode.
At $n=160$, for example, the lowest
root is $\alpha_1=7.3055\times10^{-9}$, while its final-$20\%$ coefficient-tail measure is approximately $0.114$ and both reconstructed endpoint relative residuals are close to unity. The $d_{\rm M}=100$ and $d_{\rm M}=200$ C2--R spectra agree at fixed $n$. At $n=160$ the largest rank-ordered relative difference between the complete NIA spectra is approximately $1.9\times10^{-21}$, while at $d_{\rm M}=200$ the worst raw/equilibrated relative discrepancy is approximately $6.0\times10^{-121}$. The spectral organisation is therefore arithmetic-clean. Within $|\Omega|\leqslant100$, the $d_{\rm M}=200$ scaled backward errors are between approximately $2\times10^{-241}$ and $7\times10^{-239}$, and the largest product
$\kappa_{\rm rel,2}\eta_2^{\rm R}$ is approximately $6.5\times10^{-158}$. No non-special, order-one NIA frequency satisfies the configured $10^{-4}$ three-resolution matching test in the high-resolution C2 windows. The small set of reported C2 NIA matches consists of fixed-rank near-zero sequences, neighbouring-rank associations generated by the absolute matching tolerance, and the algebraically special point. It must therefore not be interpreted as a list of converged nonzero NIA modes. Table~\ref{tab:c1_c2_representative} illustrates the compactification test at three representative C1 ladder locations. The C1--R sequences remain attached to their target values, whereas the nearest C2--R roots move non-monotonically as $n$ changes. Proximity at one resolution is not convergence.
\FloatBarrier
\begin{table*}[t]
\caption{Nearest NIA roots to three representative C1--R target frequencies.
The C1 entries remain on coherent trajectories, whereas the nearest C2 entries
do not. All entries are $\alpha=-\operatorname{Im}\Omega$.}
\label{tab:c1_c2_representative}
\centering
\small
\begingroup
\setlength{\tabcolsep}{5pt}
\renewcommand{\arraystretch}{1.08}
\begin{tabular*}{0.90\textwidth}{@{\extracolsep{\fill}}ccccccc@{}}
\toprule
& \multicolumn{3}{c}{C1--R} & \multicolumn{3}{c}{C2--R} \\
\cmidrule(lr){2-4}\cmidrule(l){5-7}
$\alpha_{\rm target}$ & $n=140$ & $n=150$ & $n=160$
& $n=140$ & $n=150$ & $n=160$ \\
\midrule
15.0786 & 15.078381 & 15.078528 & 15.078616
& 15.084645 & 15.010692 & 15.362458 \\
20.0845 & 20.084521 & 20.084521 & 20.084521
& 20.263274 & 20.055872 & 20.089977 \\
27.8406 & 27.840594 & 27.840594 & 27.840594
& 27.917521 & 28.318469 & 27.888126 \\
\bottomrule
\end{tabular*}
\endgroup
\end{table*}
\FloatBarrier

\subsection{Consequences for NIA candidate classification}
\label{subsec:finite_pencil_interpretation}

The results instantiate the evidence hierarchy in Eq.~\eqref{eq:finite_pencil_evidence_hierarchy}. Small backward errors, raw/equilibrated agreement, coefficient decay, and endpoint consistency establish a high-quality eigenpair of a particular finite pencil. Resolution and representation tests then ask whether that eigenpair belongs to a compactification-independent continuum structure while pole character is a separate analytic statement about a specified continuation of the horizon--infinity Jost determinant. The first-compactification high-damping lattice is a reproducible and highly accurate feature of that finite-dimensional formulation. It survives the roots-to-Lobatto change with frequency differences below the remaining resolution drift and with essentially identical Chebyshev coefficient vectors. It is nevertheless not reproduced as a convergent finite-frequency sequence by the second compactification, whose NIA spectrum instead shows approximate fixed-rank $n^{-4}$ collapse and terminal $n^4$ growth over the computed resolutions. The low-frequency edge is non-universal already within C1. These observations identify substantial parts of the NIA population as representation-dependent finite-dimensional structures and motivate the direct lateral Jost test reported in Sec.~\ref{sec:jost_results}.

\section{Cut-strength phase and the high-overtone QNM comb}
\label{sec:cut_phase}

The quarter spacing is physically informative but is not a new Schwarzschild scale. Casals and Ottewill parameterise the NIA by $\nu=i\omega>0$ and
$\bar\nu=2M\nu$~\cite{CasalsOttewill2012PRD, CasalsOttewill2013PRD}. Since
$\Omega=M\omega=-i\alpha$ here, $\bar\nu=2\alpha$. For $s=\ell=2$, their large-$\bar\nu$ branch-cut strength satisfies
\begin{equation}
\label{eq:cut-strength-asymptotics}
q(\nu)\sim4\left[
\cos(2\pi\bar\nu)
+\frac{\alpha_1}{\sqrt{\bar\nu}}\sin(2\pi\bar\nu)
+O(\bar\nu^{-1})\right],
\end{equation}
where
\begin{equation}
\label{eq:cut-strength-alpha-one}
\alpha_1=
\frac{\Gamma(1/4)^4[1-\ell(\ell+1)]}{48\pi^{3/2}}
=-3.23242351956863\ldots.
\end{equation}
It follows that
\begin{equation}
\Delta\bar\nu\longrightarrow\frac12,\qquad
\Delta\alpha\longrightarrow\frac14=M\kappa,
\end{equation}
where $\kappa=(4M)^{-1}$ is the Schwarzschild surface gravity. The finite pencil therefore does not discover the scale $1/4$. The scientific question is why C1 tracks the leading known continuum phase as such a coherent matrix spectrum while C2 does not. The absolute phase and its slow drift are more discriminating than the
spacing. Let $\alpha_j^{\mathrm{C1}}$ be the 68 C1--R candidates at
$n=160$. Without fitting a continuous phase or scale, we compare them with two asymptotically equivalent zero prescriptions. The literal truncation of Eq.~\eqref{eq:cut-strength-asymptotics} is
\begin{equation}
\label{eq:q-zero-literal}
\cos(4\pi\alpha)
+\frac{\alpha_1}{\sqrt{2\alpha}}\sin(4\pi\alpha)=0,
\end{equation}
whereas the corresponding modulus--phase form is
\begin{equation}
\label{eq:q-zero-phase}
\cos\left(4\pi\alpha-
\frac{\alpha_1}{\sqrt{2\alpha}}\right)=0.
\end{equation}

\FloatBarrier
\begin{table*}[t]
\caption{Parameter-free comparison of the C1--R $n=160$ ladder with two
large-frequency cut-strength prescriptions and with physical QNM damping
projections. Successive integer labels are fixed by ordering; no continuous
offset or scale is fitted. The two cut formulas agree only through the stated
asymptotic order, so their difference measures truncation ambiguity rather
than numerical error.}
\label{tab:cut-phase-summary}
\centering
\begingroup
\newcommand{\CPone}[1]{\parbox[t]{0.15\textwidth}{\raggedright #1}}
\newcommand{\CPtwo}[1]{\parbox[t]{0.11\textwidth}{\centering #1}}
\newcommand{\CPthree}[1]{\parbox[t]{0.17\textwidth}{\centering #1}}
\newcommand{\CPfour}[1]{\parbox[t]{0.11\textwidth}{\centering #1}}
\newcommand{\CPfive}[1]{\parbox[t]{0.29\textwidth}{\raggedright #1}}
\begin{tabular}{lllll}
\toprule
\CPone{Comparator} & \CPtwo{Pairing} & \CPthree{Residual range} &
\CPfour{RMS} & \CPfive{Interpretation} \\
\midrule
\CPone{Literal cut form, Eq.~\eqref{eq:q-zero-literal}}
& \CPtwo{68 successive zeros}
& \CPthree{$[-5.2202,-1.5545]$\\$\times10^{-3}$}
& \CPfour{$2.6925\times10^{-3}$}
& \CPfive{Leading phase and drift agree, with visible omitted-order displacement.} \\
\addlinespace
\CPone{Phase cut form, Eq.~\eqref{eq:q-zero-phase}}
& \CPtwo{68 successive zeros}
& \CPthree{$[-0.80845,0.90696]$\\$\times10^{-3}$}
& \CPfour{$1.8447\times10^{-4}$}
& \CPfive{Striking phase diagnostic; not an error estimate for the full finite-frequency $q$.} \\
\addlinespace
\CPone{Physical QNM damping}
& \CPtwo{$n_{\rm QNM}=62,\ldots,129$}
& \CPthree{$[-2.06923,-1.30247]$\\$\times10^{-2}$}
& \CPfour{$1.59360\times10^{-2}$}
& \CPfive{Shared high-overtone scale, but a distinct finite-frequency phase and nonzero real part.} \\
\bottomrule
\end{tabular}
\endgroup
\end{table*}

The small RMS from Eq.~\eqref{eq:q-zero-phase} is compelling, but it cannot
be read as agreement with the finite-frequency $q(\alpha)$. Equations
\eqref{eq:q-zero-literal} and 
\eqref{eq:q-zero-phase} differ at the omitted $O(\bar\nu^{-1})$ order, and
their root displacement is already larger than $1.84\times10^{-4}$. The
present result is therefore evidence for a cut-controlled phase mechanism,
not a finite-frequency phase-locking theorem. One direct finite-frequency anchor sharpens this comparison. We evaluated
Eq.~\eqref{eq:jost_num_branch_strength} using symmetric lateral
Leaver--Tricomi-$U$ solutions and a backward Miller construction of the
minimal positive-imaginary-axis solution. Near the first C1 ladder member, the discontinuity changes sign at the numerical zero
\begin{equation}
\label{eq:finite-frequency-q-anchor}
\alpha_q=15.07832396512359.
\end{equation}
The quoted rounding is stable under the tested truncations, lateral offsets,
matching radii, and arithmetic precisions. The positive-imaginary-axis
denominator in Eq.~\eqref{eq:jost_num_branch_strength} remains finite and
nonzero, with magnitude $4.9001\times10^{-7}$, so the vanishing discontinuity
is not a division artefact. Table~\ref{tab:finite-q-anchor} shows that the
finite-frequency correction moves the phase-form prediction toward the C1
point. This is useful evidence for the cut-phase mechanism, but one anchor
does not establish a sequence-wide law, uniqueness of all nearby zeros, or an
interval-certified root theorem. Reversing the labelling of the two lips
changes the sign of $q$ but not the zero in
Eq.~\eqref{eq:finite-frequency-q-anchor}.

\begin{table}[t]
\caption{First-ladder comparison with the directly evaluated
finite-frequency branch-strength zero. All frequencies use
$\Omega=-i\alpha$. The last column is measured from $\alpha_q$; no offset is
fitted.}
\label{tab:finite-q-anchor}
\centering
\small
\begingroup
\setlength{\tabcolsep}{7pt}
\renewcommand{\arraystretch}{1.08}
\begin{tabular}{@{}lcc@{}}
\toprule
Quantity & $\alpha$ & $\alpha-\alpha_q$ \\
\midrule
Phase-form asymptotic zero & $15.078158667423952$ & $-1.65298\times10^{-4}$ \\
Finite-frequency numerical $q$-zero & $15.07832396512359$ & $0$ \\
Nearest C1 finite-pencil root & $15.078615820120019$ & $+2.91855\times10^{-4}$ \\
Literal sine--cosine zero & $15.082668877207599$ & $+4.34491\times10^{-3}$ \\
\bottomrule
\end{tabular}
\endgroup
\end{table}
\begin{figure*}[t]
\centering
\includegraphics[width=\textwidth]{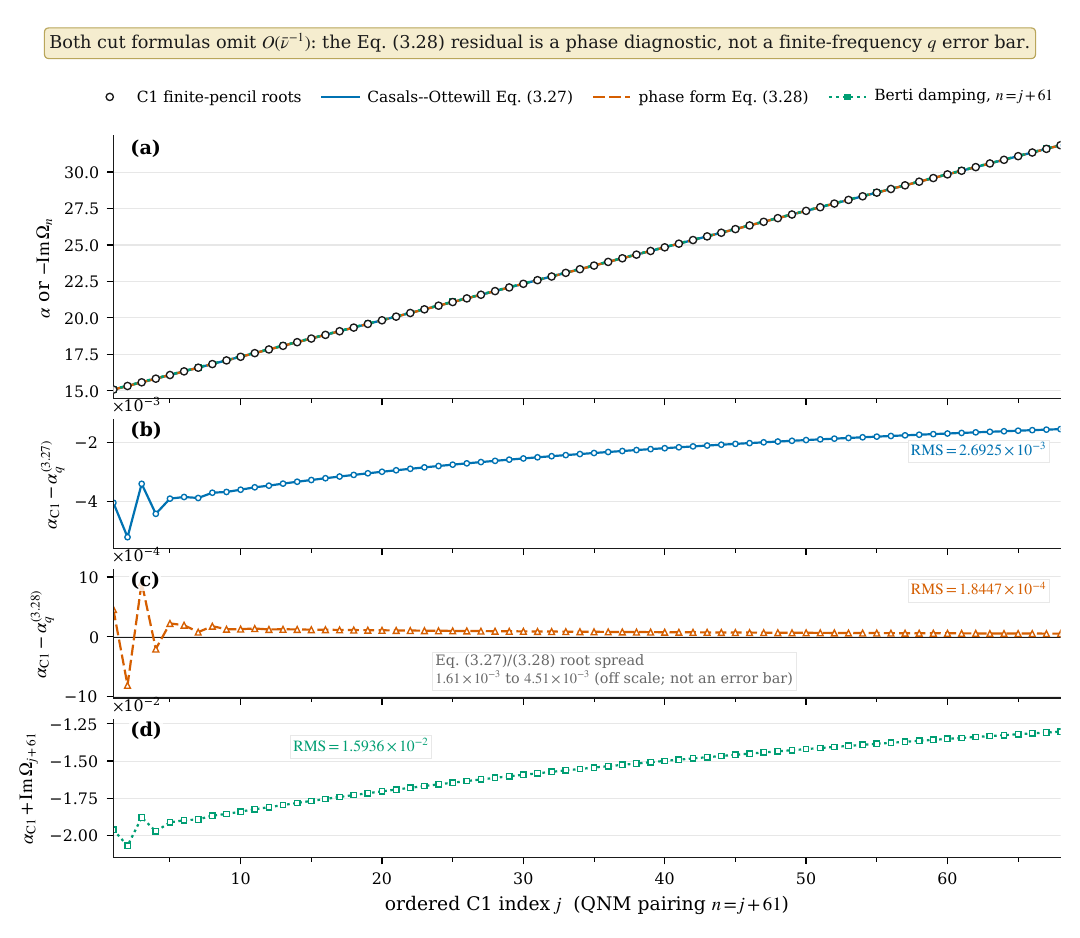}
\caption{Parameter-free phase comparison of the 68 C1--R $n=160$
finite-pencil candidates with the large-$\bar\nu$ Schwarzschild cut strength
and the physical high-overtone damping sequence. (a) Ordered frequencies,
using $\bar\nu=2\alpha$ and converting Berti's $2M\omega$ data to
$\Omega=M\omega$. The QNM pairing is $n_{\rm QNM}=j+61$. (b) Relative to the zeros of
the literal sine--cosine truncation in Eq.~\eqref{eq:q-zero-literal}, the C1
displacements have RMS $2.6925\times10^{-3}$. (c) Relative to the phase form
in Eq.~\eqref{eq:q-zero-phase}, they have RMS $1.8447\times10^{-4}$. The two
same-order prescriptions themselves differ by
$1.6080\times10^{-3}$--$4.5102\times10^{-3}$. Both omit
$O(\bar\nu^{-1})$ terms, so the smaller residual is a phase diagnostic, not
an accuracy estimate for zeros of the full finite-frequency $q$. (d) The
C1 minus projected-QNM-damping residual has RMS $1.5936\times10^{-2}$.
Damping projection does not place the off-axis QNM poles on the cut.}
\label{fig:cut_qnm_phase_comparison}
\end{figure*}
The same interval also contains the established high-overtone QNM comb. To
distinguish the overtone index from the collocation resolution, we denote it
by $n_{\rm QNM}$.
Berti's continued-fraction table is given in $2M\omega$ units; division by
two converts it to $\Omega=M\omega$~\cite{BertiRingdownData, Nollert1993PRD, Motl2003ATMP, CardosoLemosYoshida2004PRD}. The sampled-strip damping window contains
exactly 71 axial overtones, $n_{\rm QNM}=60,\ldots,130$, with endpoints
\begin{align}
\Omega_{60}&\simeq
0.07361873510-14.59782806675i,\\
\Omega_{130}&\simeq
0.06455645238-32.10591316536i.
\end{align}
Their real parts lie in the narrow range $0.0645$--$0.0737$, just outside
the production boundary $\Re\Omega=0.05$. Thus the zero winding of the
narrower strip is compatible with the known spectrum but is not, by itself,
a same-damping positive control. The 68 C1 roots can be ordered beside the damping projections $n_{\rm QNM}=62,\ldots,129$, with mean spacings $0.2502120$ and $0.2501136$, respectively. The equal counts are largely consequences of the window lengths while the independently predicted cut phase, not the count, is the
nontrivial comparison. Finally, notice that the phase comparison does not itself classify poles. The direct test in Sec.~\ref{sec:jost_results} strongly disfavours the C1 ladder as an individual physical-pole family in the tested continuations, but its organisation is not arbitrary. Its spacing and phase suggest that the finite trial space samples information controlled by the continuum cut and the neighbouring high-overtone comb. Establishing that positive interpretation requires exact cut strength or weighted-resolvent convergence, not another unweighted eigenvalue census.

\section{Lateral Jost determinants and physical-pole assessment}
\label{sec:jost_results}

The direct Jost/Wronskian calculation was carried out in the axial gravitational
sector $s=\ell=2$ after the finite-pencil calibration had isolated the
regular C1 ladder. The production calculation used the determinant and
physical-lateral-continuation conventions of Sec.~\ref{sec:jost_numerics}.
Tricomi $U$ was evaluated with Arb/FLINT ball arithmetic, while the recurrence,
Wronskian, and extrapolation algebra used arbitrary-precision midpoint values.
The complementary tests are summarised in Table~\ref{tab:jost_results_summary}.

Before classifying the ladder, we tested the same code on established
off-axis poles at the lower edge, centre, and upper edge of its damping
interval. Starting from Berti's continued-fraction values~\cite{BertiRingdownData}, Newton iteration gives the results in
Table~\ref{tab:high-damping-jost-controls}. Each refined frequency was
rechecked at $x=1.30$, $1.35$, and $1.40$ and at three truncation levels.
This is a direct same-regime validation of the Jost recurrence; it is not a
replacement for a resolved contour count of all intervening poles.

\FloatBarrier
\begin{table}[htbp]
\caption{Same-damping positive controls for the lateral Jost calculation.
Frequencies use $\Omega=M\omega$. The last two columns are evaluated at the
refined root with the production matching radius and truncation. All three
roots remain stable under the radius and truncation repeats described in the
text.}
\label{tab:high-damping-jost-controls}
\centering
\small
\begingroup
\setlength{\tabcolsep}{5pt}
\renewcommand{\arraystretch}{1.08}
\begin{tabular*}{0.92\textwidth}{@{\extracolsep{\fill}}ccccc@{}}
\toprule
$n_{\rm QNM}$ & $\operatorname{Re}\Omega$ & $-\operatorname{Im}\Omega$
& $|\widehat{\mathcal D}|$ & $|\mathfrak m|$ \\
\midrule
60 & 0.0736187351006822 & 14.5978280667549 & $3.47\times10^{-51}$ & $6.36\times10^{-59}$ \\
100 & 0.0673297507837061 & 24.6034778126417 & $5.24\times10^{-48}$ & $5.86\times10^{-61}$ \\
130 & 0.0645564523874952 & 32.1059131653701 & $2.95\times10^{-56}$ & $4.10\times10^{-73}$ \\
\bottomrule
\end{tabular*}
\endgroup
\end{table}

\begin{figure}[htbp]
\centering
\includegraphics[width=\textwidth]{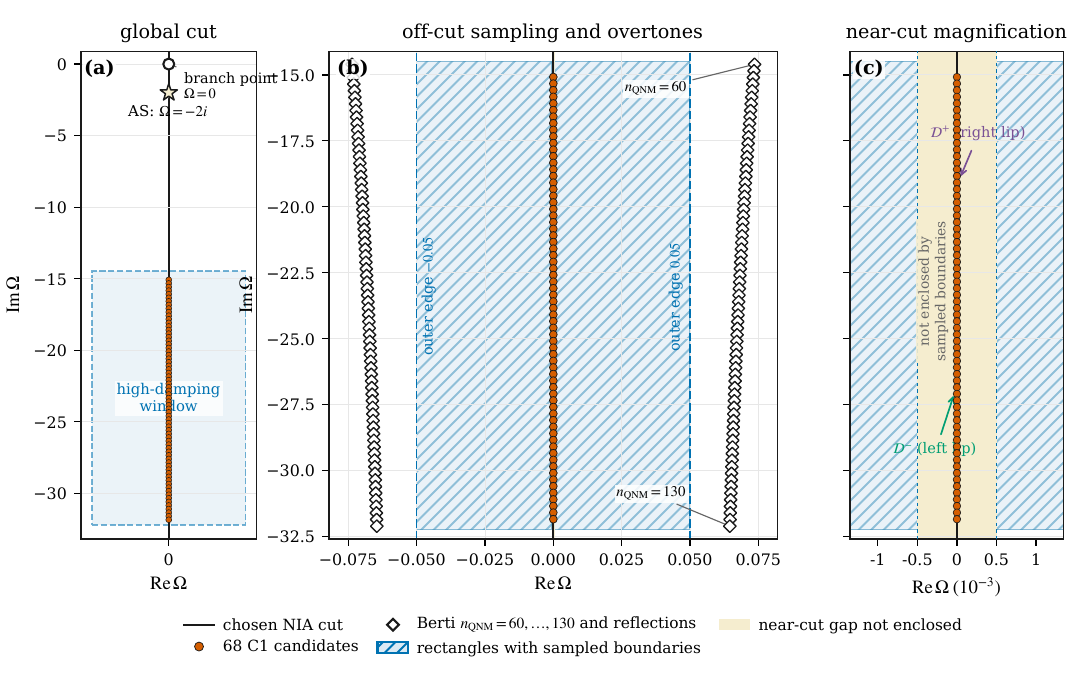}
\caption{Complex-frequency atlas of the lateral Jost calculations, in the
convention $\Omega=M\omega$. (a) The chosen NIA cut, its branch point at
$\Omega=0$, the axial algebraically-special point $\Omega=-2i$, and the
high-damping window. (b) Magnification of
$14.475\leqslant-\operatorname{Im}\Omega\leqslant32.225$. Orange circles are
the 68 C1 finite-pencil candidates. Hatched regions are rectangles whose
off-cut boundaries were sampled,
with $5\times10^{-4}\leqslant|\operatorname{Re}\Omega|\leqslant0.05$.
Diamonds are Berti's $s=\ell=2$ QNMs with
$60\leqslant n_{\rm QNM}\leqslant130$ and
their reflected partners; their real parts lie outside the tested strips.
(c) The near-cut gap $0<|\operatorname{Re}\Omega|<5\times10^{-4}$ not
enclosed by either contour. The indicated transverse offsets of
$\mathcal D^{+}$ and $\mathcal D^{-}$ are diagrammatic; the contours do not
cross the cut.}
\label{fig:complex_frequency_atlas}
\end{figure}

\begin{table*}[t]
\caption{Jost/Wronskian physical-pole assessment for the regular C1
high-damping ladder in the axial $(s,\ell)=(2,2)$ sector. Determinant
magnitudes and numerical diagnostics refer to the right lip. The left lip
satisfies the expected conjugacy relation at the exported precision.}
\label{tab:jost_results_summary}
\centering
\footnotesize
\begingroup
\setlength{\tabcolsep}{4pt}
\renewcommand{\arraystretch}{1.08}
\newcommand{\JRStage}[1]{\parbox[t]{0.160\textwidth}{\raggedright #1}}
\newcommand{\JRCalculation}[1]{\parbox[t]{0.350\textwidth}{\raggedright #1}}
\newcommand{\JROutcome}[1]{\parbox[t]{0.410\textwidth}{\raggedright #1}}
\begin{tabular}{@{}lll@{}}
\toprule
\JRStage{Check} & \JRCalculation{Calculation} & \JROutcome{Numerical outcome} \\
\midrule
\JRStage{Candidate points}
& \JRCalculation{All 68 finite-pencil frequencies,
$15.0786158201\leqslant\alpha\leqslant31.8428207066$, with adaptive 140- and
200-digit policies}
& \JROutcome{$181.8351\leqslant|\mathcal D^{+}|\leqslant276.7755$; every
evaluated candidate remains separated from zero relative to the reported
stability diagnostics in the adopted normalisation.} \\
\JRStage{Uniform NIA mesh}
& \JRCalculation{354-point lateral scan over
$14.525\leqslant\alpha\leqslant32.175$}
& \JROutcome{$\min|\widehat{\mathcal D}^{+}|=149.5033$ and
$\min|\mathcal D^{+}|=151.9008$; the shallow near-quarter modulation resolves
no candidate on-cut zero at the $0.05$ mesh scale.} \\
\JRStage{Contour calibration}
& \JRCalculation{Known-QNM and empty controls, eight representative
physical-half-plane boxes, and two physical-half-plane boxes adjacent to the
algebraically special point}
& \JROutcome{The known fundamental QNM gives winding $+1$, the empty box gives
$0$, and all representative and algebraically-special-adjacent boxes give $0$
at both contour resolutions.} \\
\JRStage{Full-strip boundaries}
& \JRCalculation{Complete strips in the right- and left physical-half-plane
continuations with $5\times10^{-4}\leqslant|\operatorname{Re}\Omega|
\leqslant5\times10^{-2}$ and
$14.475\leqslant-\operatorname{Im}\Omega\leqslant32.225$}
& \JROutcome{Sampled boundary sums give numerical winding $0$ at 192 and 384
points per edge; at the higher resolution the minimum sampled magnitude is
$41.6363$ and the largest sampled phase increment is $2.0377<\pi$.} \\
\bottomrule
\end{tabular}
\endgroup
\end{table*}
\FloatBarrier

Both physical lateral continuations were evaluated at every one of the 68
regular C1 finite-pencil frequencies. Over the complete set, the largest
final truncation change in the physical determinant was
$4.80\times10^{-17}$, the largest matching-radius spread was
$6.11\times10^{-18}$, the largest lateral-extrapolation diagnostic was
$2.21\times10^{-10}$, and the largest series-tail-to-Wronskian-cancellation
ratio was $6.14\times10^{-19}$. These changes are negligible relative to the
determinant scale. The pointwise evaluations therefore strongly resolve
$\mathcal D^{\pm}(-i\alpha)$ as nonzero at all 68 computed frequencies.

\begin{samepage}
A zero could nevertheless lie between the finite-pencil sample points. We
therefore sampled the entire ladder and a margin beyond it on a uniform mesh
with spacing $0.05$. Figure~\ref{fig:j3_determinants} shows the resulting
right-lip magnitudes. On this mesh the regularised determinant never falls
below $149.5033$, while the physical determinant never falls below
$151.9008$. The 71 sampled local minima form shallow, approximately
quarter-spaced oscillations whose lower envelope rises across the interval, and no sampled minimum called for local refinement. This is mesh-wide evidence, not a continuous zero-exclusion theorem.
\end{samepage}

\FloatBarrier
\begin{figure}[htbp]
\centering
\includegraphics[width=0.92\textwidth]{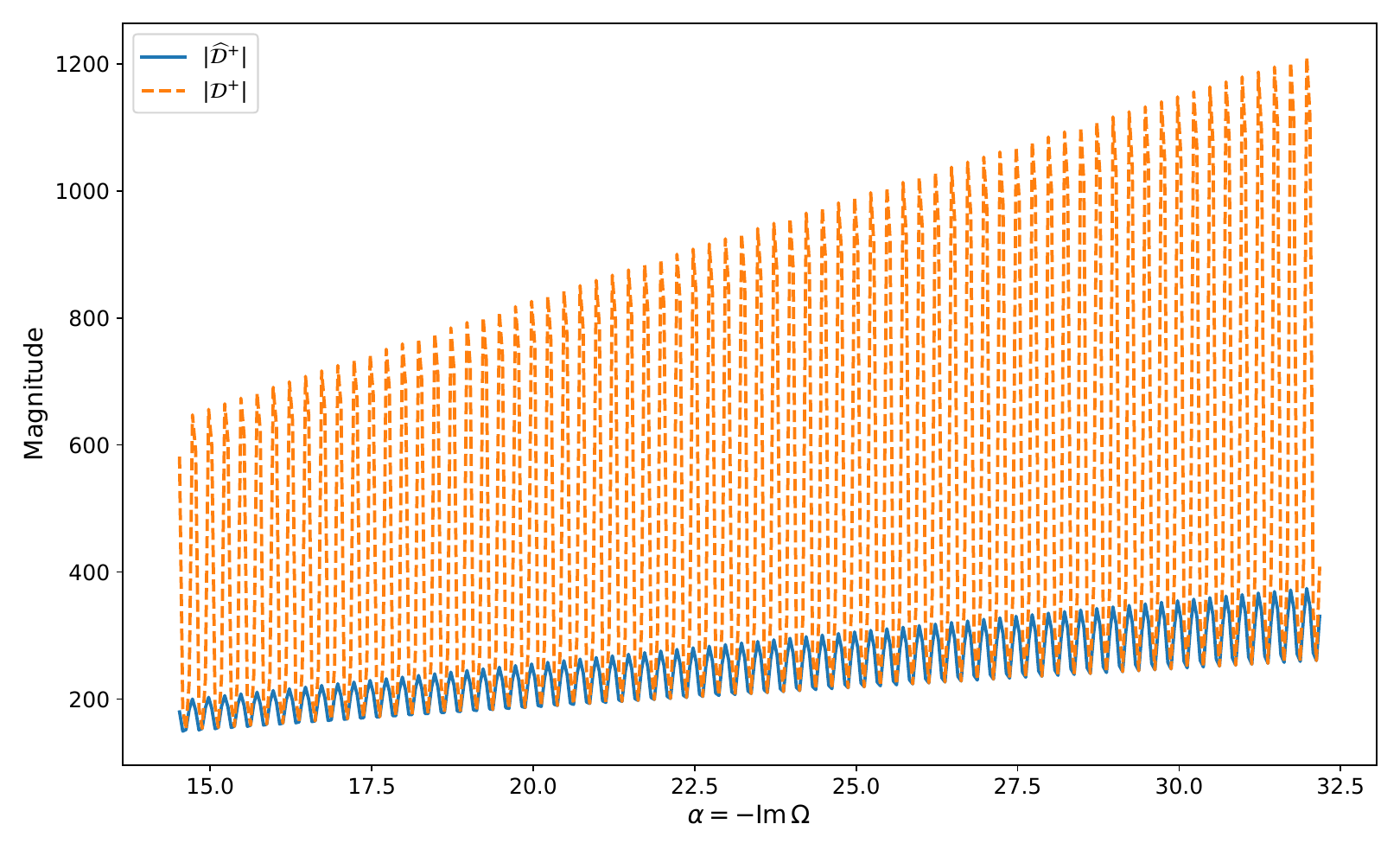}
\caption{Uniform 354-node scan of the right-lip physical and regularised Jost
determinants at spacing $0.05$. The sampled oscillatory near-quarter structure
remains separated from zero in the adopted normalisation over
$14.525\leqslant\alpha\leqslant32.175$; the plot supplies mesh-scale evidence,
not continuous zero exclusion.}
\label{fig:j3_determinants}
\end{figure}
\FloatBarrier

\subsection{Argument-principle tests and physical-pole assessment}

The argument-principle calculation was first calibrated with a
rectangle that enclosed the known fundamental Schwarzschild QNM and returned
numerical winding $+1$, while a nearby empty control rectangle returned $0$. The same
calculation gave zero numerical winding in the physical continuations in both half-planes around four
representative ladder locations. At the algebraically special point
$\Omega=-2i$, the physical determinant is finite and nonzero,
$|\mathcal D^{+}|\simeq0.762949$. At this point the sine factor, rather than
a physical zero, forces $\widehat{\mathcal D}=0$. Two adjacent boxes evaluated
with the physical determinant also had zero numerical winding; no candidate
zero was resolved in the tested region
$0.002\leqslant|\operatorname{Re}\Omega|\leqslant0.05$ and
$1.95\leqslant\alpha\leqslant2.05$.

Full-strip calculations then sampled the boundaries covering the complete
regular-ladder interval in each physical-half-plane continuation. At 192 and
384 points per edge, both
rounded numerical winding sums were zero, and the result was unchanged when
the boundary sampling was doubled. The sampled contour values remained
separated from zero relative to the reported diagnostics in the adopted
normalisation, with minimum magnitude $41.6363$. The maximum principal phase
step decreased from $2.5863$ to $2.0377$ radians under sampling refinement and
remained below $\pi$ at the higher resolution. The largest
series-tail-to-cancellation ratio was $4.24\times10^{-23}$, and the
regularisation identity held to approximately $1.6\times10^{-150}$ relative.
This refinement changed only the number of boundary samples; the 200-digit
precision, matching radius, series length, and rectangle geometry were fixed.
The regularising sine factor has no zeros in these strips and therefore has
zero analytic winding on their boundaries, so the identity gives the same
sampled numerical winding for $\widehat{\mathcal D}$ and $\mathcal D$. The
known-QNM and empty-contour controls calibrate the contour orientation and
phase unwrapping; they do not by themselves certify holomorphy or pole-freeness
throughout the tested domains.

Taken together, the candidate-point calculation strongly resolves the
physical lateral determinants as nonzero at all 68 evaluated finite-pencil
frequencies, and the uniform NIA mesh resolves no additional on-cut candidate
at spacing $0.05$. The contour controls and full-strip boundary samples add
stable numerical zero-winding evidence in the tested physical-half-plane
boxes and strips. By Proposition~\ref{prop:certified-argument-principle}, that
winding would be an exact zero count after holomorphy and pole-freeness on the
closed domains were independently established; those hypotheses are not
certified here. The calculations do not cross the NIA cut and therefore do
not address resonances on unphysical sheets. Together with compactification
dependence, the results identify the C1 ladder as a structured family of
finite-pencil nodes consistent with discretisation of the non-pole NIA cut,
not as random numerical noise and not as an established sequence of cut
quadrature nodes. A positive cut interpretation requires the weighted limit
formulated in Sec.~\ref{sec:keldysh_cut_limit}.

\FloatBarrier
This assessment is numerical rather than a formal interval-certified theorem.
The Arb balls used for Tricomi $U$ are reduced to high-precision midpoints in
the subsequent recurrence and Wronskian algebra, and the physical-half-plane
strip contours leave the narrow geometric gap
$0<|\operatorname{Re}\Omega|<5\times10^{-4}$ between the cut and their inner
boundaries. The contours also intentionally do not cross the NIA, so they
cannot exclude poles on unphysical sheets. Within the computed physical
domains, however, sampled determinant values stay separated from zero in the
adopted normalisation on the cut and on the inner strip boundaries; no
anomalous sampled minimum or nonzero numerical winding is observed.

\section{Keldysh weights and a positive cut-limit criterion}
\label{sec:keldysh_cut_limit}

Declaring that an individual finite-pencil root is not a QNM is only a
negative classification. A stronger question is whether the roots become useful collectively. The natural objects are not the unweighted point clouds but the residues of matrix elements of the finite resolvent. This is the finite-dimensional content of the Keldysh expansion for holomorphic matrix functions~\cite{Beyn2012LAA}. It is also the mechanism used to reconstruct asymptotically flat tails from branch-cut eigenvalues in Ref.~\cite{BessonJaramillo2025GRG}.
The recent continuum constructions also sharpen the target of such a positive programme. The decomposition of Su et al. separates direct, pole, and tail pieces by their Green-function analytic structure and checks their reconstruction in the time domain. Rosato et al. identify low-frequency tail corrections and a surface-gravity-governed redshift sector, and Arnaudo et al. show that a branch-cut contribution can admit a convergent discrete mode sum when obtained from a controlled de Sitter limiting family. Thus, a useful discrete family need not consist of Schwarzschild poles, but it must converge (with specified coefficients, source--observable pairing, continuation, and summation prescription) to a definite continuum object. The Keldysh criterion below is one finite-pencil realisation of that broader requirement. None of these recent results establishes convergence of the present C1 or C2 measures, and none supports interpreting the individual C1 nodes as  QNMs~\cite{Su2026PRD,RosatoDeAmicisPani2026PRD,ArnaudoCarballoWithers2026PRD}.

\begin{proposition}[Residue of a simple quadratic-pencil eigenvalue]
\label{prop:keldysh-simple-residue}
Let $Q_n(z)=M_0+izM_1+z^2M_2$ be regular, and let $\lambda_{j,n}$ be an
algebraically simple finite eigenvalue. Choose nonzero right and left eigenvectors $a_{j,n}$ and $b_{j,n}$ such that
\begin{equation}
Q_n(\lambda_{j,n})a_{j,n}=0,\qquad
b_{j,n}^{\dagger}Q_n(\lambda_{j,n})=0.
\end{equation}
Then, $b_{j,n}^{\dagger}Q_n'(\lambda_{j,n})a_{j,n}\neq 0$, and in a
neighbourhood of $\lambda_{j,n}$, we find
\begin{equation}\label{eq:keldysh-local-resolvent}
Q_n(z)^{-1}=\frac{a_{j,n}b_{j,n}^{\dagger}}
{(z-\lambda_{j,n})
b_{j,n}^{\dagger}Q_n'(\lambda_{j,n})a_{j,n}}
+H_{j,n}(z),
\end{equation}
where $H_{j,n}$ is holomorphic. Consequently, for a discrete source
$d_n$ and observable $c_n$, the scalar resolvent element has residue
\begin{equation}\label{eq:keldysh-scalar-weight}
\rho_{j,n}=\frac{(c_n^{\dagger}a_{j,n})(b_{j,n}^{\dagger}d_n)}
{b_{j,n}^{\dagger}Q_n'(\lambda_{j,n})a_{j,n}}.
\end{equation}
This weight is invariant under independent nonzero rescalings of
$a_{j,n}$ and $b_{j,n}$.
\end{proposition}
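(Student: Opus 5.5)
The plan is to work directly with the scalar function $p(z)=\det Q_n(z)$, which is a polynomial by regularity. Algebraic simplicity means $\lambda:=\lambda_{j,n}$ is a simple zero of $p$; by Proposition~\ref{prop:qep-representation-equivalence} this can equivalently be read off the companion linearisation. The standard consequence is that the nullspace of $Q_n(\lambda)$ is one-dimensional. If it were two-dimensional, every $(n-1)\times(n-1)$ minor of $Q_n(\lambda)$ would vanish. Jacobi's formula $p'(\lambda)=\operatorname{tr}\bigl(\operatorname{adj}Q_n(\lambda)\,Q_n'(\lambda)\bigr)$ would then give $p'(\lambda)=0$, a contradiction. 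Hence $\operatorname{rank}Q_n(\lambda)=n-1$. The adjugate is therefore rank one, and because $Q_n(\lambda)\operatorname{adj}Q_n(\lambda)=\operatorname{adj}Q_n(\lambda)\,Q_n(\lambda)=p(\lambda)I=0$, it has the form $\operatorname{adj}Q_n(\lambda)=\gamma\,a b^\dagger$ with $\gamma\neq0$. Here $a=a_{j,n}$ and $b=b_{j,n}$ span the right and left nullspaces. Jacobi's formula then reads $p'(\lambda)=\gamma\,b^\dagger Q_n'(\lambda)a$. Since $p'(\lambda)\neq0$, this proves the nondegeneracy $b^\dagger Q_n'(\lambda)a\neq0$.

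Next we would write $Q_n(z)^{-1}=\operatorname{adj}Q_n(z)/p(z)$. Both numerator and denominator are holomorphic near $\lambda$, and $p$ has a simple zero there, so $Q_n^{-1}$ has at most a simple pole. Its residue is $\operatorname{adj}Q_n(\lambda)/p'(\lambda)=\gamma ab^\dagger/(\gamma\,b^\dagger Q_n'(\lambda)a)$. The factor $\gamma$ cancels, leaving exactly the principal part in Eq.~\eqref{eq:keldysh-local-resolvent}. The remainder $H_{j,n}(z)$ is holomorphic by construction, being the difference of a meromorphic function and its principal part at a simple pole.

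The scalar statement follows by applying $c_n^\dagger(\cdot)d_n$ to the local expansion, which immediately gives $\rho_{j,n}$ in Eq.~\eqref{eq:keldysh-scalar-weight}. For the invariance claim, replacing $a\mapsto sa$ and $b\mapsto tb$ with $s,t\neq0$ multiplies the numerator by $s\bar t$. It also multiplies $b^\dagger Q_n'a$ by $\bar t s$, so the two factors cancel.

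The only delicate step is the rank-one structure of the adjugate, together with the identification of its scalar $\gamma$ via Jacobi's formula. This is where algebraic simplicity, rather than mere geometric simplicity, is used. I would cite or briefly verify the adjugate facts (rank $n-1$ implies a rank-one adjugate, and $Q\operatorname{adj}Q=0$ pins down its range and corange). As a fallback, one could differentiate $Q_n(z)Q_n(z)^{-1}=I$ against a Laurent ansatz with leading coefficient $R$. This yields $Q_n(\lambda)R=0=RQ_n(\lambda)$ and $R\,Q_n'(\lambda)R=R$, and these relations again force $R=ab^\dagger/(b^\dagger Q_n'(\lambda)a)$.
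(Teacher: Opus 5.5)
Your proof is correct, but it takes a different route from the paper's. The paper never uses $\det Q_n$. It posits a Laurent expansion $Q_n(z)^{-1}=R_{-1}/(z-\lambda_{j,n})+R_0+\cdots$ and reads $Q_n(\lambda_{j,n})R_{-1}=0=R_{-1}Q_n(\lambda_{j,n})$ off the singular coefficients of $Q_nQ_n^{-1}=I=Q_n^{-1}Q_n$. It then writes $R_{-1}=\gamma\,a_{j,n}b_{j,n}^{\dagger}$ and left-multiplies the constant-order identity $Q_n(\lambda_{j,n})R_0+Q_n'(\lambda_{j,n})R_{-1}=I$ by $b_{j,n}^{\dagger}$. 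This gives $\gamma\,b_{j,n}^{\dagger}Q_n'(\lambda_{j,n})a_{j,n}=1$, which is close to your fallback. That argument is shorter and determinant-free, so it carries over to the holomorphic operator families of the Keldysh and analytic-Fredholm setting the paper invokes afterwards. However, it simply asserts that algebraic simplicity yields a simple pole and one-dimensional left and right nullspaces. Your adjugate--Jacobi argument proves exactly those inputs. First, $p'(\lambda)\neq0$ forces $\operatorname{rank}Q_n(\lambda)=n-1$, and the rank-one adjugate $\gamma\,ab^{\dagger}$ with $\gamma\neq0$ gives the nullspace structure. Second, $Q_n^{-1}=\operatorname{adj}Q_n/p$ gives the simple pole with the explicit residue $\operatorname{adj}Q_n(\lambda)/p'(\lambda)$. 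Third, the identity $p'(\lambda)=\gamma\,b^{\dagger}Q_n'(\lambda)a$ shows directly why nonvanishing of the derivative pairing is equivalent to algebraic simplicity once the nullspace is one-dimensional. The price is reliance on finite dimensionality through the determinant and adjugate, which is harmless for the matrix pencil at hand.
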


\begin{proof}
Algebraic simplicity and regularity imply that the inverse has a simple
Laurent pole. Let us write
\begin{equation}
Q_n(z)^{-1}=\frac{R_{-1}}{z-\lambda_{j,n}}+R_0+O(z-\lambda_{j,n}).
\end{equation}
The coefficients of $(z-\lambda_{j,n})^{-1}$ in the identities
$Q_nQ_n^{-1}=I$ and $Q_n^{-1}Q_n=I$ give $Q_n(\lambda_{j,n})R_{-1}=0$ and $R_{-1}Q_n(\lambda_{j,n})=0$. The right and left nullspaces are
one-dimensional, so $R_{-1}=\gamma a_{j,n}b_{j,n}^{\dagger}$ for some
$\gamma\neq0$. The constant coefficient in $Q_nQ_n^{-1}=I$ is
\begin{equation}
Q_n(\lambda_{j,n})R_0+Q_n'(\lambda_{j,n})R_{-1}=I.
\end{equation}
Left multiplication by $b_{j,n}^{\dagger}$ and substitution of the rank-one form yield
\begin{equation}
\gamma\,b_{j,n}^{\dagger}Q_n'(\lambda_{j,n})a_{j,n}\,
b_{j,n}^{\dagger}=b_{j,n}^{\dagger}.
\end{equation}
Thus, the derivative pairing is nonzero and $\gamma$ is its reciprocal,
which proves Eq.~\eqref{eq:keldysh-local-resolvent}. Applying the resulting rank-one residue to $c_n^{\dagger}Q_n(z)^{-1}d_n$ gives
Eq.~\eqref{eq:keldysh-scalar-weight}. Its numerator and denominator acquire the same factors under independent rescalings of $a_{j,n}$ and $b_{j,n}$, so the weight is invariant.
\end{proof}

For a region containing simple finite eigenvalues, the complex atomic
measure
\begin{equation}
\label{eq:finite-keldysh-measure}
\mu_n^{c,d}=\sum_j\rho_{j,n}\,\delta_{\lambda_{j,n}}
\end{equation}
encodes the singular part of the chosen scalar finite resolvent. Its Cauchy transform is
\begin{equation}
\mathcal C\mu_n^{c,d}(z)=\int\frac{d\mu_n^{c,d}(\zeta)}{z-\zeta}
=\sum_j\frac{\rho_{j,n}}{z-\lambda_{j,n}}.
\end{equation}
\begin{figure}[htbp]
\centering
\begin{tikzpicture}[
  node distance=8mm,
  every node/.style={align=center,font=\small},
  box/.style={draw=blue!55!black,rounded corners=2pt,thick,
    fill=blue!3,text width=0.105\textwidth,minimum height=12mm,inner sep=3pt},
  finite/.style={box,draw=orange!75!black,fill=orange!5,
    text width=0.175\textwidth},
  measure/.style={box,text width=0.12\textwidth},
  arr/.style={-{Latex[length=2.8mm,width=2mm]},very thick,blue!60!black}
]
\node[finite] (c1) {C1 eigenpairs\\$\{\lambda_{j,n}^{(1)},a_{j,n}^{(1)},b_{j,n}^{(1)}\}$};
\node[finite,below=of c1] (c2) {C2 eigenpairs\\$\{\lambda_{j,n}^{(2)},a_{j,n}^{(2)},b_{j,n}^{(2)}\}$};
\node[measure,right=34mm of c1,yshift=-10mm] (measures) {Keldysh measures\\$\mu_n^{(1)},\mu_n^{(2)}$};
\node[box,right=21mm of measures] (green) {common off-cut\\limit of $\langle c,G_\Omega d\rangle$};
\node[box,right=21mm of green] (tail) {inverse transform\\Price-tail\\observable};
\draw[arr] (c1) -- node[pos=0.48,above=1mm,sloped,font=\scriptsize,
  fill=white,inner sep=1pt]{Eq.~\eqref{eq:keldysh-scalar-weight}} (measures);
\draw[arr] (c2) -- node[pos=0.48,below=1mm,sloped,font=\scriptsize,
  fill=white,inner sep=1pt]{Eq.~\eqref{eq:keldysh-scalar-weight}} (measures);
\draw[arr] (measures) -- node[above=7mm,font=\footnotesize]{Cauchy transform} (green);
\draw[arr] (green) -- node[above=7mm,font=\footnotesize]{inverse Laplace\\transform} (tail);
\end{tikzpicture}
\caption{A falsifiable positive continuum programme. Map-dependent finite
nodes are first equipped with left--right Keldysh weights. The scientifically
meaningful convergence test is whether both weighted families approach the
same off-cut Green-function observable and, after inversion, the same Price
tail. This diagram states a research criterion. The present work does not
claim that the two limiting arrows have already been proved.}
\label{fig:keldysh_cut_limit}
\end{figure}
\FloatBarrier

This suggests a precise positive test. Choose discrete sources and
observables that converge in a stated physical norm. Then test whether the C1 and C2 Cauchy transforms converge locally uniformly away from the NIA to the same matrix element of the continued Schwarzschild Green function. A second, stronger test is convergence of the corresponding inverse Laplace transform to the known Price tail. Agreement of unweighted node locations is neither necessary nor sufficient; convergence of these weighted observables is the invariant target. Operationally, the continuum comparator can be refined beyond the undifferentiated full Green function. One may compare separately with the direct and tail pieces isolated by Su et al., with the low-frequency and redshift structures analysed by Rosato et al., or, after introducing and removing a positive cosmological-constant regulator, with the convergent mode sum of Arnaudo et al.~\cite{Su2026PRD,RosatoDeAmicisPani2026PRD,ArnaudoCarballoWithers2026PRD}. These are complementary benchmarks. Agreement with any one of them requires matching normalisations and observables, not merely similar node locations.
The finite pencils therefore define meromorphic matrix-valued functions and complex atomic measures. If the continuum problem is realised as an analytic Fredholm family of index zero that is invertible at one point, analytic Fredholm theory makes its
inverse meromorphic on the slit sheet and holomorphic on pole-free
subdomains. The open problem is convergence of Cauchy transforms there or, equivalently under suitable bounds, weak convergence of the weighted measures against analytic test functions. This formulation separates a physically meaningful approximation theorem from visual convergence of spectral point clouds.

\section{Conclusions and outlook}
\label{sec:conclusions_outlook}

This work has examined whether a highly regular sequence of negative-imaginary-axis frequencies produced by compactified spectral discretisations should be interpreted as a family of physical Schwarzschild quasinormal poles. In the axial gravitational sector with $(s,\ell)=(2,2)$, the evidence does not support that interpretation for the regular high-damping ladder within the frequency domain and physical lateral continuations tested here. The ladder is a genuine and remarkably reproducible feature of one family of finite-dimensional pencils, but it does not survive the representation and physical Jost-pole tests applied to the continuum problem.

The exact analysis identifies the mechanism behind this distinction. For
every fixed finite regularity order $k$, both local scattering sectors satisfy
the compactified one-sided $C^k$ endpoint requirements once the damping
crosses the explicit threshold, while the unwanted infinity sector is flat.
Finite endpoint regularity therefore cannot select the Jost sectors. Under
the nonlinear C1--C2 coordinate change, polynomial pullback doubles degree,
the exact common degree-$d$ core has dimension $\lfloor d/2\rfloor+1$, and
finite polynomial truncation does not commute with pullback. The exact
Chebyshev endpoint coefficients and $O(n^2)$ versus $O(n^4)$ grid-reach laws
then explain why a nonselected sector can nevertheless look spectrally clean
and why the two compactifications resolve different radial scales. These
results expose a precise representation obstruction, and do not assert
spectral equivalence or continuum-resolvent convergence.

The finite-pencil comparison first separated arithmetic accuracy from
representation dependence. The ordinary off-axis QNMs were recovered
consistently with the first compactification on Chebyshev roots and Lobatto
points and with the second compactification on Chebyshev roots. Within the
first compactification, the regular NIA ladder was stable under the
roots-to-Lobatto change: its frequencies, spacing, and Chebyshev coefficient
vectors agreed far more closely than successive resolutions did. The second
compactification, however, reorganised the NIA spectrum and replaced the
first-map edge behaviour by different finite-resolution scalings.
High-precision repeats showed that this change was not caused by an arithmetic
ceiling. Grid stability within one compactification is therefore meaningful
evidence about a finite-pencil structure, but it is not sufficient evidence
for a representation-independent mode. At the finite-dimensional level, the
exact backward-error and pseudospectrum formulas give the least coefficientwise
relative perturbation, in the chosen spectral or Frobenius norm, that makes a
prescribed frequency $z$ singular for the perturbed pencil. Although row
equilibration leaves the exact eigenvalues unchanged, it generally changes
this perturbation model and hence the pseudospectrum.

The branch-cut comparison supplies the corresponding physical mechanism.
Without a fitted scale or offset, all 68 C1 roots track successive zeros of
the leading cut-strength phase form as quantified in
Table~\ref{tab:cut-phase-summary}. A direct calculation resolves the stable
numerical branch-strength zero in Eq.~\eqref{eq:finite-frequency-q-anchor}
near the first ladder member. Neighbouring high-overtone QNMs share the
quarter-spacing scale but remain distinct in phase and real part. This
supports a cut-controlled phase mechanism, not a sequence-wide phase-locking
theorem or a pole classification.

The independent Jost--Wronskian calculation supplied the decisive test of the
physical-pole claim. Both lateral determinants were numerically stable and
separated from zero relative to the reported variations in the adopted
normalisation at the 68 computed frequencies. A 354-node mesh with spacing
$0.05$ resolved no additional on-cut candidate. Known-QNM and empty-contour
controls calibrated orientation, while 192- and 384-point samples gave zero
numerical winding on the stated off-cut rectangular boundaries. The physical
determinant was also sampled as nonzero at the algebraically
special frequency. Because the recurrence/Wronskian calculation is midpoint
based, the contour images were not interval enclosed, and pole-freeness on the
closed domains was not proved, these are resolution-refined numerical winding results
rather than zero-free theorems. Together they strongly disfavour identifying
the regular ladder with physical poles in the tested continuations; they do
not address the near-cut gap or sheets reached through the cut, the latter
being a distinct issue in Schwarzschild scattering near the algebraically
special frequency~\cite{MaassenVanDenBrink2000PRD, Leung2003CQG}.

The evidence is consistent with a structured finite-dimensional
representation of the cut, but the term \emph{cut approximant} remains
provisional until Keldysh-weighted resolvent or tail convergence is shown. The
word \emph{structured} is important. The ladder is too regular, too
reproducible under the first-map grid change, and too coherent in its observed
finite-resolution edge scalings to be dismissed as random numerical noise.
What fails is not the numerical solution of the matrix pencils, but the
inference from a well-resolved finite-pencil eigenpair to an isolated physical
pole of the analytically continued Green function. The calculation
consequently provides a concrete example of the hierarchy
$$
\text{finite-pencil accuracy}
\;\not\!\Longrightarrow\;
\text{representation-independent mode}
\;\not\!\Longrightarrow\;
\text{quasinormal pole}.
$$
Small backward errors, decaying spectral coefficients, stable eigenvectors, and endpoint consistency remain valuable diagnostics, but their physical interpretation must be tested against changes of representation and, on a branch cut, against the appropriate lateral Jost determinant.

The scope of this conclusion should not be enlarged beyond what was computed. It concerns axial Schwarzschild perturbations with $(s,\ell)=(2,2)$, the regular ladder in the damping interval reported above, the two physical lateral continuations on the cut, and the physical-half-plane rectangles whose boundaries were sampled in the argument-principle calculation. It does not exclude isolated NIA poles in other perturbation sectors, at other damping scales, or outside the tested complex domain. It also does not address unphysical sheets reached by continuation through the cut. The narrow region between the cut and the inner physical-half-plane strip boundaries was not enclosed by a single contour, although sampled determinant values on the cut and those inner boundaries remained separated from zero relative to the reported stability diagnostics in the adopted normalisation and showed no anomalous minimum. Moreover, Tricomi $U$ was evaluated with Arb complex balls but the subsequent recurrence, Wronskian, and extrapolation algebra used their high-precision midpoints. The present result is therefore a strongly overdetermined high-precision physical-pole assessment, not a fully interval-certified theorem.

Several directions follow naturally. The first is analytical because the approximately quarter-spaced first-map lattice and the compactification-dependent edge laws call for an asymptotic theory of how spectral discretisations represent the Schwarzschild branch cut. Such a theory should explain which properties are inherited from the continuum discontinuity and which are set by the radial map, collocation geometry, and finite-dimensional truncation. Concretely, one should equip both the C1 and C2 nodes with the weights in Eq.~\eqref{eq:finite-keldysh-measure}, compare their Cauchy transforms with the same off-cut Green-function matrix element, and quantify the error uniformly on compact pole-free subsets. This comparison would decide whether two representation-dependent point clouds encode one representation-independent continuum response.

A second direction is to apply the same evidence hierarchy to other sectors and geometries. Scalar, electromagnetic, and polar gravitational Schwarzschild perturbations provide controlled extensions, but their exceptional frequencies and Jost normalisations should be treated independently rather than inferred from the present axial calculation. More importantly, the overdamped and NIA ladders reported for quantum-corrected and higher-curvature black-hole models should be revisited with a formulation-independent pole test. The present result does not imply that every ladder in a deformed geometry is a cut approximation. It shows instead that regularity, spacing, and spectral convergence alone cannot decide the question.

A complementary formulation check is suggested by the recent complex-scaling treatment of Schwarzschild and Reissner--Nordstr\"om QNMs~\cite{OgawaHiroseMorikawa2026arXiv}. Complex scaling rotates the continuum and converts outgoing radiation conditions into a non-Hermitian eigenproblem. Resonance candidates are then tested for stability under the scaling angle, basis size, and basis parameters. Applying an exterior-complex-scaling implementation to the present high-damping window would therefore probe the same candidates in a functional setting independent of the compactified finite-endpoint regularity used here. Stable isolated eigenvalues would support pole character, whereas angle- or basis-dependent clouds would support a continuum interpretation. Because the current complex-scaling study is strongest for low-lying modes and reports increasing difficulty for broad, highly damped resonances, this is a promising cross-check and development path rather than an existing exclusion test for the 68-point ladder.

A third direction is numerical certification. An end-to-end complex-ball implementation of the recurrences, Wronskian algebra, lateral extrapolation, and contour winding could turn the present high-precision evidence into a computer-assisted zero-exclusion result on specified physical-continuation domains. Adaptive contour subdivision and stable backward-recursion formulations would also extend the calculation to larger damping and more complicated backgrounds. A separate cross-cut construction would be required to investigate unphysical sheets. These developments would strengthen the general methodology, although no further heavy computation is presently indicated for the regular axial Schwarzschild ladder studied here.

The broader conclusion is methodological: convergence of a compactified-pencil
eigenvalue does not establish a quasinormal mode. On a branch cut, defensible
pole classification requires representation tests, the appropriate lateral
Jost determinant, and argument-principle counts on the chosen continuation.

\subsection*{Code availability}

\noindent
The companion repository is being prepared for public release at
\begin{center}
\url{https://github.com/dutykh/schwarzschild-nia-spectra/}
\end{center}
It will contain the Maple worksheets used to assemble the raw and row-equilibrated
Chebyshev matrix pencils, the available machine-readable finite-pencil spectra
and diagnostics, the Julia programs for the independent lateral
Jost--Wronskian calculation, and the source data and scripts for the publication
figures. Repository-relative documentation will record the numerical domains,
normalisations, dependencies, input--output conventions, and principal
reproduction commands. The production Julia eigensolver and postprocessor
that generated the archived finite-pencil tables are absent from the archived
project snapshot. Their absence is a documented provenance limitation, and
the release will not claim end-to-end regeneration of those tables.

\bibliography{DB-DD-MS-QNMsSchwarzschild_Final}

\end{document}